\newtheorem{thm}{Theorem}
\newtheorem{mydef}{Definition}
\newtheorem{lemma}{Lemma}
\newtheorem{slemma}{Supplementary Lemma}
\newcommand {\bs} { \boldsymbol}
\title[cjBitSeq]{A Bayesian model selection approach for identifying differentially expressed transcripts from RNA-Seq data}
\author[P.~Papastamoulis and M.~Rattray]{Panagiotis Papastamoulis\\
\texttt{panagiotis.papastamoulis@manchester.ac.uk}}
\address{University of Manchester, Faculty of Life Science,
Manchester,
UK}
\author[P.~Papastamoulis and M.~Rattray]{Magnus Rattray\\
\texttt{magnus.rattray@manchester.ac.uk}}
\address{University of Manchester, Faculty of Life Science,
Manchester,
UK}
\begin{document}

\begin{abstract}
Recent advances in molecular biology allow the quantification of the transcriptome and scoring transcripts as differentially or equally expressed between two biological conditions. Although these two tasks are closely linked, the available inference methods treat them separately: a  primary model is used to estimate expression and its output is post-processed using a differential expression model. In this paper, both issues are simultaneously addressed  by proposing the joint estimation of expression levels and differential expression: the unknown relative abundance of each transcript can either be equal or not between two conditions. A hierarchical Bayesian model builds upon the BitSeq framework and the posterior distribution of transcript expression and differential expression is inferred using Markov Chain Monte Carlo (MCMC). It is shown that the proposed model enjoys conjugacy for fixed dimension variables, thus the full conditional distributions are analytically derived. Two samplers are constructed, a reversible jump MCMC sampler and a collapsed Gibbs sampler, and the latter is found to perform best. A cluster representation of the aligned reads to the transcriptome is introduced, allowing parallel estimation of the marginal posterior distribution of subsets of transcripts under reasonable computing time. The proposed algorithm is benchmarked against alternative methods using synthetic datasets and applied to real RNA-sequencing data. Source code is available online \footnote{{\tt https://github.com/mqbssppe/cjBitSeq}}.
\end{abstract}
\keywords{RNA-sequencing, mixture models, collapsed Gibbs, reversible jump MCMC}

\section{Introduction}

Quantifying the transcriptome of a given organism or cell is a fundamental task in molecular biology. RNA-sequencing (RNA-Seq)  technology  produces transcriptomic data in the form of short reads \citep{mort}. These reads can be used either in order to reconstruct the transcriptome using de novo or guided assembly, or to estimate the abundance of known transcripts given a reference annotation. Here, we consider the latter scenario in which transcripts are defined by annotation. In such a case, millions of short reads are aligned to the reference transcriptome (or genome) using mapping tools such as Bowtie \citep{bowtie} (or Tophat \citep{tophat}). Of particular interest is the identification of differentially expressed transcripts (or isoforms) across different samples. Throughout this paper the term transcript refers to  isoforms, so differential transcript detection has the same meaning as differential isoform detection. Most genes in higher eukaryotes can be spliced into alternative transcripts that share specific parts of their nucleotide sequence. Thus, a short read is not uniquely aligned to the transcriptome and its origin remains uncertain, making transcript expression estimation non-trivial. Probabilistic models provide a powerful means to estimate transcript abundances as they are able to take this ambiguous read assignment into consideration in a principled manner.

There are numerous methods that estimate transcript expression from RNA-Seq data, including RSEM \citep{rsem}, IsoEM \citep{isoEM}, Cufflinks \citep{cuffdif, cuffdif2}, BitSeq (Stage 1)\citep{bitseq}, TIGAR \citep{newvb} and Casper \citep{casper}. Some of these methods also include a second stage for performing DE analysis at the transcript level (e.g. Cuffdiff and BitSeq Stage 2) and stand-alone methods for transcript-level DE calling have also been developed such as EBSeq \citep{EBSeq} and MetaDiff \citep{Jia2015}. Cuffdiff uses an asymptotically normal test statistic by applying the delta method to the log-ratio of transcript abundances between two samples, given the estimated expression levels using Cufflinks. EBSeq estimates the Bayes factor of a model under DE or nonDE for each transcript, building a Negative Binomial model upon the estimated read counts from any method. BitSeq Stage 2 ranks transcripts as differentially expressed by the probability of positive log-ratio (PPLR) based on the MCMC output from BitSeq Stage 1, which estimates the expression levels assuming a mixture model. Gene-level DE analysis is also available using count-based methods such as edgeR \citep{edger} and DESeq \citep{Anders:2010} but here we limit our attention to methods designed for transcript-level DE calling. 

All existing methods for transcript-level DE calling apply a two-step procedure. The mapped RNA-Seq data is used as input of a first stage analysis to estimate transcript expression. The output of this stage is then post-processed at a second stage in order to classify transcripts as DE or non-DE. The bridge between the two stages is based upon certain parametric assumptions for the distribution of the estimates of the first stage and/or the use of asymptotic results (as previously described above). Also, transcript-level expression estimates are correlated through sharing of reads and this correlation is typically ignored in the second stage. Such two-stage approaches are quite useful in practice since the differential expression question is not always the main aim of the analysis; therefore estimating expression is useful in itself. However, when the main purpose of an experiment is DE calling then the two-stage procedure increases the modelling complexity and may result in overfitting, since there is no guarantee that the underlying assumptions are valid. Note that a recent method \citep{badge} addresses the joint estimation of expression and differential expression modelling of exon counts under a Bayesian approach but at the gene level rather than the transcript level considered here.

The contribution of this paper is to develop a method for the joint estimation of expression and differential expression at the transcript level. The method builds upon the Bayesian framework of the BitSeq (Stage 1) model where transcript expression estimation reduces to estimating the posterior distribution of the weights of a mixture model using MCMC~\cite{bitseq}. The novelty in the present study is that differential expression is addressed by inferring which weights differ between two mixture models. This is achieved by using two samplers. A reversible jump MCMC (rjMCMC) algorithm \citep{Green:95} updates both transcript expression and differential expression parameters, while a collapsed Gibbs algorithm is developed which avoids transdimensional transitions. The high-dimensional setting of RNA-seq data studies makes the convergence to the joint posterior distribution computationally challenging. To alleviate this computational burden and allow easier parallelization, a new cluster representation of the transcriptome is introduced which collapses the problem to  subsets of transcripts sharing aligned reads.

The rest of the paper is organized as follows. The mixture model used in the original BitSeq setup is reviewed in Section \ref{sec:bitseq}. The prior assumptions of the new cjBitSeq (clusterwise joint BitSeq) model is introduced in Section \ref{sec:model}. The full conditional distributions are given in Section \ref{sec:gibbs} and two MCMC samplers are described in Section \ref{sec:rjMCMC}. A cluster representation of aligned reads and transcripts is discussed in Section \ref{sec:clusters} and details over False Discovery Rate (FDR) estimation are given in Section \ref{sec:fdr}. Large scale simulation studies are presented in Section \ref{sec:sim} and the proposed method is illustrated to a real human dataset in Section \ref{sec:real2}. The paper concludes in Section \ref{sec:dis} with a synopsis and discussion. 


\section{Methods}\label{sec:methods}

In the BitSeq model, the mixture components correspond to annotated transcript sequences and the mixture weights correspond to their relative expression levels. The data likelihood is then computed by considering the alignment of reads (or read-pairs) against each mixture component. Essentially, this model is modified here in order to construct a well-defined probability of DE or non-DE when two samples are available. 

We induce a set of free parameters of varying dimension, depending on the number of different weights between two mixture models. Assuming two 
independent Dirichlet prior distributions, the Gibbs sampler \citep{geman, gelfand} draws samples from the full conditionals, which are independent Dirichlet and Generalized Dirichlet \citep{connor, Wong:98, Wong:10} distributions. This representation allows the integration of the corresponding parameters as stated at Theorem \ref{thm:collapsed}. Therefore, we provide two MCMC samplers depending on whether transcript expression levels are integrated out or not. These samplers converge to the same target distribution but using different steps in order to update the state of each transcript: the first one uses a birth-death move type \citep{Richardson:97,papRJ} and the second one is a block update from the full conditional distribution. After detecting clusters of 
transcripts and reads, it is shown that the parallel application of the algorithm to each cluster converges to proper marginals of the full posterior distribution.

\subsection{BitSeq}\label{sec:bitseq}

Let $\bs x = (x_1,\ldots,x_r)$, $x_i\in\mathcal X$, $i = 1,\ldots,r$, denote a sample of $r$ short reads aligned to a given set of $K$ transcripts. The sample space $\mathcal X$ consists of all sequences of letters A, C, G, T. Assuming that reads are independent, the joint probability density function of the data is written as 
\begin{equation}\label{standardBitseq}
\bs x|\bs\theta\sim \prod_{i=1}^{r}\sum_{k=1}^{K}\theta_kf_k(x_i).
\end{equation}
The number of components ($K$) is equal to the number of transcripts and it is considered as known since the transcriptome is given. The parameter vector $\bs\theta = (\theta_1,\ldots,\theta_K)\in\mathcal P_{K-1}$ denotes relative  abundances, where 
$$\mathcal P_{K-1}:=\{p_k\geqslant 0, k=1,\ldots,K-1:\sum_{k=1}^{K-1}p_k\leqslant 1;p_K:=1-\sum_{k=1}^{K-1}p_k\}.$$
The component specific density $f_k(\cdot)$ corresponds to the probability of a read aligning at some position of transcript $k$, $k=1,\ldots,K$. Since we assume a known transcriptome, $\{f_k\}_{k=1}^{K}$ are known as well and they are computed according to the methodology described in \cite{bitseq} (see also Appendix A in supplementary material), taking into account position and sequence-specific bias correction methods. 

A priori it is assumed that $\bs\theta\sim\mathcal D_{K-1}(\alpha_1,\ldots,\alpha_K)$, with $\mathcal D_{j}$ denoting the Dirichlet distribution defined over $\mathcal P_j$. Furthermore, it is assumed that $\alpha_1=\ldots=\alpha_K=1$, which is equivalent to the uniform distribution in $\mathcal P_{K-1}$. In the original implementation of BitSeq \citep{bitseq},  MCMC samples are drawn from the posterior distribution of $\bs\theta|\bs x$ using the Gibbs sampler while more recently variational Bayes approximations have also been included for faster inference  \citep{papVB,bitseqVB}. 

Given the output of BitSeq stage 1 for two different samples, BitSeq stage 2 implements a one-sided test (PPLR) for DE analysis. However, this approach does not define transcripts as DE or non-DE and is therefore not directly comparable to standard 2-sided tests available in most other packages \citep{cuffdif2,EBSeq}. Also, correlations between transcripts in the posterior distribution for each sample are discarded during the DE stage, leading to potential loss of accuracy when making inferences. In order to deal with these limitations, a new method for performing DE analysis is presented next.

\subsection{cjBitSeq}\label{sec:model}

Assume that we have at hand two samples $\bs x := (x_1,\ldots,x_{r})$ and $\bs y:=(y_1,\ldots,y_{s})$ denoting the number of (mapped) reads for sample $\bs x$ and $\bs y$, respectively. Now, let $\theta_k$ and $w_k$ denote the unknown relative abundance of transcript $k=1,\ldots,K$ in sample $\bs x$ and $\bs y$, respectively. Define the parameter vector of relative abundances as
$\bs\theta = (\theta_1,\ldots,\theta_{K-1};\theta_K)\in\mathcal P_{K-1}$
and $\bs w = (w_1,\ldots,w_{K-1};w_K)\in\mathcal P_{K-1}$. Under the standard BitSeq model the prior on the parameters $\bs \theta$ and $\bs w$ would be a product of independent Dirichlet distributions. In this case the probability $\theta_k= w_k$ under the prior is zero and it is not straightforward to define non-DE transcripts. To model differential expression we would instead like to identify instances where transcript expression has not changed between samples. Therefore, we introduce a non-zero probability for the event $\theta_k=w_k$. This leads us to define a new model with a non-independent prior for the parameters $\bs\theta$ and $\bs w$.

\begin{mydef}[State vector]\label{def:state} Let $c:=(c_1,\ldots,c_K)\in\mathcal C$, where $\mathcal C$ is the set defined by:
\begin{enumerate}
\item $c_k\in\{0,1\}$, $k=1,\ldots,K$
\item $c_+:=\sum_{k=1}^{K}c_k\neq 1$.
\end{enumerate}
Then, for $k=1,\ldots,K$ let:
$
\begin{cases}
\theta_k = w_k, & \text{if }c_k=0\\
\theta_k \neq w_k, & \text{if }c_k=1.
\end{cases}
$

\noindent
We will refer to vector $c$ as the state vector of the model.
\end{mydef}
For example, assume that $K = 6$ and $c=(1,0,0,1,0,1)$. According to Definition \ref{def:state}, $\theta_k=w_k$ for $k=2,3,5$ and $\theta_k\neq w_k$ for $k  = 1,4,6$. From Definition \ref{def:state} it is obvious that the sum of the elements in $c$ cannot be equal to 1 because  either all $\theta$'s have to be equal to $w$'s, or at least two of them have to be different. The introduction of such dependencies between the elements of $\bs\theta$ and $\bs w$ has non-trivial effects on the prior assumptions of course. It is clear that with this approach we should define a valid conditional prior distribution for $\bs\theta,\bs w|c$. 

At first we impose a prior assumption on $c$. We will consider the Jeffreys' \citep{jeffreys} prior distribution for a Bernoulli trial, that is $P(c_k=1|\pi) = \pi$ with $\pi$ following a Beta distribution. 
Since $c_+\neq 1$, the prior distribution of the state vector $c$ is expressed as 
\begin{eqnarray}\label{eq:beta_prior}
\pi &\sim& \mbox{Beta}(1/2,1/2)\\
P(c|\pi)&=& P(c|c_+\neq 1,\pi)= \frac{\pi^{c_+}(1-\pi)^{K-c_+}}{1-K\pi(1-\pi)^{K-1}},\quad c\in\mathcal C. \label{eq:jeffreys_prior}
\end{eqnarray}

Next we proceed to the definition of a proper prior structure for the weights of the mixture. At this step extra care should be taken for everything to make sense as a probabilistic space. It is obvious that $(\bs\theta,\bs w)$ should be defined conditional to the state vector $c$. What it is less obvious, is that $(\bs\theta,\bs w)$ should be defined conditional on a parameter of varying dimension. At this point, we introduce some extra notation. 
\begin{mydef}[Dead and alive subsets and permutation of the labels]\label{def:dead_alive} For a given state vector $c$, define the  order-specific subsets
$$C_0(c):=\{\tau_1<\ldots<\tau_{K-c_+}\in\{1,\ldots,K\}: c_{\tau_k}=0\quad \forall k = 1,\ldots,K-c_+\}$$ and
$$C_1(c):=\{\tau_{K-c_+ + 1}<\ldots<\tau_{K}\in\{1,\ldots,K\}: c_{\tau_k}=1\quad \forall k = K-c_+ + 1,\ldots,K\}.$$ 
These sets will be called dead and alive subsets of the transcriptome index, respectively. Moreover, $\tau = (\tau_1,\ldots,\tau_K)$ denotes the unique permutation of $\{1,\ldots,K\}$ obeying the ordering within the dead and alive subsets.\end{mydef} 
As it will be made clear later, it is convenient to define a unique labelling within the dead and alive subsets so we also explicitly defined the corresponding permutation ($\tau$) of the labels. In order to clarify Definition \ref{def:dead_alive}, assume that $c=(1,0,0,1,0,1)$. Then Definition \ref{def:dead_alive} implies that $C_0(c) = \{2,3,5\}$, $C_1(c)=\{1,4,6\}$ and $\tau = (2,3,5,1,4,6)$. The order-specific definition of these subsets excludes  $\{3,2,5\}$ (for example) from the definition of a dead subset. 

It is clear that if $C_0(c)=\emptyset$, then both $ \bs\theta$ and $\bs w$ have $K-1$ free parameters each. However, if $C_0(c)\neq\emptyset$, the free parameters are lying in a lower dimensional space. This means that  $(\bs\theta,\bs w)$ should be defined given $c$ by taking into account the set of free parameters that are actually allowed by the state vector. In particular, $(\bs\theta,\bs w)$ are pseudo-parameters. The actual parameters of our problem are defined in Lemma \ref{lem:free}.  

In what follows, the notation $\tau \bs \sigma$ should be interpreted as the reordering of vector $\bs \sigma = (\sigma_1,\ldots,\sigma_{K})$ under permutation $\tau$. E.g: assume that $\tau = (3,1,2)$ and $\bs \sigma = (\sigma_1,\sigma_2,\sigma_3)$, then: $\tau\bs\sigma = (\sigma_3,\sigma_1,\sigma_2)$. Let also $\tau^{-1}$ denote the inverse permutation of $\tau$. 

\begin{lemma}[Existence and uniqueness of free parameters]\label{lem:free} For every $(c,\tau,\bs\theta,\bs w)$ respecting Definitions \ref{def:state} and \ref{def:dead_alive} there exists a unique set of free parameters:
\begin{equation}
(\bs u,\bs v)\in\mathcal P_{K-1}\times\mathcal P_{c_+ -1},
\end{equation}
such that:
\begin{eqnarray}\label{eq:theta}
\bs\theta &=& \tau^{-1}\bs u\\
\label{eq:w}
\bs w &=& \tau^{-1}\bs\varpi,\end{eqnarray}
where $\bs\varpi = \left(\{u_{\tau^{-1}_k}:k\in C_0(c)\},\bs v\sum_{k\in C_1(c)} u_{\tau^{-1}_k}\right)$
under the conventions $\mathcal P_{-1}:=\emptyset$ and $ \emptyset\sum_{k\in\emptyset}u_k:=\emptyset$.
\end{lemma}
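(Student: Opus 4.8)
The plan is to prove both claims at once by an explicit construction: I would read off the candidate pair $(\bs u,\bs v)$ from $(c,\tau,\bs\theta,\bs w)$, check it lands in $\mathcal P_{K-1}\times\mathcal P_{c_+-1}$ and satisfies relations~(\ref{eq:theta})--(\ref{eq:w}), and then argue that those relations force \emph{any} solution to coincide with it. First I would set $\bs u:=\tau\bs\theta$, i.e.\ $u_j:=\theta_{\tau_j}$ for $j=1,\dots,K$; since $\tau$ is a permutation and $\bs\theta\in\mathcal P_{K-1}$, the entries of $\bs u$ are nonnegative and sum to one, so $\bs u\in\mathcal P_{K-1}$, and $\tau^{-1}\bs u=\bs\theta$ is immediate, giving~(\ref{eq:theta}). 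If $c_+=0$ then $C_1(c)=\emptyset$, $\tau=\mathrm{id}$, and the conventions $\mathcal P_{-1}=\emptyset$, $\emptyset\sum_{k\in\emptyset}u_k=\emptyset$ make $\bs v$ vacuous and $\bs\varpi=\bs u$, so~(\ref{eq:w}) reduces to $\bs w=\bs\theta$, exactly Definition~\ref{def:state}; I may therefore assume $c_+\geqslant 2$ henceforth.

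For the alive block I would put $S:=\sum_{k\in C_1(c)}u_{\tau^{-1}_k}=\sum_{k\in C_1(c)}\theta_k$ and, for $i=1,\dots,c_+$, define $v_i:=w_{\tau_{K-c_++i}}/S$. The one point needing an argument is $S>0$: since $\theta_k=w_k$ for every $k\in C_0(c)$ and both $\bs\theta$ and $\bs w$ sum to one, subtracting the two normalisation identities gives $\sum_{k\in C_1(c)}\theta_k=\sum_{k\in C_1(c)}w_k$; were this common value zero, every coordinate of $\bs\theta$ and of $\bs w$ indexed in $C_1(c)$ would vanish, contradicting $\theta_k\neq w_k$ for $k\in C_1(c)$ (and $C_1(c)\neq\emptyset$ because $c_+\geqslant 2$). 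Hence $S>0$, the $v_i$ are nonnegative, and the same identity $\sum_{k\in C_1(c)}w_k=S$ yields $\sum_{i=1}^{c_+}v_i=1$, so $\bs v\in\mathcal P_{c_+-1}$. It then remains to verify~(\ref{eq:w}), which is equivalent to $w_{\tau_j}=\varpi_j$ for all $j$: for $j\leqslant K-c_+$ this reads $w_{\tau_j}=u_j=\theta_{\tau_j}$, which holds since $\tau_j\in C_0(c)$, and for $j>K-c_+$ it reads $w_{\tau_j}=v_{j-(K-c_+)}S$, which holds by the definition of the $v_i$. This part is routine bookkeeping with the permutation $\tau$ and the simplex conventions.

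For uniqueness, suppose $(\bs u',\bs v')$ also satisfies~(\ref{eq:theta})--(\ref{eq:w}). Applying $\tau$ to~(\ref{eq:theta}) forces $\bs u'=\tau\bs\theta=\bs u$; then $S':=\sum_{k\in C_1(c)}u'_{\tau^{-1}_k}$ equals $S$, and applying $\tau$ to~(\ref{eq:w}) shows the last $c_+$ entries of $\tau\bs w$ are simultaneously $\bs v'S$ and $(w_{\tau_{K-c_++1}},\dots,w_{\tau_K})$, whence $\bs v'=\bs v$ since $S>0$. I expect the only genuinely non-mechanical step to be establishing $S>0$, and this is exactly where the strict clause $\theta_k\neq w_k$ in Definition~\ref{def:state} and the exclusion of $c_+=1$ are used: without them $S$ could vanish, the normalisation defining $\bs v$ would break down, and $\bs v$ would no longer be determined by $(c,\tau,\bs\theta,\bs w)$, destroying uniqueness.
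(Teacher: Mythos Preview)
Your proposal is correct and follows the same idea as the paper, which simply asserts that the map $(c,\tau,\bs u,\bs v)\mapsto(\bs\theta,\bs w)$ is a bijection and calls it trivial. You have written out the explicit inverse and verified it, which is exactly how one substantiates such a claim; in particular, your identification of $S>0$ as the only non-mechanical step (and its reliance on the strict inequality $\theta_k\neq w_k$ for $k\in C_1(c)$) is a detail the paper leaves implicit.
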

\begin{proof}
It is trivial to show that  $(c,\tau,\bs u,\bs v)\rightarrow (\bs\theta,\bs w)$ is an ``one to one'' and ``onto''  mapping (bijective function).
\end{proof}
\noindent
\textbf{Example:} Assume that $c = (1,0,0,1,0,1)$, where $C_0(c)=\{2,3,5\}$ and $C_1(c)=\{1,4,6\}$. Then, $\tau = (2,3,5,1,4,6)$ and $\tau^{-1}=(4, 1, 2, 5, 3, 6)$. According to state $c$ we should have that $\theta_2 = w_2$, $\theta_3 = w_3$ and $\theta_5 = w_5$, while $\theta_k\neq w_k$ for $k\in C_1(c)$. Lemma \ref{lem:free} states that $\bs\theta$ and $\bs w$ can be expressed as a transformation of two independent parameters: $\bs u = (u_1,u_2,u_3,u_4,u_5,u_6)\in\mathcal P_5$ and $\bs v = (v_1,v_2,v_3)\in\mathcal P_2,$. According to Equation \eqref{eq:theta}, $\bs\theta$ is a permutation of the vector $\bs u$:
\begin{equation*}
\bs\theta|(c,\bs u) = (u_4,u_1,u_2,u_5,u_3,u_6).
\end{equation*}
Next, $\bs w$ is obtained by a permutation of $\bs\varpi$, which is a linear transformation of $\bs u$ and $\bs v$, that is,
$\bs\varpi = (u_1,u_2,u_3,v_1(u_4+u_5+u6),v_2(u_4+u_5+u6),v_3(u_4+u_5+u6))$. According to Equation \eqref{eq:w}:
\begin{equation*}
\bs w|(c,\bs u,\bs v) = \left( v_1(u_4 + u_5 + u_6), u_1,u_2,v_2(u_4 + u_5 + u_6),u_3,v_3(u_4 + u_5 + u_6)  \right).
\end{equation*}
Comparing the last two expressions for $\bs\theta$ and $\bs w$, it is obvious that  $\theta_2 = w_2$, $\theta_3 = w_3$ and $\theta_5 = w_5$, while $\theta_k\neq w_k$ for all remaining entries,  which is the configuration implied by the state vector $c$.
Note also that   $\{u_{\tau^{-1}_k};k\in C_0(c)\} = (u_1,\ldots,u_{K-c_+})$ and $\{u_{\tau^{-1}_k};k\in C_1(c)\} = (u_{K-c_+ + 1},\ldots,u_K)$ and $\sum_{k\in C_1(c)}w_k = \sum_{k \in C_1(c)}\theta_k=\sum_{k \in C_1(c)}u_{\tau^{-1}_k}$.

Now, it should be clear that given a state vector $c$, as well as the independent free parameters $\bs u$ and $\bs v$, the pseudo-parameters $\bs\theta$ and $\bs w$ are deterministically defined. In other words, the  conditional distributions of $\bs\theta$ and $\bs w$ are Dirac, gathering all their probability mass into the single points defined by Equations \eqref{eq:theta} and \eqref{eq:w}. Hence, the conditional prior distribution for transcript expression is written as:
\begin{equation}\label{eq:dirac}
f(\bs\theta,\bs w|c,\tau,\bs u,\bs v) = 1_{\bs\theta,\bs w}(\{\bs\theta(c,\tau,\bs u),\bs w(c,\tau,\bs u,\bs v)\}),
\end{equation}
with $\bs\theta(c,\tau,\bs u)$ and $\bs w(c,\tau,\bs u,\bs v)$ as in Equations \eqref{eq:theta} and \eqref{eq:w}, respectively.

Moreover, we stress that if the permutation $\tau$ was not uniquely defined according to Definition \ref{def:dead_alive}, then we would have had to take into account all the possible permutations within the dead and alive subsets. However, such an approach would lead to an increased modelling complexity without making any difference at the inference. That said, the conditional prior distribution of $\tau$ given $c$ is Dirac:
\begin{equation}\label{eq:tau}
f(\tau|c) = 1_{\tau}(\tau(c)),
\end{equation}
where $\tau(c)$ denotes the unique permutation (given $c$) in Definition \ref{def:dead_alive}.
 
At this point we state our prior assumptions for the free parameters, given a state vector $c$. We assume that a priori $\bs u$ and $\bs v$ are independent random variables distributed according to Dirichlet distribution, that is:
\begin{eqnarray}\label{eq:uprior}
\bs u|c &\sim &\mathcal D_{K-1}(\alpha_1,\ldots,\alpha_K)\\\label{eq:vprior}
\bs v|c &\sim &\mathcal D_{c_+-1}(\gamma_1,\ldots,\gamma_{c_+}).
\end{eqnarray}
In the applications, we will furthermore assume that  $\alpha_k = 1$ for all $k = 1,\ldots,K$ and $\gamma_\ell = 1$ for all $\ell = 1,\ldots,c_+$, in order to assign a uniform prior distributions over $\mathcal P_{K-1}\times\mathcal P_{c_+-1}$. Now, the following Theorem holds.

\begin{figure}[t]
  \centering
    \includegraphics[width = 0.9\textwidth]{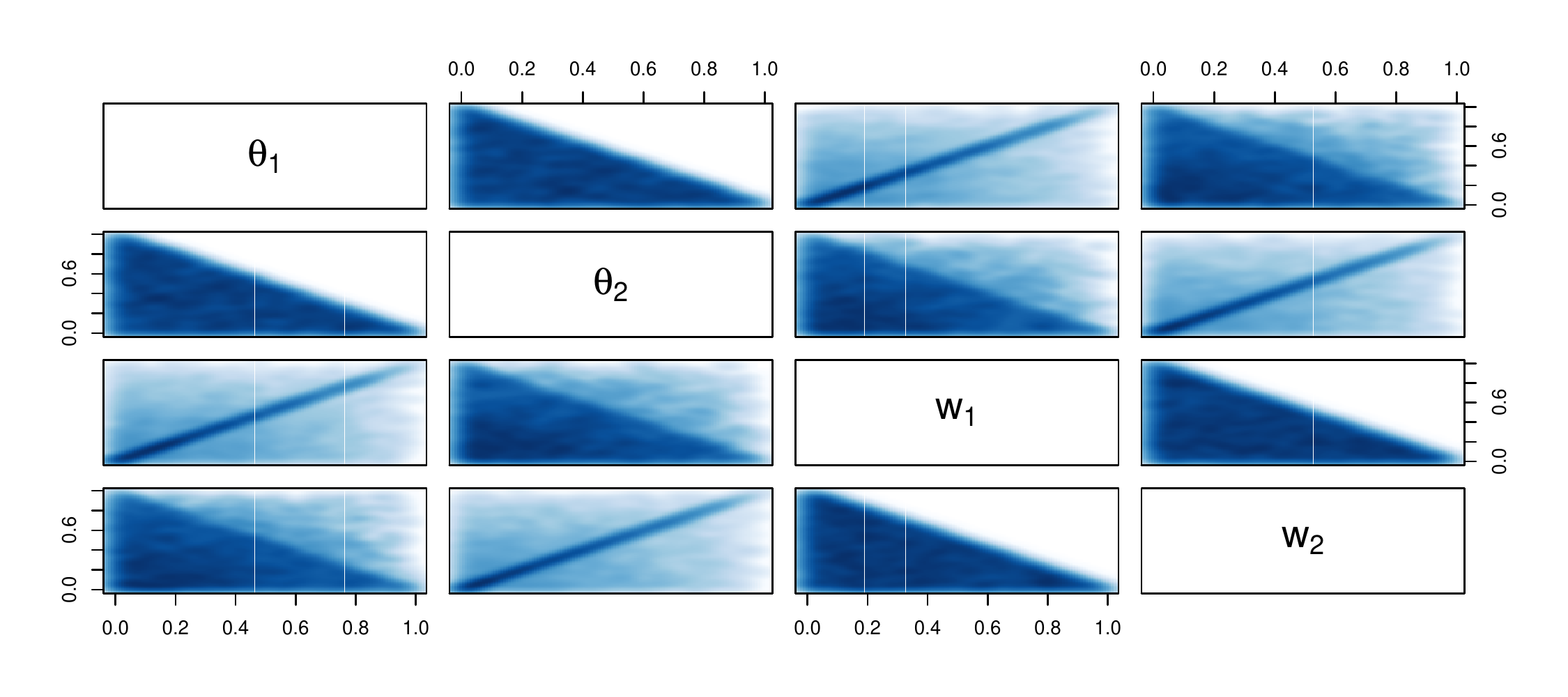}
      \caption{Simulation from the prior distribution \eqref{eq:dirac} of $(\bs\theta,\bs w)$ for $K = 3$, $\alpha_k =\gamma_k = 1$ for $k = 1,2,3$, and also assuming the Jeffreys' prior for $c$. Theorem \ref{lem:prior} states that marginally: $\bs\theta\sim \mathcal D(1,1,1)$ and $\bs w\sim \mathcal D(1,1,1)$.}\label{fig:prior}
\end{figure}

\begin{thm}\label{lem:prior}
Assume that \eqref{eq:uprior} and \eqref{eq:vprior} hold true and furthermore: $\alpha_k = \gamma_k = \alpha$ for all $k = 1,\ldots,K$. Then, $\bs\theta$ and $\bs w$ are marginally identical random variables following the $\mathcal D_{K-1}(\alpha,\ldots,\alpha)$ distribution.
\end{thm}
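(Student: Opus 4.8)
The plan is to condition on the state vector $c$, show that both $\bs\theta\mid c$ and $\bs w\mid c$ already follow the target $\mathcal D_{K-1}(\alpha,\ldots,\alpha)$ for every $c\in\mathcal C$, and then integrate out $c$ (and $\pi$) against \eqref{eq:beta_prior}--\eqref{eq:jeffreys_prior}. Since the conditional law will turn out not to depend on $c$, this last step is automatic, so the state prior only needs to be a valid probability distribution on $\mathcal C$; the substance is in the two conditional statements. The claim for $\bs\theta$ is immediate: by \eqref{eq:theta}, $\bs\theta=\tau^{-1}\bs u$ is, given $c$, a \emph{deterministic} permutation of $\bs u$, and by \eqref{eq:uprior} with $\alpha_k=\alpha$ the law $\bs u\mid c\sim\mathcal D_{K-1}(\alpha,\ldots,\alpha)$ is exchangeable; hence any fixed permutation has the same law, i.e.\ $\bs\theta\mid c\sim\mathcal D_{K-1}(\alpha,\ldots,\alpha)$.

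For $\bs w$, by \eqref{eq:w} we again have $\bs w=\tau^{-1}\bs\varpi$ with $\tau$ fixed given $c$, and the target Dirichlet is permutation-invariant, so it suffices to show $\bs\varpi\mid c\sim\mathcal D_{K-1}(\alpha,\ldots,\alpha)$. Relabelling the coordinates via $\tau$ so that the $K_0:=K-c_+$ dead indices come first, we have $\bs\varpi=(u_1,\ldots,u_{K_0},v_1 S,\ldots,v_{c_+}S)$ with $S:=u_{K_0+1}+\cdots+u_K=\sum_{k\in C_1(c)}u_{\tau^{-1}_k}$ and, by \eqref{eq:vprior}, $\bs v\sim\mathcal D_{c_+-1}(\alpha,\ldots,\alpha)$ independent of $\bs u$. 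The degenerate cases are trivial ($\bs\varpi=\bs u$ if $c_+=0$, and $S=1$, $\bs\varpi=\bs v$ if $c_+=K$), so assume $2\le c_+\le K-1$.

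The key step is then a Dirichlet ``splitting'' identity (the converse of the aggregation/amalgamation property for the Dirichlet \citep{connor,Wong:98,Wong:10}), which I would prove inline via the gamma representation. Write $u_k=G_k/\sum_{j=1}^{K}G_j$ with $G_j$ i.i.d.\ $\mbox{Gamma}(\alpha,1)$, and $v_\ell=\widetilde{G}_\ell/\sum_{m=1}^{c_+}\widetilde{G}_m$ with $\widetilde{G}_m$ i.i.d.\ $\mbox{Gamma}(\alpha,1)$, the two families independent. Set $D:=\sum_{k\le K_0}G_k$ and $T:=\sum_{k>K_0}G_k\sim\mbox{Gamma}(c_+\alpha,1)$, so $S=T/(D+T)$, and define $\widehat{G}_\ell:=v_\ell T$. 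Since $\bs v$ depends only on the $\widetilde{G}_m$'s it is independent of $T$, and the standard fact that a $\mathcal D_{c_+-1}(\alpha,\ldots,\alpha)$ vector multiplied by an independent $\mbox{Gamma}(c_+\alpha,1)$ variable has i.i.d.\ $\mbox{Gamma}(\alpha,1)$ coordinates gives that $\widehat{G}_1,\ldots,\widehat{G}_{c_+}$ are i.i.d.\ $\mbox{Gamma}(\alpha,1)$; they are moreover independent of $G_1,\ldots,G_{K_0}$, because the dead and alive gammas are independent and $\bs v$ is a separate source. Hence $G_1,\ldots,G_{K_0},\widehat{G}_1,\ldots,\widehat{G}_{c_+}$ are $K$ i.i.d.\ $\mbox{Gamma}(\alpha,1)$ variables, and since $\sum_\ell\widehat{G}_\ell=T$ their total is $D+T=\sum_j G_j$; therefore $\bs\varpi=(G_1,\ldots,G_{K_0},\widehat{G}_1,\ldots,\widehat{G}_{c_+})/(D+T)\sim\mathcal D_{K-1}(\alpha,\ldots,\alpha)$, as required. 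Integrating out $c$ and $\pi$ then yields the stated marginals, in agreement with Figure \ref{fig:prior}.

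The only genuinely non-routine ingredient is this splitting identity for $\bs w$; everything else is bookkeeping. An alternative proof by an explicit change of variables $(\bs u,\bs v)\mapsto(\bs\theta,\bs w)$ and a Jacobian computation is possible, but the order-specific labelling of the dead and alive subsets in Definition \ref{def:dead_alive} makes that route more cumbersome than the gamma argument, so I would favour the latter.
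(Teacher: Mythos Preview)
Your proof is correct and follows the same overall scaffold as the paper---condition on $c$, handle $\bs\theta$ by permutation-invariance of the symmetric Dirichlet, and for $\bs w$ reduce to showing that the rearranged vector $\bs\varpi=(u_1,\ldots,u_{K_0},v_1S,\ldots,v_{c_+}S)$ is $\mathcal D_{K-1}(\alpha,\ldots,\alpha)$---but the argument you use for this last step is genuinely different from the paper's.

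The paper proves the $\bs\varpi$ claim via the stick-breaking/neutrality characterization of \citet{connor}: it writes down the normalized remainders $\zeta_k$ of $\bs\varpi$, observes that the first $K_0$ of them are exactly the stick-breaking variables of $\bs u$ and the remaining $c_+-1$ are those of $\bs v$, and hence (by independence of $\bs u$ and $\bs v$) the $\zeta_k$ are independent Betas. This identifies the law of $\bs\varpi$ first as a Generalized Dirichlet with explicit parameters, and then checks that condition \eqref{eq:gdequalsd} collapses it to the symmetric Dirichlet precisely when $\alpha_k=\gamma_k=\alpha$. Your route instead uses the gamma representation and the Dirichlet splitting identity (an independent symmetric $\mathcal D_{c_+-1}(\alpha,\ldots,\alpha)$ times an independent $\mbox{Gamma}(c_+\alpha,1)$ total yields i.i.d.\ $\mbox{Gamma}(\alpha,1)$ pieces), which bypasses the Generalized Dirichlet entirely. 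Your argument is shorter and fully self-contained; the paper's argument is slightly longer but has the bonus of exhibiting the \emph{general} (non-symmetric) marginal of $\bs\varpi$ as a Generalized Dirichlet, which ties in naturally with Lemma~\ref{lem:gibbs} where that family reappears as the full conditional of $\bs u$. Either route is fine for the statement as posed.
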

\begin{proof}
See Appendix C in supplementary material.
\end{proof}
\noindent
Note here that Theorem \ref{lem:prior} does not imply that $\bs\theta$ and $\bs w$ are a priori independent. As shown in Figure \ref{fig:prior}, $\theta_k$ is exactly equal to $w_k$ with probability $P(c_k = 0)>0$, $k = 1,\ldots,K$.

The model definition is completed by considering the latent allocation variables of the mixture model. Let $\bs\xi=\{\xi_1,\ldots,\xi_r\}$ and $\bs z=\{z_1,\ldots,z_s\}$ with 
\begin{eqnarray*}
P(\xi_i = k|\bs\theta) &=& \theta_k, \quad \mbox{ independent for}\quad i=1,\ldots,r\\
P(z_j = k|\bs w) &=& w_k, \quad \mbox{independent for }\quad j=1,\ldots,s,
\end{eqnarray*}
for $k=1,\ldots,K$. Moreover, $\bs \xi,\bs z$ are assumed conditionally independent given $\bs\theta$ and $\bs w$, that is, $P(\bs \xi,\bs z|\bs\theta,\bs w) = P(\bs \xi|\bs\theta)P(\bs z|\bs w)$. Now, the joint distribution of the complete data ($\bs x,\bs y,\bs\xi,\bs z$) factorizes as follows:
\begin{equation}\label{eq:complete}
f(\bs x,\bs y,\bs\xi,\bs z|\bs\theta,\bs w)= \prod_{i=1}^{r}\theta_{\xi_i}f_{\xi_i}(x_i)\prod_{j=1}^{s}w_{z_j}f_{z_j}(y_j).\end{equation}
Let $\bs g = (\bs x,\bs y,\bs\xi,\bs z,\bs\theta,\bs w,\bs u,\bs v,c,\tau,\pi)$. From Equations \eqref{eq:beta_prior}, \eqref{eq:jeffreys_prior} and \eqref{eq:dirac}-\eqref{eq:complete}, the joint distribution of $\bs g$ is defined as 
\begin{align}\nonumber
f(\bs g|\bs\alpha,\bs\gamma,K) &=& f(\bs x,\bs y,\bs\xi,\bs z|\bs\theta,\bs w)f(\bs u|\bs\alpha,K)f(\bs v|c,\bs\gamma)f(\bs\theta|\tau,\bs u)\\
\label{eq:model}
&\times&f(\bs w|c,\tau,\bs u,\bs v)f(\tau|c)f(c|K,\pi)f(\pi).
\end{align}
Equation \eqref{eq:model} defines a hierarchical model whose graphical representation is given in Figure \ref{fig:dag} with circles (squares) denoting unobserved (observed/known) variables.

\begin{figure}[t]
  \centering
    \includegraphics[width = 0.4\textwidth]{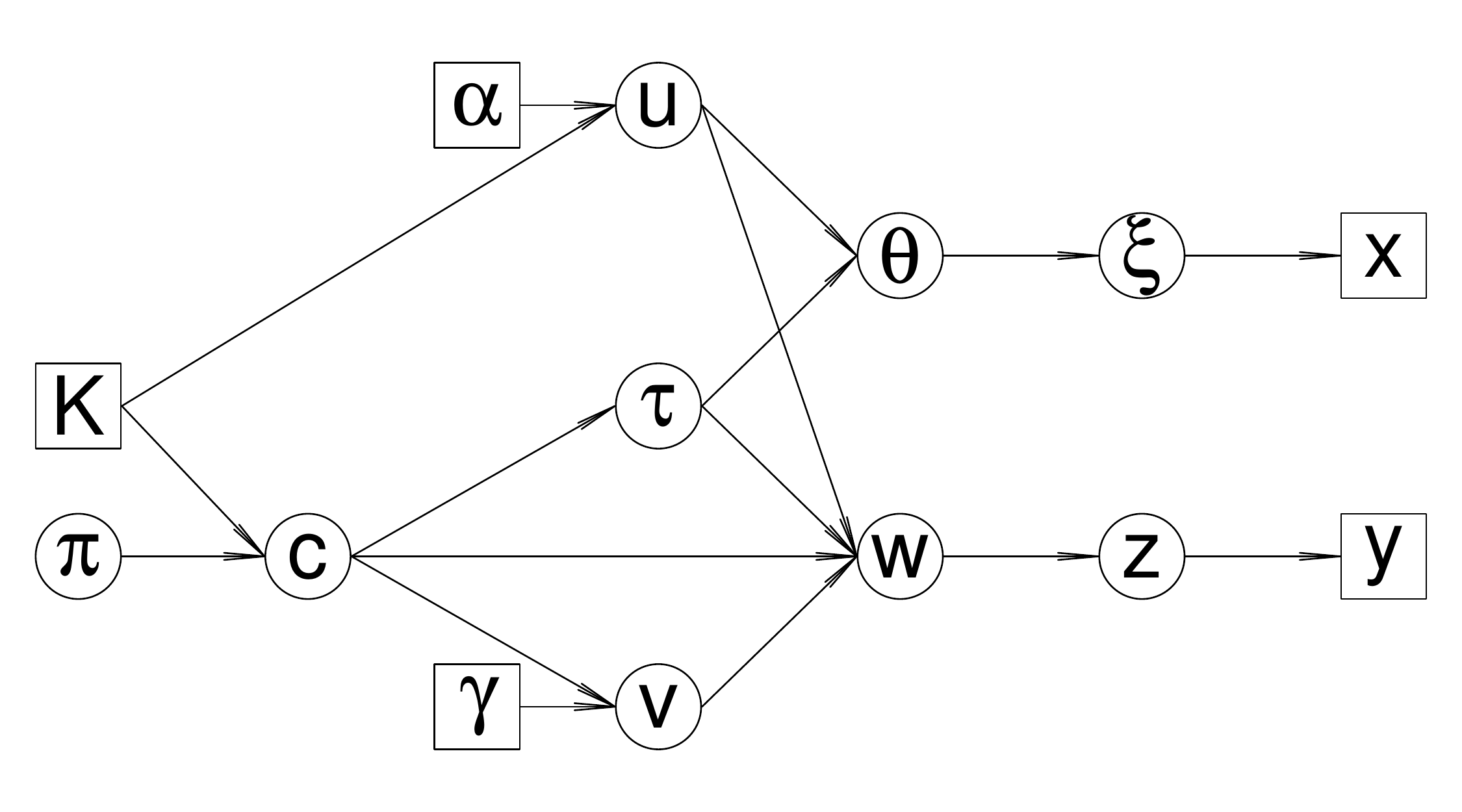}
      \caption{Directed Acyclic Graph representation of the hierarchical model \eqref{eq:model}.}\label{fig:dag}
\end{figure}

\subsection{Full conditional distributions for the Gibbs updates}\label{sec:gibbs}

In this section, the full conditional distributions are derived. Let $h|\cdots$ denote the conditional distribution of a random variable $h$ given the values of the rest of the variables. We also denote by $\bs x_{[-i]}$ all remaining members of a generic vector after excluding its $i$-th item. 

It is straightforward to show that $\pi|\cdots \sim \mbox{Beta}(c_+ +1/2,K-c_+ + 1/2)$. For the allocation variables it follows that:
\begin{eqnarray}\label{eq:xi_cond}
P(\xi_i=k|\cdots)&\propto& \theta_k f_k(x_i) \quad k = 1,\ldots,K\\
P(z_j=k|\cdots)&\propto& w_k f_k(y_i) \quad k = 1,\ldots,K \label{eq:z_cond}
\end{eqnarray}
independent for $i = 1,\ldots,r$ and $j = 1,\ldots,s$. Now, given $(\bs u,\bs v,c,\tau)$, it is again trivial to see that the full conditional distributions of $\bs \theta,\bs w|\cdots$ is the same as in \eqref{eq:dirac}. Let $\mathcal GD(\cdot,\cdot)$ denotes the Generalized Dirichlet distribution (see Appendix B in supplementary material) and also define
\begin{eqnarray*}
s_k(\bs \xi)&:=&\sum_{i=1}^{r}I(\xi_i=k),\quad
s_k(\bs z):=\sum_{j=1}^{s}I(z_j=k)
\end{eqnarray*}
for $k=1,\ldots,K$. Regarding the full conditional distribution of the free parameters, we have the following result. 

\begin{lemma}\label{lem:gibbs}
The full conditional distribution of $(\bs u,\bs v|\cdots)$ is 
\begin{eqnarray}\label{eq:u_gibbs}
\bs u|\cdots&\sim&\mathcal{GD}(\lambda_1,\ldots,\lambda_{K-1};\beta_1,\ldots,\beta_{K-1})\\\label{eq:v_gibbs}
\bs v|\cdots&\sim&\mathcal{D}_{c_{+}-1}(\{\gamma_{\ell}+s_{\tau_{\ell+k^*}}(\bs z);\ell=1,\ldots,c_+\}),
\end{eqnarray}
with $k^* := K-c_+$, conditionally independent (given all other variables), where
\begin{equation*}
\lambda_{k}:=\begin{cases}
\alpha_k + s_{\tau_k}(\bs \xi) + s_{\tau_k}(\bs z), & k = 1,\ldots,k^*\\
\alpha_k + s_{\tau_k}(\bs \xi), & k = k^*+1,\ldots,K-1
\end{cases}
\end{equation*}
and
\begin{equation*}
\beta_{k}:=\begin{cases}
\sum_{j=k+1}^{K}(\alpha_j + s_{\tau_j}(\bs \xi) + s_{\tau_j}(\bs z)), & k = 1,\ldots,k^*\\
\sum_{j=k+1}^{K}(\alpha_j + s_{\tau_k}(\bs \xi)), & k = k^*+1,\ldots,K-1.
\end{cases}
\end{equation*}
\end{lemma}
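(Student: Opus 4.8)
The plan is to read the full conditional directly off the joint density \eqref{eq:model}, keeping only the factors that involve $(\bs u,\bs v)$. Since $f(\bs\theta\mid\tau,\bs u)$, $f(\bs w\mid c,\tau,\bs u,\bs v)$ and $f(\tau\mid c)$ are Dirac measures (\eqref{eq:dirac}, \eqref{eq:tau}), conditioning on $(\bs\theta,\bs w,\tau)$ simply enforces $\bs\theta=\tau^{-1}\bs u$ and $\bs w=\tau^{-1}\bs\varpi$ as in Lemma \ref{lem:free}; the only active ingredients are therefore the complete-data likelihood \eqref{eq:complete}, evaluated at these deterministic expressions, together with the Dirichlet priors \eqref{eq:uprior}--\eqref{eq:vprior}. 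The alignment densities $f_k(x_i),f_k(y_j)$ and the factors $f(c\mid K,\pi)f(\pi)$ are constant in $(\bs u,\bs v)$, so
$$f(\bs u,\bs v\mid\cdots)\ \propto\ \Big(\prod_{k=1}^{K}\theta_k^{s_k(\bs\xi)}\Big)\Big(\prod_{k=1}^{K}w_k^{s_k(\bs z)}\Big)\,\prod_{k=1}^{K}u_k^{\alpha_k-1}\,\prod_{\ell=1}^{c_+}v_\ell^{\gamma_\ell-1},$$
with $\bs\theta,\bs w$ to be replaced by their Lemma \ref{lem:free} expressions.

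Second, I would perform the substitution. Writing $k^*:=K-c_+$ and reindexing by the bijection $\tau$, the identity $\theta_{\tau_k}=u_k$ from \eqref{eq:theta} gives $\prod_{k=1}^{K}\theta_k^{s_k(\bs\xi)}=\prod_{k=1}^{K}u_k^{s_{\tau_k}(\bs\xi)}$. Likewise Lemma \ref{lem:free} yields $w_{\tau_k}=u_k$ for $k\leqslant k^*$ and $w_{\tau_{k^*+\ell}}=v_\ell\sum_{j=k^*+1}^{K}u_j$ for $\ell=1,\ldots,c_+$, whence
$$\prod_{k=1}^{K}w_k^{s_k(\bs z)}=\Big(\prod_{k=1}^{k^*}u_k^{s_{\tau_k}(\bs z)}\Big)\Big(\prod_{\ell=1}^{c_+}v_\ell^{s_{\tau_{k^*+\ell}}(\bs z)}\Big)\Big(\sum_{j=k^*+1}^{K}u_j\Big)^{m},\qquad m:=\sum_{\ell=1}^{c_+}s_{\tau_{k^*+\ell}}(\bs z).$$
Plugging these back in and collecting the exponents on each $u_k$ and $v_\ell$, the kernel factorises as a function of $\bs v$ alone times a function of $\bs u$ alone: the $\bs v$-part is $\prod_{\ell=1}^{c_+}v_\ell^{\gamma_\ell+s_{\tau_{k^*+\ell}}(\bs z)-1}$, i.e.\ exactly the $\mathcal D_{c_+-1}$ density of \eqref{eq:v_gibbs}, while the $\bs u$-part is $\big(\prod_{k=1}^{K}u_k^{\lambda_k-1}\big)\big(\sum_{j=k^*+1}^{K}u_j\big)^{m}$ with $\lambda_k$ as in the statement (setting $\lambda_K:=\alpha_K+s_{\tau_K}(\bs\xi)$). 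Conditional independence of $\bs u$ and $\bs v$ is immediate from this product form.

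Third, I would identify the $\bs u$-kernel with the Generalized Dirichlet density of Appendix B. The cleanest route is the stick-breaking representation of \citep{connor}: passing to the partial-sum ratios $u_k/\sum_{j\geqslant k}u_j$ turns a pure product $\prod_k u_k^{\lambda_k-1}$ into independent Beta factors whose tail parameters are the Dirichlet tail sums $\sum_{j>k}\lambda_j$, and the extra aggregate factor $\big(\sum_{j=k^*+1}^{K}u_j\big)^{m}$ perturbs only the sticks broken off before index $k^*$, inflating their tail parameters by exactly $m$. This reproduces $\beta_k=\sum_{j=k+1}^{K}(\alpha_j+s_{\tau_j}(\bs\xi)+s_{\tau_j}(\bs z))$ for $k\leqslant k^*$ (the $s_{\tau_j}(\bs z)$ over alive $j$ summing to $m$) and the plain tail sum $\beta_k=\sum_{j=k+1}^{K}(\alpha_j+s_{\tau_j}(\bs\xi))$ for $k>k^*$, giving \eqref{eq:u_gibbs}. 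The degenerate cases $c_+\in\{0,K\}$ (so $C_0(c)$ or $C_1(c)$ empty) follow from the same computation under the conventions of Lemma \ref{lem:free} and collapse the $\mathcal{GD}$ to an ordinary Dirichlet on the pooled, respectively single-sample, counts.

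The step I expect to be the actual obstacle is this last one: verifying carefully that the single aggregate factor $\big(\sum_{j=k^*+1}^{K}u_j\big)^{m}$ enters the Generalized Dirichlet density in precisely the advertised way --- that it modifies the parameters $\beta_k$ for $k\leqslant k^*$, and by exactly $m$, while leaving those for $k>k^*$ as ordinary Dirichlet tail sums and remaining consistent at the boundary index $k=k^*$. Everything preceding it is routine substitution and bookkeeping with the permutation $\tau$.
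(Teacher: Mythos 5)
Your proposal is correct and follows essentially the same route as the paper's Appendix D: read the conditional off the joint density, substitute the Lemma \ref{lem:free} reparameterization of $(\bs\theta,\bs w)$ in terms of $(\bs u,\bs v)$, and collect exponents to obtain the factorized kernel, from which conditional independence and the Dirichlet law of $\bs v$ are immediate. The only difference is the final identification of the $\bs u$-kernel with the Generalized Dirichlet: the paper verifies it by evaluating the density \eqref{eq:gdpdf} at the claimed parameters and simplifying, whereas you pass to the Connor--Mosimann stick-breaking ratios and note that the aggregate factor $\bigl(\sum_{j>k^*}u_j\bigr)^{m}=\prod_{i\leqslant k^*}(1-\zeta_i)^{m}$ inflates exactly the first $k^*$ tail parameters by $m$ --- a valid (and arguably more transparent) verification that yields the same $\beta_k$.
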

\begin{proof} 
See Appendix D in the supplementary material.
\end{proof}

Here, we underline that we essentially derived an alternative construction of the Generalized Dirichlet distribution. Assuming that two vectors of weights share some common elements, and independent Dirichlet prior distributions are assigned to the free parameters of these weights, the posterior distribution of the first free parameter vector is a Generalized Dirichlet. Finally, notice that if $\bs v = \emptyset$ (this is the case when the corresponding elements of the weights of the two mixtures are all equal to each other), the Generalized distribution \eqref{eq:u_gibbs} reduces to the distribution $\mathcal D_{K-1}(\{\alpha_{k} + s_{k}(\bs \xi) + s_{k}(\bs z);k=1,\ldots,K\})$,
as expected, since in such a case $(\bs x,\bs y)$ forms a random sample of size $r+s$ from the same population. On the other hand, if all weights are different, the full conditional distribution of $\bs u,\bs v$ becomes a product of two independent Dirichlet distributions, as expected. Next we show that we can integrate out the parameters related to transcript expression and directly sample from the marginal posterior distribution of $\bs\xi,\bs z,c|\bs x,\bs y$.

\begin{thm}\label{thm:collapsed} Integrating out the transcript expression parameters $\bs u,\bs v$, the full conditional distributions of allocation variables are written as:
\begin{align}\label{eq:marginal}
f(\bs\xi,\bs z|\bs x,\bs y,c) &\propto &
\frac{\Gamma\left(\sum\limits_{k\in C_1}\widetilde\alpha_k + s_k(\bs\xi) + s_k(\bs z)\right)}{\Gamma\left(\sum\limits_{k\in C_1}\widetilde\alpha_k + s_k(\bs\xi)\right)\Gamma\left(\sum\limits_{k\in C_1}\gamma_{\ell(k)}+s_{k}(\bs z)\right)}\\
\nonumber
&\times&\prod_{k \in C_1}\Gamma(\widetilde\alpha_k + s_k(\bs \xi))\Gamma(\gamma_{\ell(k)} + s_{k}(\bs z))\\
\nonumber
&\times&\prod_{k \in C_0}\Gamma(\widetilde\alpha_k + s_k(\bs\xi) + s_k(\bs z))\prod\limits_{i=1}^{r}f_{\xi_i}(x_i)\prod\limits_{j=1}^{s}f_{z_j}(y_j)\\
\label{eq:xi_collapsed}
P(\xi_i = k|\bs\xi_{[-i]},\bs z,c,\bs x) &\propto&
\begin{cases}
(\widetilde\alpha_k + s_{k}^{(i)}(\bs\xi) + s_{k}(\bs z))f_k(x_i), & k\in C_0\\
\frac{\sum\limits_{t\in C_1}\widetilde\alpha_k + s_{t}^{(i)}(\bs\xi) + s_{t}(\bs z)}{\sum\limits_{t\in C_1}\widetilde\alpha_t + s_{t}^{(i)}(\bs\xi)}(\widetilde\alpha_k + s_{k}^{(i)}(\bs\xi))f_k(x_i), & k\in C_1
\end{cases}
\\
\label{eq:z_collapsed}
P(z_j = k|\bs z_{[-j]},\bs \xi,c,\bs y) &\propto&
\begin{cases}
(\widetilde\alpha_k + s_{k}(\bs\xi) + s^{(j)}_{k}(\bs z))f_k(y_j), & k\in C_0\\
\frac{\sum\limits_{t\in C_1}\widetilde\alpha_t + s_{t}(\bs\xi) + s^{(j)}_{t}(\bs z)}{\sum\limits_{t\in C_1}\gamma_{\ell(t)} + s_{t}^{(j)}(\bs z)}(\gamma_{\ell(k)} + s_{k}^{(j)}(\bs z))f_k(y_j), & k\in C_1
\end{cases}
\end{align}
where $\widetilde\alpha_k = \alpha_{\tau^{-1}_k}$, $\ell(k) = \tau^{-1}_k - k^*$, $s_k^{(i)}(\bs\xi) = \sum_{t\neq i}I(\xi_i = k)$, $s_k^{(j)}(\bs z) = \sum_{t\neq j}I(z_i = k)$ for $k = 1,\ldots,K$, $i = 1,\ldots,r$, $j = 1,\ldots,s$.
\end{thm}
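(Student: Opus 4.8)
The plan is to begin from the hierarchical factorisation \eqref{eq:model}, dispose of the Dirac factors by trivial integration, and then integrate out the expression parameters $(\bs u,\bs v)$ in closed form. Conditioning on $c$ fixes the permutation $\tau=\tau(c)$ through \eqref{eq:tau}, and the Dirac densities \eqref{eq:dirac} collapse the $(\bs\theta,\bs w)$ integration; by Lemma \ref{lem:free} I may then substitute, in the $\tau$-sorted labelling (with $k^*:=K-c_+$), the $\bs\theta$-weight in position $\ell$ by $u_\ell$, the $\bs w$-weight in a dead position $\ell\leqslant k^*$ by $u_\ell$, and the $\bs w$-weight in an alive position $\ell>k^*$ by $v_{\ell-k^*}U_1$, where $U_1:=\sum_{m>k^*}u_m$. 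Collecting the exponents contributed by $\prod_i\theta_{\xi_i}f_{\xi_i}(x_i)$ and $\prod_j w_{z_j}f_{z_j}(y_j)$ and multiplying by the Dirichlet priors \eqref{eq:uprior}--\eqref{eq:vprior}, the joint density of $(\bs\xi,\bs z,\bs u,\bs v)$ given $(\bs x,\bs y,c)$ is proportional to $\big(\prod_i f_{\xi_i}(x_i)\big)\big(\prod_j f_{z_j}(y_j)\big)\prod_{\ell=1}^{K}u_\ell^{a_\ell-1}\,U_1^{B}\prod_{m=1}^{c_+}v_m^{b_m-1}$, with $a_\ell=\alpha_\ell+s_{\tau_\ell}(\bs\xi)+s_{\tau_\ell}(\bs z)$ for $\ell\leqslant k^*$, $a_\ell=\alpha_\ell+s_{\tau_\ell}(\bs\xi)$ for $\ell>k^*$, $b_m=\gamma_m+s_{\tau_{m+k^*}}(\bs z)$ and $B=\sum_{\ell>k^*}s_{\tau_\ell}(\bs z)$; this is exactly the un-normalised $\mathcal{GD}\times\mathcal{D}$ kernel already identified in Lemma \ref{lem:gibbs}.

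Next I would carry out the two integrations. The $\bs v$-integral is an ordinary Dirichlet constant, $\int_{\mathcal P_{c_+-1}}\prod_m v_m^{b_m-1}\,d\bs v=\prod_m\Gamma(b_m)\big/\Gamma(\sum_m b_m)$. In the $\bs u$-integral the only non-routine feature is the factor $U_1^{B}$; I would remove it by the hierarchical decomposition of the Dirichlet, writing $\bs u$ in terms of its dead-block composition on $\mathcal P_{k^*-1}$, its alive-block composition on $\mathcal P_{c_+-1}$, and the total alive mass $p=U_1$, with Jacobian $(1-p)^{k^*-1}p^{c_+-1}$. This turns $\int_{\mathcal P_{K-1}}\prod_\ell u_\ell^{a_\ell-1}U_1^{B}\,d\bs u$ into two Dirichlet integrals times one Beta integral, whose product is $\big(\prod_{\ell=1}^K\Gamma(a_\ell)\big)\Gamma(A_1+B)\big/\big(\Gamma(A_1)\Gamma(A_0+A_1+B)\big)$, where $A_0=\sum_{\ell\leqslant k^*}a_\ell$ and $A_1=\sum_{\ell>k^*}a_\ell$; the $\Gamma(A_0)$ produced by the alive-mass Beta integral cancels the $\Gamma(A_0)$ from the dead-block Dirichlet integral, which is the one cancellation to keep straight.

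It then remains to re-express everything in the original labels via $\widetilde\alpha_k=\alpha_{\tau^{-1}_k}$ and $\ell(k)=\tau^{-1}_k-k^*$: the dead-block Gammas become $\prod_{k\in C_0}\Gamma(\widetilde\alpha_k+s_k(\bs\xi)+s_k(\bs z))$, the alive-block Gammas and $\Gamma(A_1)$ give $\prod_{k\in C_1}\Gamma(\widetilde\alpha_k+s_k(\bs\xi))$ and $\Gamma(\sum_{k\in C_1}\widetilde\alpha_k+s_k(\bs\xi))$, $\Gamma(A_1+B)$ gives $\Gamma(\sum_{k\in C_1}\widetilde\alpha_k+s_k(\bs\xi)+s_k(\bs z))$, and the $\bs v$-integral supplies $\prod_{k\in C_1}\Gamma(\gamma_{\ell(k)}+s_k(\bs z))\big/\Gamma(\sum_{k\in C_1}\gamma_{\ell(k)}+s_k(\bs z))$. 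Since $A_0+A_1+B=\sum_k\alpha_k+r+s$ does not depend on $\bs\xi$, $\bs z$ or $c$, the factor $\Gamma(A_0+A_1+B)$ is absorbed into the normalising constant, and what survives is precisely \eqref{eq:marginal}. Finally I would obtain \eqref{eq:xi_collapsed} and \eqref{eq:z_collapsed} by forming the ratio of \eqref{eq:marginal} over the possible values of a single $\xi_i$ (resp.\ $z_j$), using $s_k(\bs\xi)=s_k^{(i)}(\bs\xi)+I(\xi_i=k)$ together with $\Gamma(t+1)=t\,\Gamma(t)$; the different shape of the two displays — a Gamma-ratio in the $\widetilde\alpha$'s for $\bs\xi$ but an extra $\gamma$-ratio for $\bs z$ — mirrors the different roles of the $\bs u$- and $\bs v$-integrals, because the alive $\bs w$-weights carry the $\gamma$ parameters while the alive $\bs\theta$-weights carry the $\alpha$'s. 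The only genuine difficulty is the combinatorial bookkeeping — tracking $\tau$, the blocks $C_0,C_1$, the leave-one-out counts, and the $\Gamma(A_0)$ cancellation — since every integral involved is elementary.
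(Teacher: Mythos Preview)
Your argument is correct and follows the same overall architecture as the paper's proof in Appendix~E: write the joint of $(\bs\xi,\bs z,\bs u,\bs v)$ given $(\bs x,\bs y,c)$ as the kernel identified in Lemma~\ref{lem:gibbs}, integrate out $(\bs u,\bs v)$, relabel via $\tau$, and then specialise to a single $\xi_i$ or $z_j$ using $\Gamma(t+1)=t\,\Gamma(t)$.

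The one genuine methodological difference lies in how the $\bs u$-integral $\int_{\mathcal P_{K-1}}\prod_\ell u_\ell^{a_\ell-1}U_1^{B}\,d\bs u$ is evaluated. The paper reads off the Generalized Dirichlet normalising constant $\prod_{k=1}^{K-1}B(\lambda_k,\beta_k)$ from Lemma~\ref{lem:gibbs} and then simplifies this product by the telescoping identity $\lambda_k+\beta_k=\beta_{k-1}$ (valid for all $k\neq k^*+1$), which collapses the product to $\Gamma(\beta_{k^*})\Gamma(\beta_{K-1})\big/\{\Gamma(\beta_0)\Gamma(\lambda_{k^*+1}+\beta_{k^*+1})\}\cdot\prod_k\Gamma(\lambda_k)$. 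You instead perform a direct block decomposition of the Dirichlet---splitting $\bs u$ into the dead-block composition, the alive-block composition, and the alive mass $p=U_1$---which turns the integral into two Dirichlet normalising constants times one Beta integral and makes the $\Gamma(A_0)$ cancellation explicit. Your route is slightly more elementary, since it uses only the aggregation property of the Dirichlet and avoids the $\mathcal{GD}$ telescoping; the paper's route has the advantage of reusing Lemma~\ref{lem:gibbs} wholesale, so that nothing beyond the already-stated $(\lambda_k,\beta_k)$ parameters needs to be recomputed. Both reach the same expression, and your identification of $A_0+A_1+B=\sum_k\alpha_k+r+s$ as the term absorbed into the proportionality constant matches the paper's observation that $\Gamma(\beta_0)$ is independent of $(\bs\xi,\bs z)$.
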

\begin{proof}
See Appendix E in supplementary material.
\end{proof}

Once again, note the intuitive interpretation of our model in the special cases where $C_0 = \emptyset$ or $C_1 = \emptyset$. If $C_0 = \emptyset$ (all transcripts are DE) then the nominator at the first line of Equation \eqref{eq:marginal} becomes equal to $\Gamma(\sum_{k}\alpha_k + r + s)$, that is, independent of $\bs\xi,\bs z$. Hence, \eqref{eq:marginal} reduces to the conditional distribution of the allocation variables when independent Dirichlet prior distributions are imposed to the mixture weights. On the contrary, when $C_1 = \emptyset$ (all transcripts are EE), the distribution reduces to the product appearing in the last row of Equation \eqref{eq:marginal}. This is the marginal distribution of the allocations when considering that $(\bs x,\bs y)$ arise from the same population and after imposing a Dirichlet prior on the weights, as expected.

\subsection{MCMC samplers}\label{sec:rjMCMC}

In this section we consider the problem of sampling from the posterior distribution of the model in \eqref{eq:model}. We propose two (alternative) MCMC sampling schemes, depending on whether the transdimensional random variable $\bs v$ is updated before or after $c$. 

Note that given $c$ everything has fixed dimension. However, as $c$ varies on the set of its possible values, then $\bs v\in \cup_{k\in\{0,2,...\ldots,K\}}\mathcal P_{k-1}$. This means that whenever $c$ is updated, $\bs v$ should change dimension. In order to construct a sampler that switches between different dimensions, a Reversible Jump MCMC method \citep{Green:95} can be implemented (see also \cite{Richardson:97} and \cite{papRJ}). However, this step can be avoided since we have already shown that the transcript expression parameters can be integrated out. Thus, a collapsed sampler is also available. Given an initial state, the general work flow for the proposed samplers is the following (we avoid to explicitly state that all  distributions appearing next are conditionally defined on the observed data $\bs x$, $\bs y$, although they should be understood as such).

\begin{minipage}{0.99\columnwidth}
\begin{multicols}{2}
\underline{\textbf{rjMCMC Sampler}}
\begin{enumerate}
\item Update $(\bs \xi,\bs z)|\bs\theta,\bs w$.
\item Update $(\bs u,\bs v)|c,\bs \xi,\bs z$.
\item Update $(\bs\theta,\bs w)|c,\tau,\bs u,\bs v$.
\item Propose update of $(c,\tau,\bs v)|\ldots$.
\item Update $\pi|c$.
\end{enumerate} 
\columnbreak
\underline{\textbf{Collapsed Sampler}}
\begin{enumerate}
\item Update $\xi_i|\bs\xi_{[-i]},\bs z,c$, $i = 1,\ldots,r$.
\item Update $z_j|\bs\xi,\bs z_{[-j]},c$, $j = 1,\ldots,s$.
\item Update a block of $c|\bs\xi,\bs z$.
\item Update $\pi|c$.
\item Update $(\bs\theta,\bs w,\tau,\bs u,\bs v)|c,\bs \xi,\bs z$ (optional).
\end{enumerate} 
\hspace{1ex}
\end{multicols}
\end{minipage}

Note that step (e) is optional for the collapsed sampler. It is implemented only to derive the estimates of transcript expression but it is not necessary for the previous steps. The next paragraphs outline the workflow for step (d) of rjMCMC sampler and step (c) of the collapsed sampler. For full details the reader is referred to Appendices F and G in the Supplementary material.

\paragraph{Reversible Jump sampler} Models of different dimensions are bridged using two move types, namely: ``birth'' and ``death'' of an index. The effect of a birth (death) move is to increase (decrease) the number of differentially expressed transcripts. These  moves are complementary in the sense that the one is the reverse of the other. Note that this step proposes a candidate state which is accepted according to the acceptance probability. 

\paragraph{Collapsed sampler} In this case we randomly choose two transcripts ($j_1$ and $j_2$) and perform an update from the conditional distribution $c_{j_1,j_2}|c_{-[j_1,j_2]}\bs \xi,\bs z,\bs x,\bs y,\pi$, which is detailed in Equations (G.1)--(G.4) in Section G of supplementary material. The random selection of the block $\{j_1,j_2\}\subseteq \{1,\ldots,K\}$ and the corresponding update of $c_{j_1,j_2}$ from its full conditional distribution is a valid MCMC step because it corresponds to a Metropolis-Hastings step in which the acceptance probability equals 1 (see Lemma 2 in Appendix G of the supplementary material). 

\subsection{Clustering of reads and transcripts}\label{sec:clusters}

In real RNA-seq datasets the number of transcripts could be very large. This imposes a great obstacle for the practical implementation of the proposed approach: the search space of the MCMC sampler consists of $2^{K}$ elements (state vectors) and convergence of the sampler may be very slow. This problem can be alleviated by a cluster representation of aligned reads to the transcriptome. High quality mapped reads exhibit a sparse behaviour in terms of their mapping places: each read aligns to a small number of transcripts and there are groups of reads mapping to specific groups of transcripts. Hence, we can take advantage of this sparse representation of alignments and break the initial problem into simpler ones, by performing MCMC per cluster.  

This clustering representation introduces an efficient way to perform parallel MCMC sampling by using multiple threads for transcript expression estimation. For this purpose we used the GNU parallel \citep{parallel} tool, which effectively handles the problem of splitting a series of jobs (MCMC per cluster) into the available threads. The jobs are ordered according to the number of reads per cluster and the ones containing more reads are queued first. GNU parallel efficiently spawns a new process when one finishes and keeps all available CPUs active, thus saving time compared to an arbitrary assignment of the same amount of jobs to the same number of available threads. For further details see Appendix H.

\subsection{False Discovery Rate}
\label{sec:fdr}

Controlling the False Discovery Rate (FDR) \citep{benjamini1995, storey2003} is a crucial issue in multiple comparisons problems. Under a Bayesian perspective, any probabilistic model that defines a positive prior probability for DE and EE yields that $\mathbb E(\mbox{FDR}|\mbox{data}) = \sum(1-\hat{P}(c_k = 1|\bs x,\bs y))d_k/D$ (see for example \citet{fdrJASA,fdrISBA}), where $d_k\in\{0,1\}$ and $D=\sum d_k$ denote the decision for transcript $k$, $k = 1,\ldots,K$ and the total number of rejections, respectively. Consequently, FDR can be controlled at a desired level $\alpha$ by choosing the transcripts that $\hat{P}(c_k=1|\bs x,\bs y)>1-\alpha$, which is also the approach proposed by \citet{EBSeq}. We have found that this rule achieves small false discovery rates compared to the desired level $\alpha$, but sometimes results to small true positive rate. 

A less conservative choice is the following. Let $q_1 \geqslant \ldots \geqslant q_K$ denote the ordered values of $\hat{P}(c_k=1|\bs x,\bs y)$, $k = 1,\ldots,K$ and define $G_k:=\frac{\sum_{j=1}^{k}(1-q_k)}{k}$, $k = 1,\ldots,K$. 
For any given $0 < \alpha <1$, consider the decision rule:
\begin{equation}\label{eq:decision}
d_k = \begin{cases}
1, & 1\leqslant k \leqslant g\\
0, & g + 1\leqslant k \leqslant K
\end{cases}
\end{equation}
where $g:=\max\{k=1\,\ldots,K:G_k\leqslant\alpha\}$.  It is quite straightforward to see that \eqref{eq:decision} controls the Expected False Discovery Rate at the desired level $\alpha$, since by direct substitution we have that
\begin{eqnarray*}
\mathbb E(\mbox{FDR}|\mbox{data}) = \frac{\sum_{k=1}^{K}(1-\hat{P}(c_k=1|\bs x,\bs y))d_k}{D}= \frac{\sum_{k=1}^{g}(1-q_k)}{g}\leqslant \alpha.
\end{eqnarray*}

An alternative is to use a rule optimizing the posterior expected loss of a predefined loss function. For example, the threshold $c/(c+1)$ is the optimal cutoff under the loss function $L = c\bar{\mbox{FD}} + \bar{\mbox{FN}}$, where $\bar{\mbox{FD}}$ and $\bar{\mbox{FN}}$ denote the posterior expected counts of false discoveries and false negatives, respectively. Note that $L$ is an extension of the $(0,1,c)$ loss functions for traditional hypothesis testing \citep{lindley}, while a variety of alternative loss functions can be devised as discussed in \citet{fdrJASA}.

\begin{figure}[t]
\centering
\begin{tabular}{c}
\includegraphics[width = 0.99\textwidth]{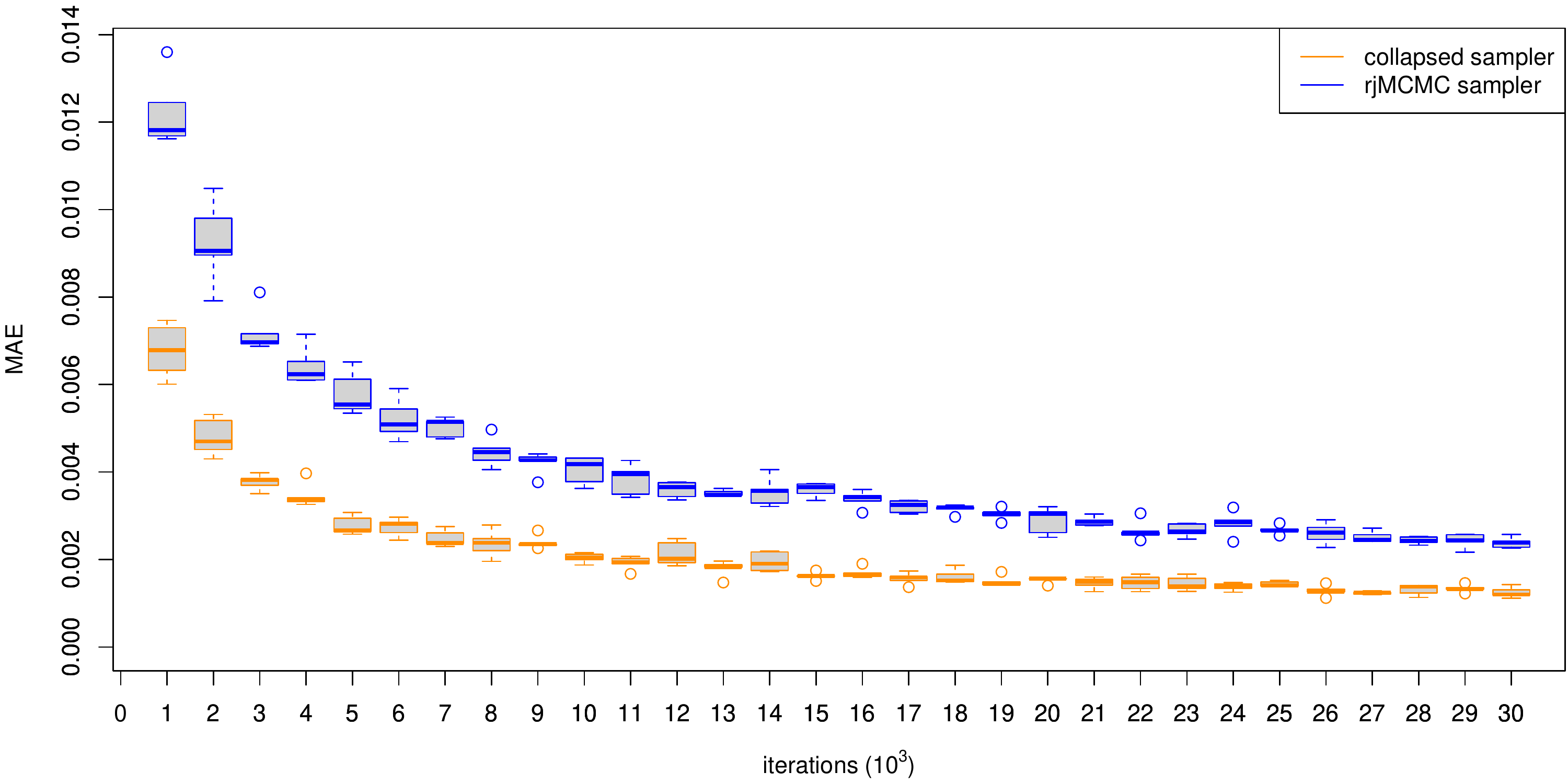} 
\end{tabular}
      \caption{Convergence of the ergodic means of posterior probabilities of DE for a toy example of $K=630$ transcripts. The ``ground truth'' for the posterior mean estimates ($\widehat P_g(c_k = 1)$; $k = 1,\ldots,K$) of these probabilities was inferred by running each sampler for $500000$ iterations. Then, each sampler ran for a smaller number of $m$ iterations resulting to the posterior mean estimates $\widehat P_m(c_k = 1)$; $k = 1,\ldots,K$, for $m = 1000,2000,\ldots,30000$. Finally, the averaged Mean Absolute Error of the posterior mean estimates was computed as: $\frac{1}{K}\sum_{k=1}^{K}|\widehat P_m(c_k=1)-\widehat P_g(c_k = 1)|$. The boxplots correspond to five replications of the previous procedure.}\label{fig:rjVScj}
\end{figure}

\begin{figure}[p]
\centering
\includegraphics[width = 0.99\textwidth]{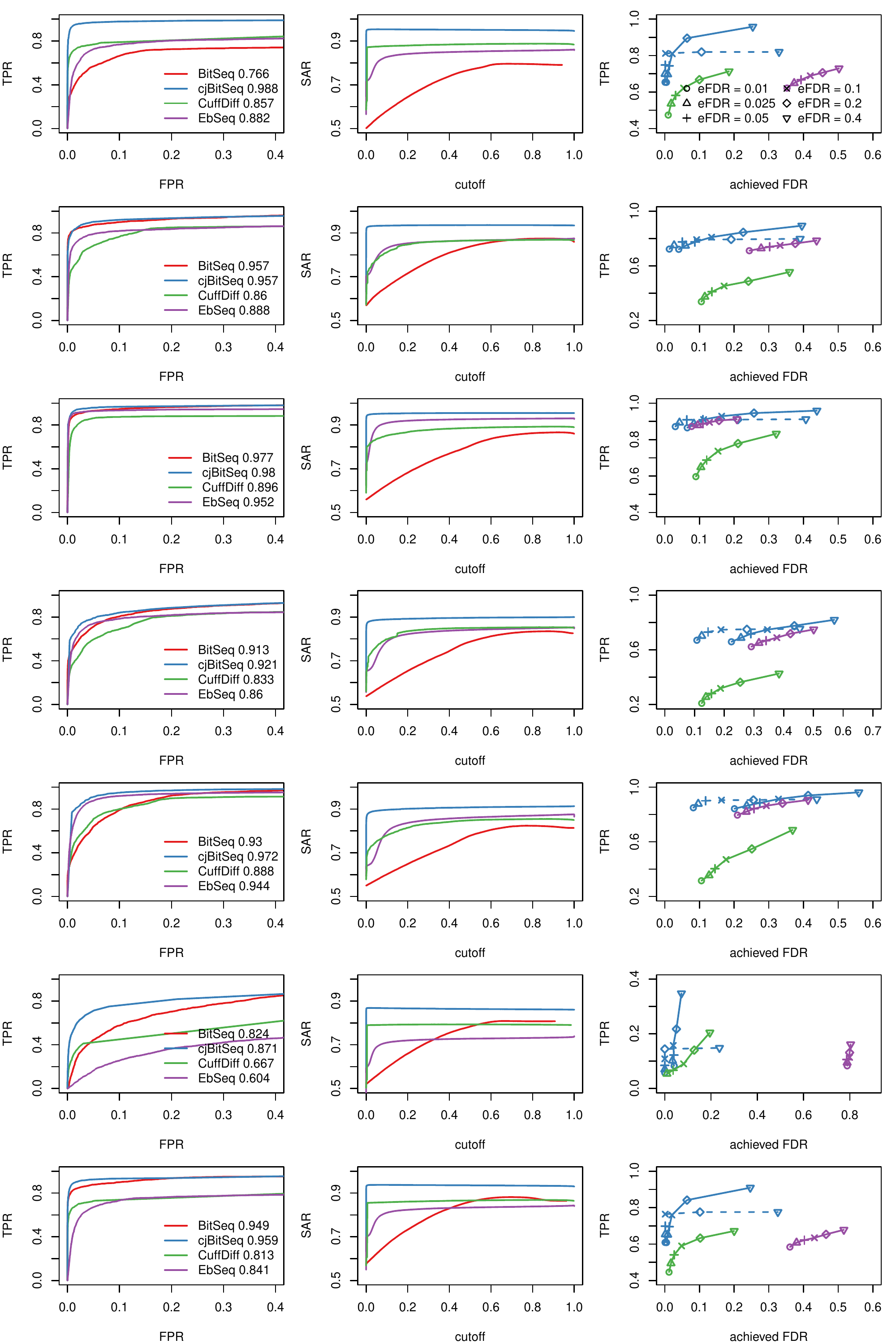}
\caption{Receiver Operating Characteristic (a), SAR measure (b) and Power-to-achieved-FDR (c) curves for scenario 1-7 (1st-7th row). The blue dashed lines correspond to the filtered cjBitSeq output by discarding transcripts with absolute log2 fold change less than 1.}\label{fig:nbNEW}
\end{figure}

\section{Results}

A set of simulation studies is used to benchmark the proposed methodology using synthetic RNA-seq reads from the {\em Drosophila melanogaster} transcriptome. The Spanki software \citep{spanki} is used for this purpose. In addition to the simulated data study we also perform a comparison for two real datasets: a low and high coverage sequencing experiment using human data and a dataset from drosophila. In all cases, the reads are mapped to the reference transcriptome using Bowtie (version 2.0.6), allowing up to 100 alignments per read. Tophat (version 2.0.9) is also used for Cufflinks.

\subsection{Evaluation of samplers}

We used a simulated dataset from $K= 630$ transcripts (more details are described in Appendix H) and compare the posterior mean estimates between short and long runs. As shown in Figure \ref{fig:rjVScj}, the collapsed sampler exhibits faster convergence than the rjMCMC sampler, hence in what follows we will only present results corresponding to the collapsed sampler. The reader is referred to the supplementary material (Appendices J and K) for further comparisons (including autocorrelation function estimation and prior sensitivity) between our two MCMC schemes.

\subsection{Simulated data}\label{sec:sim}

The input of the Spanki simulator is a set of reads per kilobase (rpk) values per sample. This file is provided under a variety of different generative scenarios. Given the input files, Spanki simulates RNA-seq reads (in fastq format) according to the specified rpk values. Seven scenarios are used to generate the data: two Poisson replicates per condition (scenario 1), three Negative Binomial replicates per condition (scenario 2), 9 Negative Binomial replicates (scenario 3), three Negative Binomial replicates per condition with five times higher variability among replicates compared to scenario 2 (scenario 4), same variability with scenario 4 but a smaller range for the mean  rpk values (scenario 5). The last two scenarios are revisions of the first scenario with smaller fold changes (scenario 6) and large differences in the number of reads between conditions (scenario 7). See supplementary Figure 9 and Appendix K for the details of the ground truth used in our simulations.

\begin{figure}[t]
\centering
\begin{tabular}{c}
\includegraphics[width = 0.08\textwidth,angle =270]{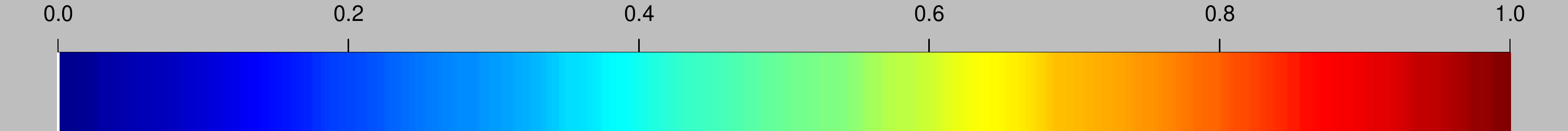}\\
\includegraphics[width = 0.99\textwidth]{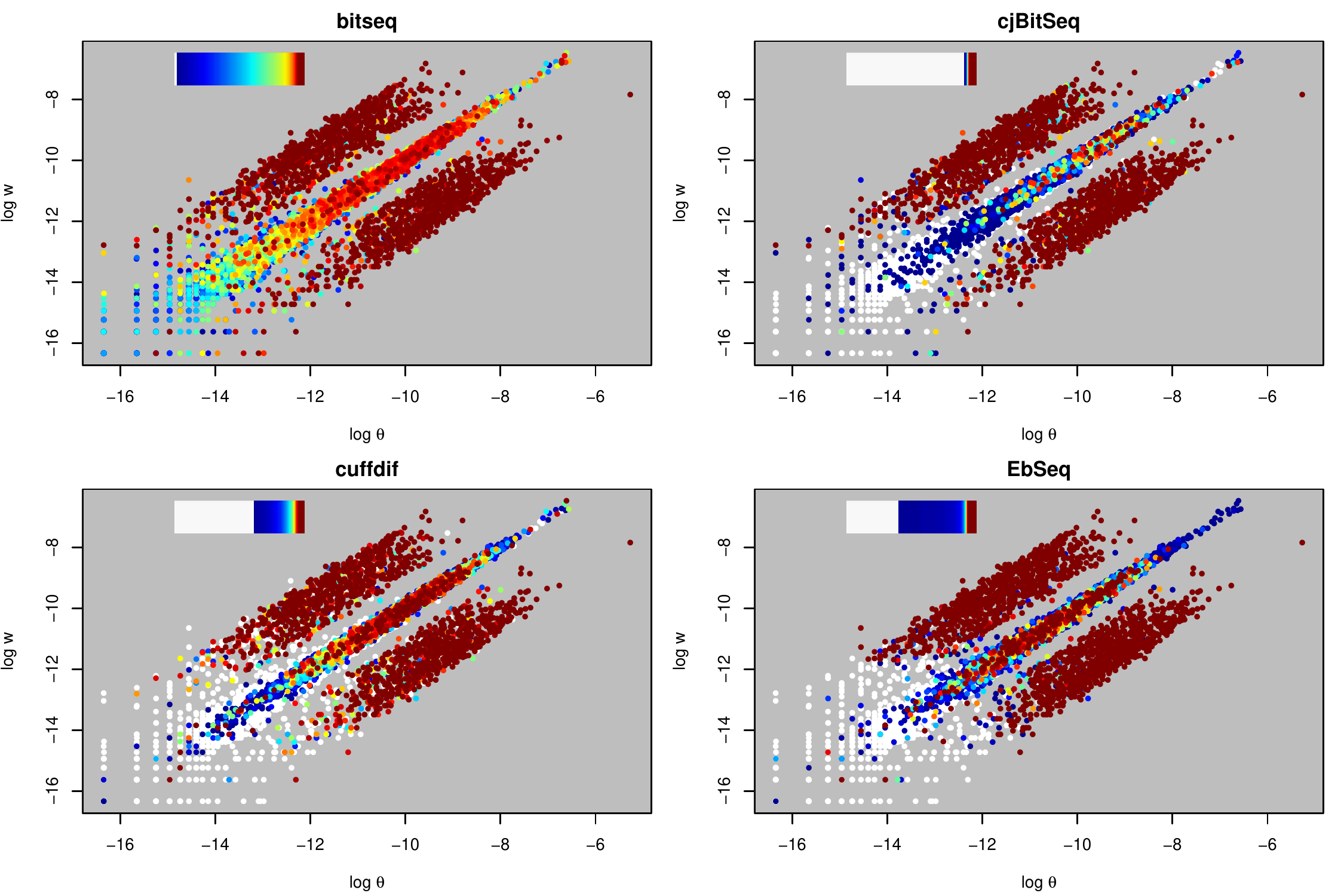}  
\end{tabular}
      \caption{True log-relative expression values for Scenario 3 (average of nine replicates per condition, $\approx 24$ million reads in total). The color corresponds to the evidence of differential expression according to each method and the legend shows the relative frequency of colors.}\label{fig:trueValuesc}
\end{figure}

Next, we applied the proposed method and compared our results against Bitseq, Cuffdiff and EBSeq, using (a) ROC, (b) SAR-measure \citep{rocr} and (c) Power-to-achieved-FDR curves, as shown in Figure \ref{fig:nbNEW}. For the comparison in (c) the FDR decision of our model is based on the rule \eqref{eq:decision}. Moreover, only methods that control the FDR are taken into account in (c), hence BitSeq Stage 2 is excluded. In addition to this FDR control procedure, we also provide adjusted rates after imposing a threshold to the log-fold change of the cjBitSeq sampler: all transcripts with estimated absolute log2 fold change less than 1 are filtered out (results correspond to the blue dashed line). A typical behaviour of the compared methods is illustrated in Figure \ref{fig:trueValuesc}, displaying true expression values used in Scenario 3. We conclude that our method infers an almost ideal classification, something that is not the case for the other methods despite the large number of replicates used.

In order to summarize our findings, Figure \ref{fig:cauc} displays the complementary area under the curve for each scenario. Averaging across all simulation scenarios, we conclude that our method is almost 2 times better than BitSeq Stage 2, 3 times better than EBSeq and 3.2 times better than Cuffdif. Finally, we compare the estimated relative abundance of transcripts against the true values used to generate the data, using the average across all replicates of a given condition. Figure \ref{fig:cauc} (bottom) displays the Mean Absolute Error between the logarithm of true transcript expression and the corresponding estimates according to each method. We see that cjBitSeq, BitSeq stage 1 and RSEM exhibit a similar behaviour, while all performing significantly better than Cufflinks. Although there is no consistent ordering among the first three methods, averaging across all experiments we conclude that  cjBitSeq is ranked first.

We have also tested the sensitivity of our method with respect to the prior distributions of differential expression \eqref{eq:jeffreys_prior} by setting $\pi = 0.5$ (see supplementary Figure 11 and the corresponding discussion in Appendix K). We conclude that the prior distribution does not affect the ranking of methods both for differential and expression estimation.

\begin{figure}[t]
\centering
\begin{tabular}{c}
\includegraphics[scale=0.32]{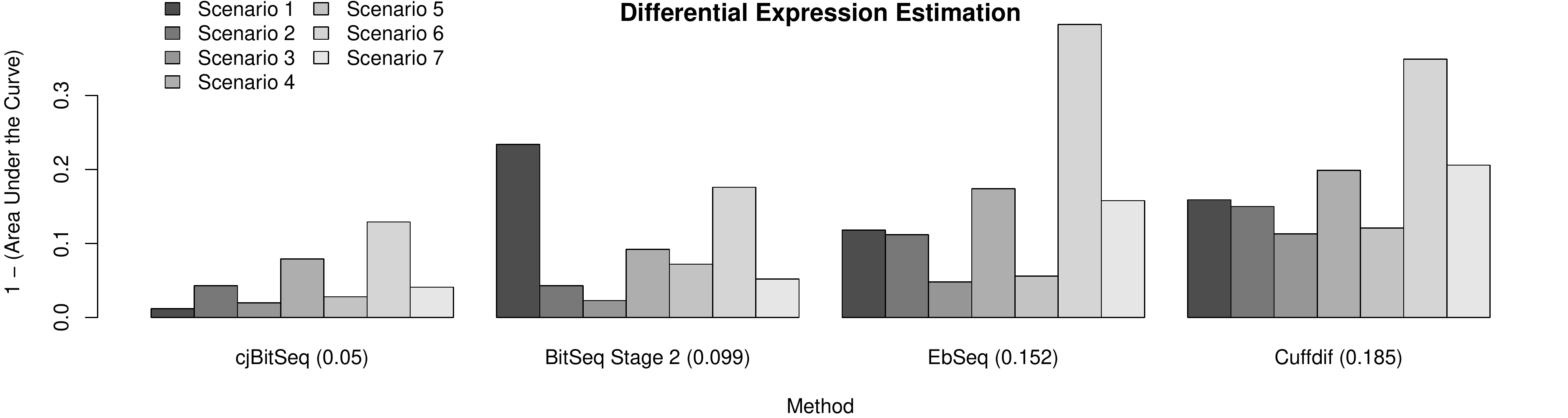}\\
\includegraphics[scale=0.32]{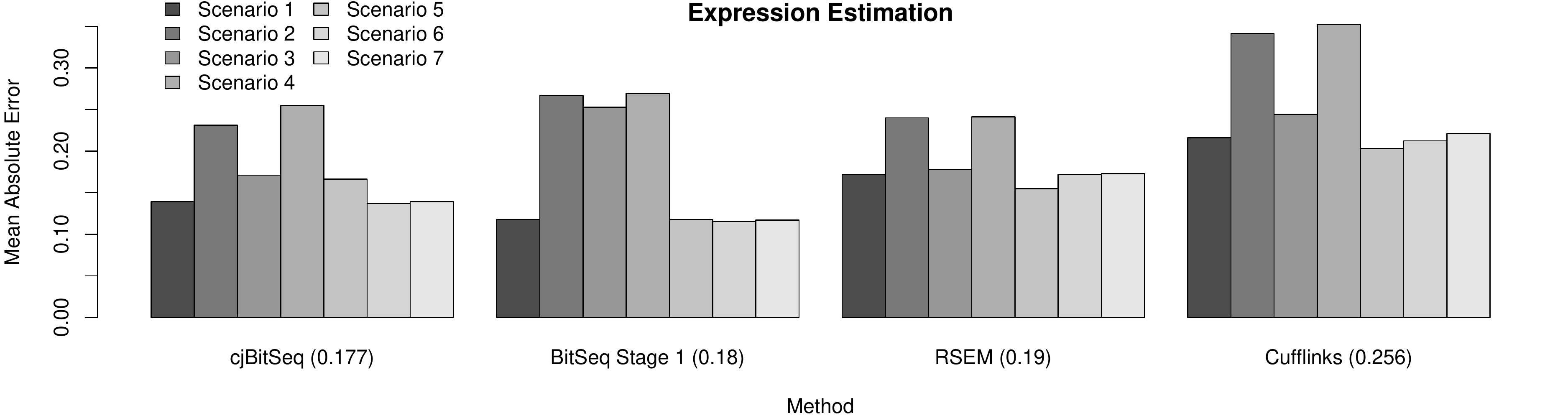}
\end{tabular}
\caption{Simulated data: Ranking of methods with respect to estimation of differential expression (top) and the log of relative expression (bottom). The methods are ordered according to the averaged complementary Area Under the Curve and Mean Absolute Error (shown in parenthesis).}\label{fig:cauc}
\end{figure}

\subsection{Human data}\label{sec:real2}

This example demonstrates the proposed algorithm to differential analysis of lung fibroblasts in response to loss of the developmental transcription factor HOXA1, see \citet{cuffdif2} for full details. There are three biological replicates in the two conditions. The experiment is carried out using two sequencing platforms: HiSeq and MiSeq, where MiSeq produced only $23\%$ of the number of reads in the HiSeq data. Here, these reads are mapped to hg19 (UCSC annotation) using Bowtie 2, consisting of $K = 48009$ transcripts. In total, there are 96969106 and 21271542 mapped reads for HiSeq and MiSeq sequencers, respectively. \citet{cuffdif2} demonstrated the ability of Cuffdiff2 to recover the transcript dynamics from the HOXA1 knockdown when using the significantly smaller amount of data generated by MiSeq compared to HiSeq. 

Applying cjBitSeq to the MiSeq data recovers $50.2\%$ of the DE transcripts from HiSeq. On the other hand, there are 183  transcripts reported as DE with the MiSeq data but not the HiSeq data (Figures \ref{fig:hoxa}(a) and \ref{fig:hoxa}(b)). The corresponding percentages for BitSeq stage 2, EBSeq and Cuffdiff are $43.3\%$, $40.6\%$ and $15.7\%$, respectively (see Figures \ref{fig:hoxa}(b), \ref{fig:hoxa}(c) and \ref{fig:hoxa}(d)). We conclude that the proposed model returns the largest proportion of consistently DE transcripts between platforms. The number of transcripts which are simultaneously reported as DE is equal to 2173 and 390 for HiSeq and MiSeq data, respectively (see Figures \ref{fig:hoxa1}.a and \ref{fig:hoxa1}.b). Finally, cjBitSeq and EBSeq provide the most highly correlated classifications (see Table 1 of supplementary material).

\begin{figure}[t]
\centering
\begin{tabular}{cccc}
cjBitSeq& 
BitSeq& 
EBSeq&
CuffDiff\\
\includegraphics[scale=0.18]{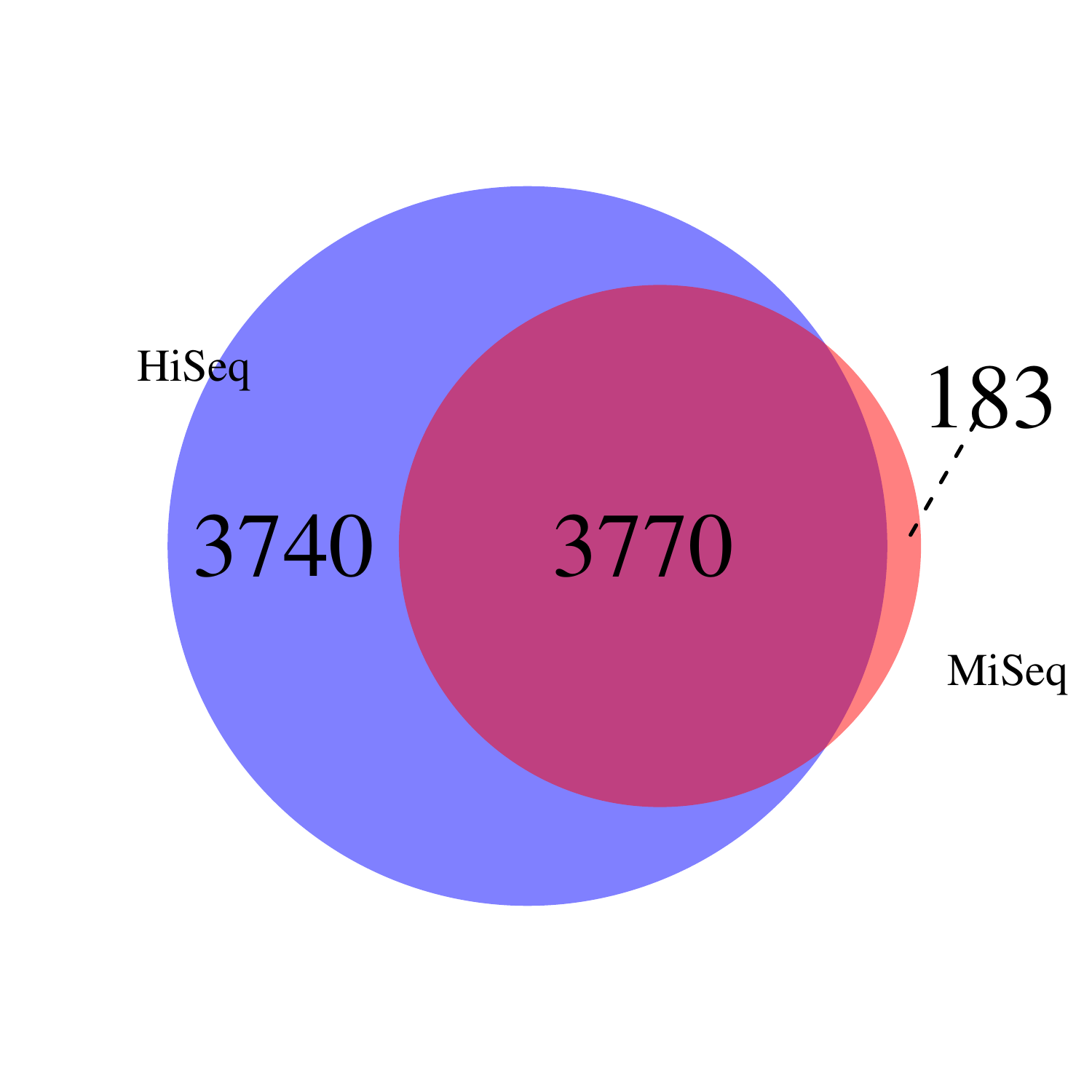} &
\includegraphics[scale=0.18]{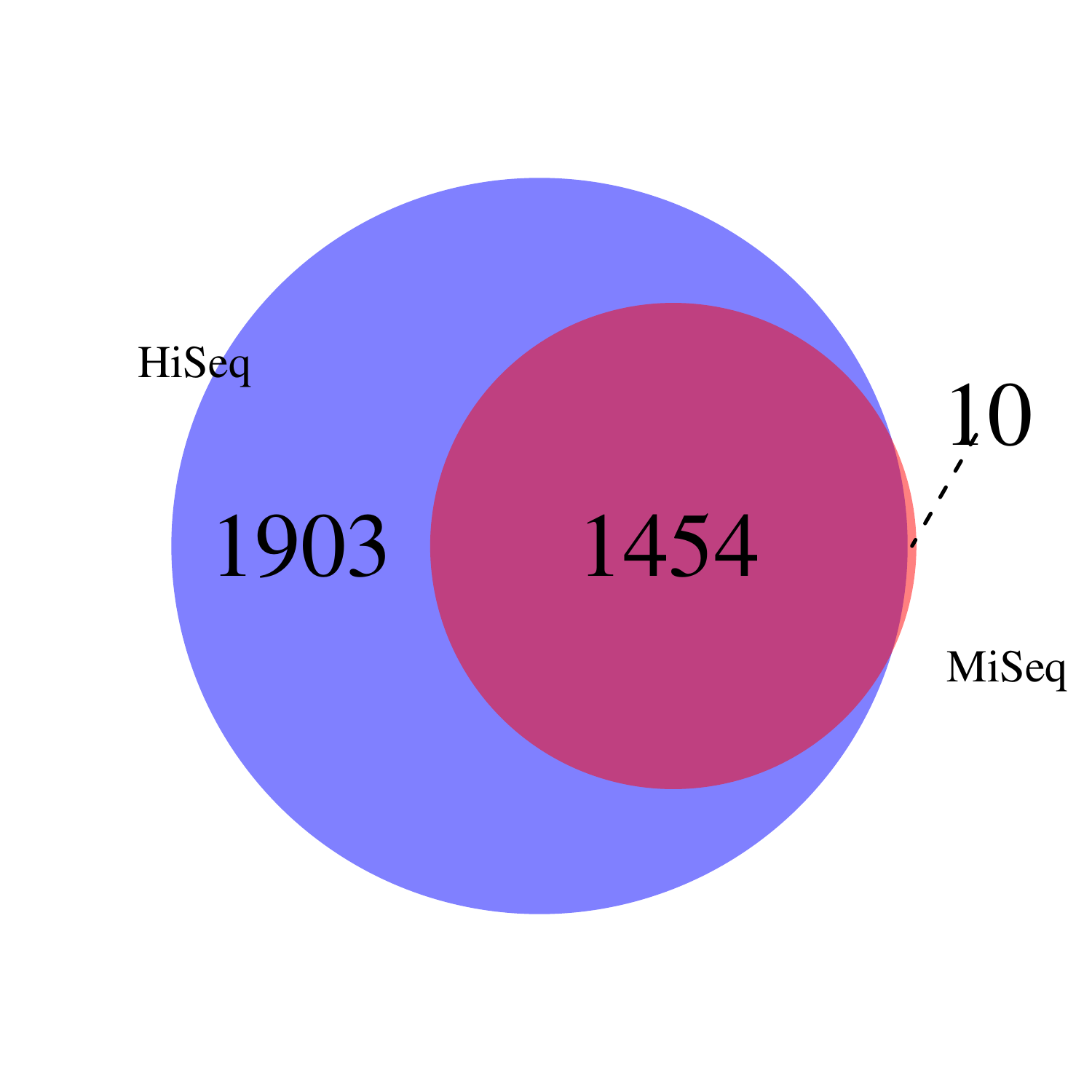} &
\includegraphics[scale=0.18]{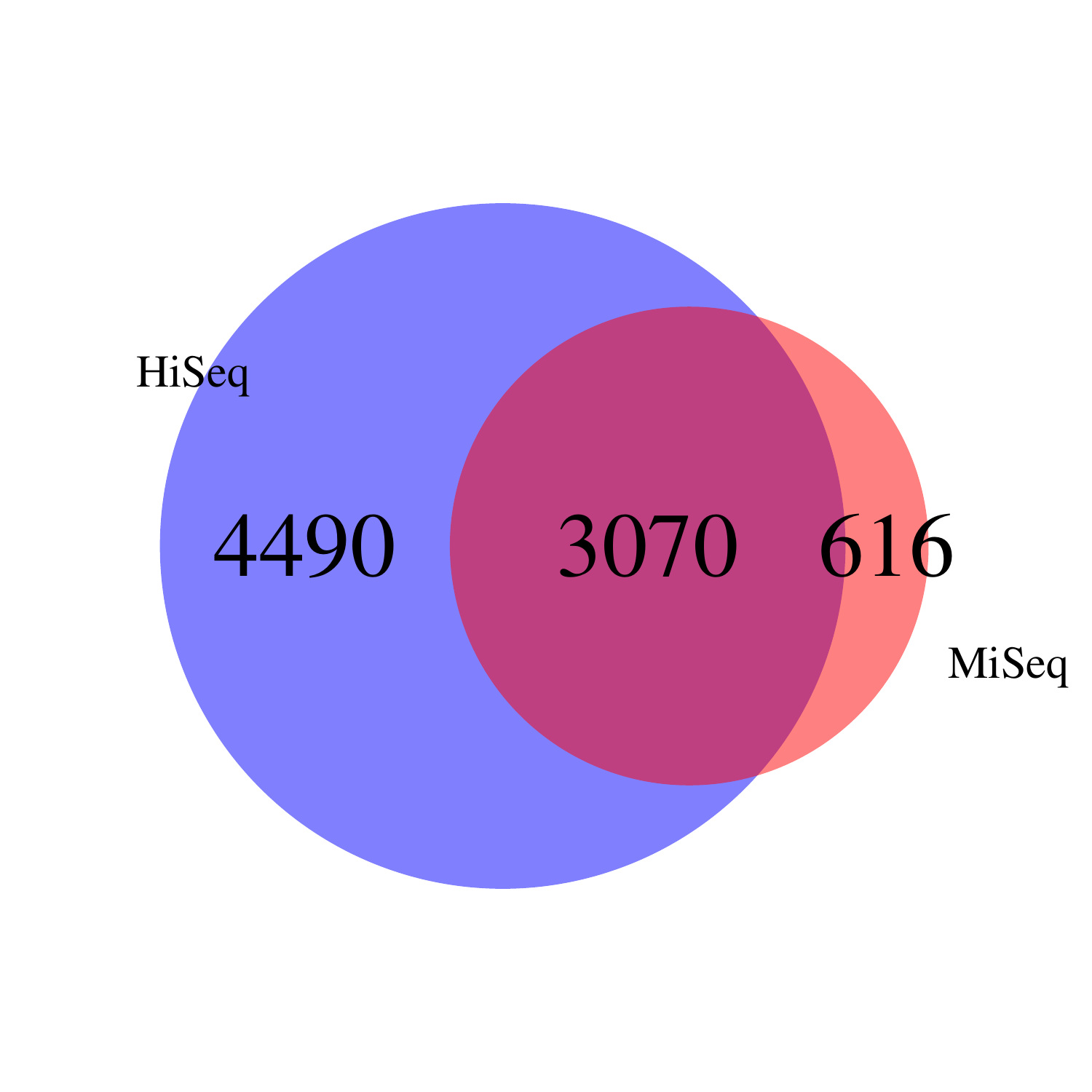}&  
\includegraphics[scale=0.18]{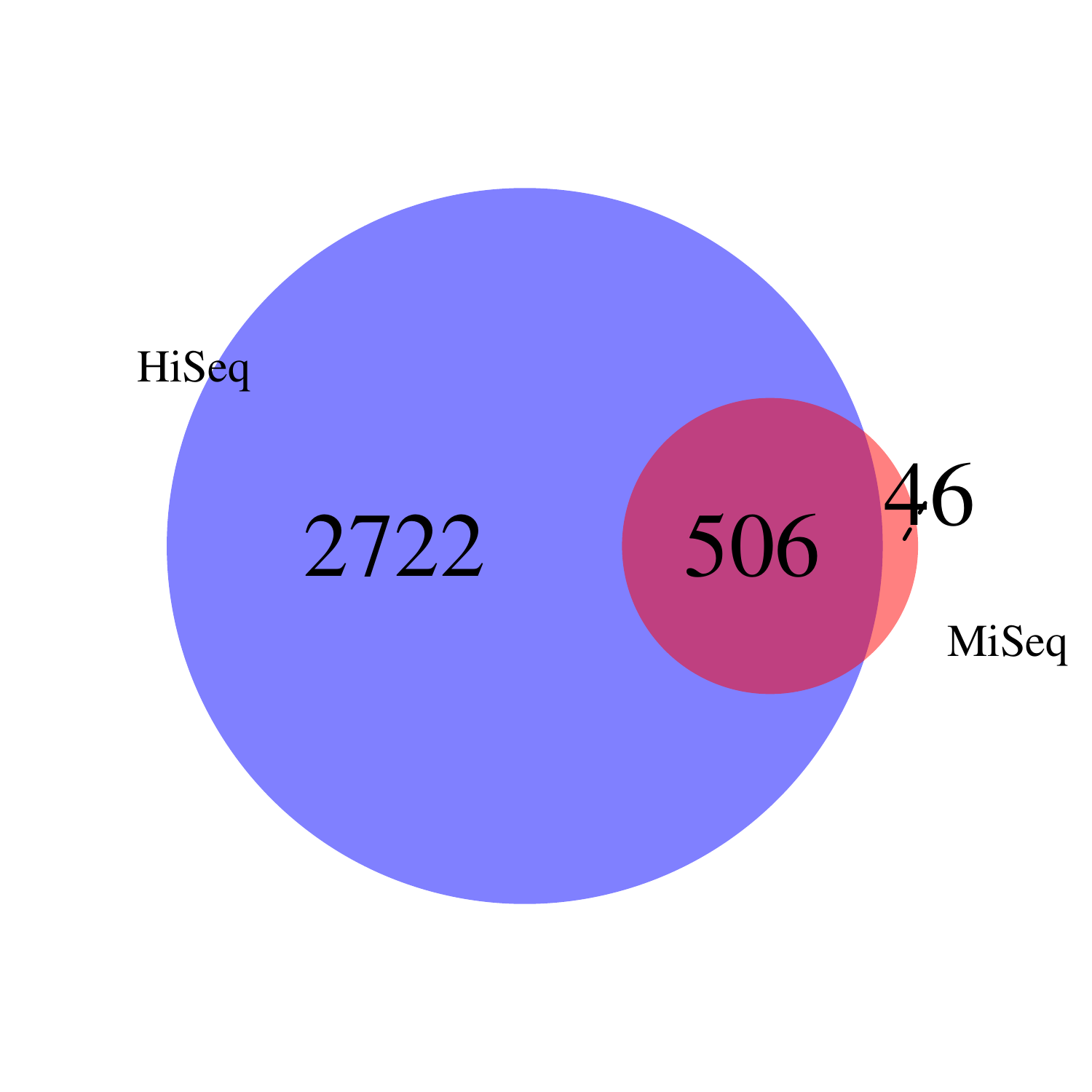}\\
(a) $50.2\%$ & 
(b) $43.3\%$ & 
(c) $40.6\%$  &
(d) $15.7\%$
\end{tabular}
\caption{HOXA1 knockdown dataset: Significant transcript list returned by cjBitSeq (a), BitSeq (b), EBSeq (c) and CuffDiff (d) when using HiSeq (blue) and MiSeq (red) data. FDR for cjBitSeq, EBSeq and CuffDiff set to $0.05$, while for BitSeq: $\mbox{PPLR} < 0.025$ or $\mbox{PPLR}>0.975$}\label{fig:hoxa1}
\end{figure}

\begin{figure}[t]
\centering
\begin{tabular}{cc}
\hspace{-2ex}\includegraphics[scale=0.40]{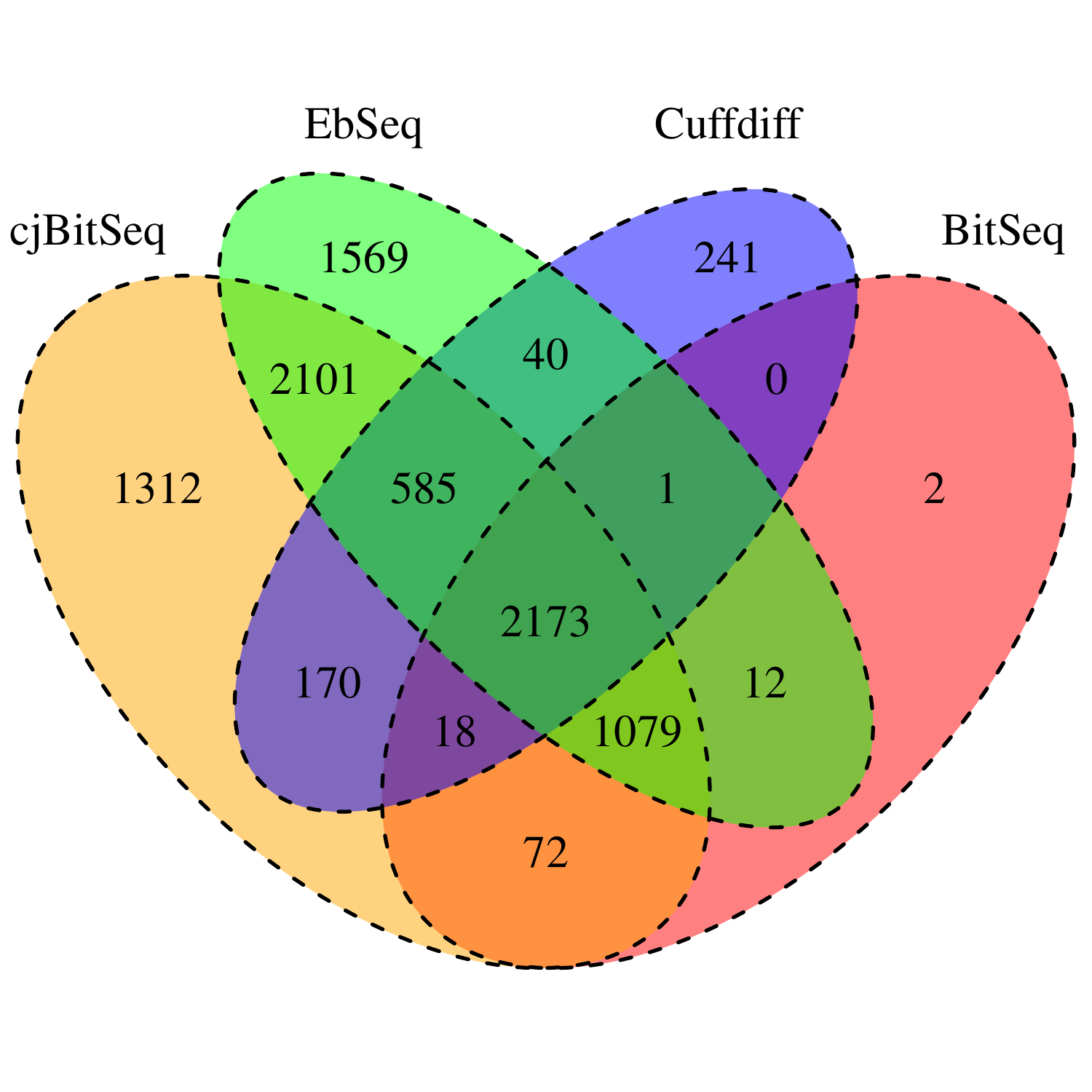} & 
\hspace{-2ex}\includegraphics[scale=0.40]{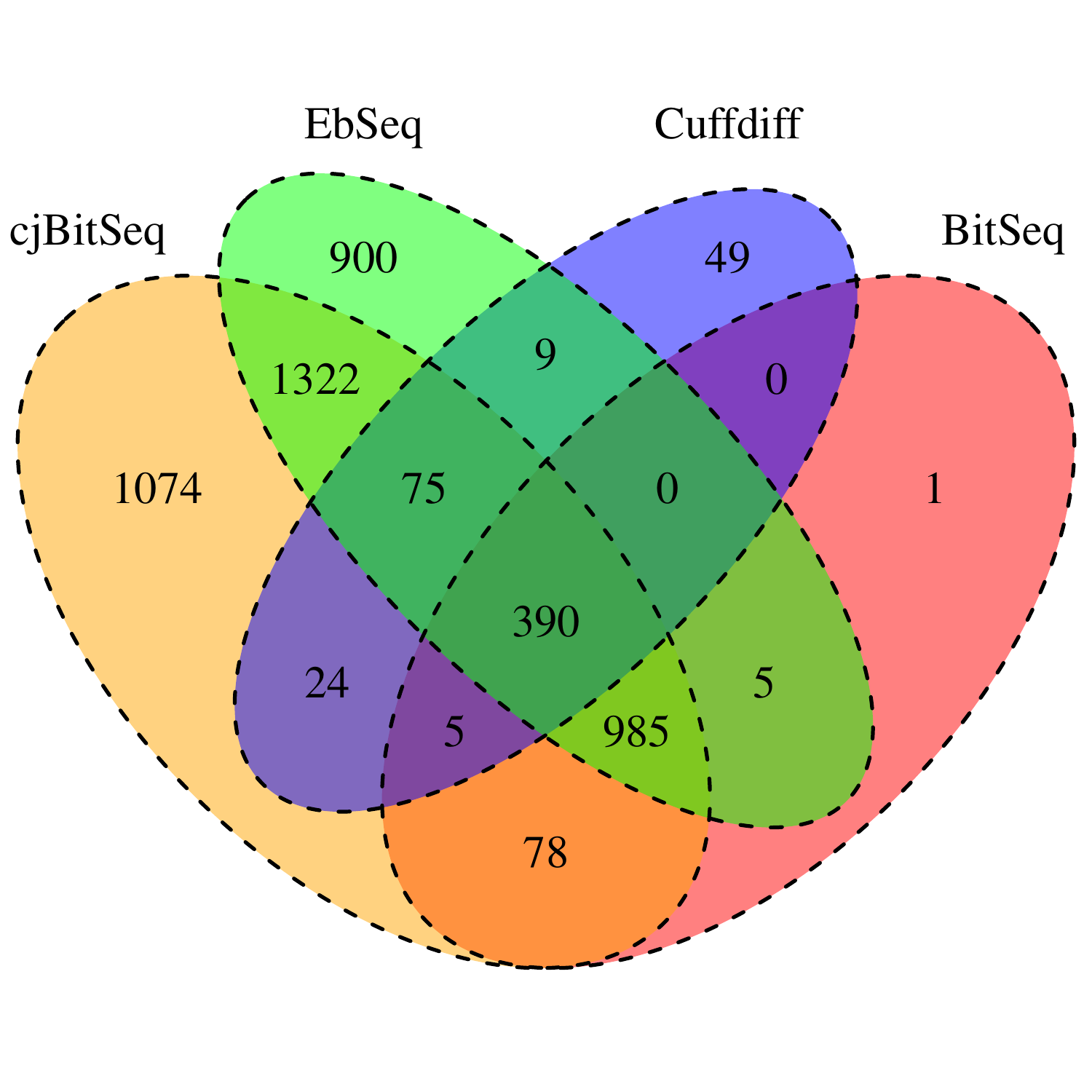} \\
\hspace{-2ex}(a) HiSeq data &
\hspace{-2ex} (b) MiSeq data
\end{tabular}
\caption{HOXA1 knockdown dataset: Contiguity of methods when using HiSeq (a) and MiSeq (b) data. FDR for cjBitSeq, EBSeq and CuffDiff set to $0.05$, while for BitSeq: $\mbox{PPLR} < 0.025$ or $\mbox{PPLR}>0.975$}\label{fig:hoxa}
\end{figure}

\section{Discussion}\label{sec:dis}

We have proposed a probabilistic model for the simultaneous estimation of transcript expression and differential expression between conditions. Building upon the BitSeq framework, the new Bayesian hierarchical model is conjugate for fixed dimension variables. A by-product is a new interpretation of the Generalized Dirichlet distribution, which naturally appears in \eqref{eq:u_gibbs} as the full conditional distribution of a random variable describing one of the free parameters corresponding to two proportion vectors under the constraint that some of the weights are equal to each other. We implemented two MCMC samplers, a reversible-jump and collapsed Gibbs sampler, and we found the collapsed Gibbs sampler to converge faster. To greatly reduce the dimensionality of the parameter space for inference we developed a transcript clustering approach which allows inference to be carried out independently on subsets of transcripts that share aligned reads. According to Lemma 3 in the supplementary material (Appendix H), this clustered version of the vanilla algorithm converges to the proper marginal distribution for each cluster. Thus, the algorithm has the nice property that it can be run in parallel for each cluster, while the memory requirements are quite low, providing a simple parallelisation option.

The applications to simulated and real RNA-seq data reveals that the proposed method is highly competitive with the current state of the art software dealing with DE analysis at the transcript level. Note that the simulated data was generated under a variety of different scenarios and including different levels of replication and biological variation. We simulated transcript RPK values with variability following either the Poisson or the Negative Binomial distribution with various levels for the dispersion around the mean. We conclude that our method is quite robust in expression estimation and in classifying transcripts as DE or not. Compared to standard two-stage pipelines it is ranked as the best method under a wide range of generative scenarios.

RNA-seq data are usually replicated such that there is more than one datasets available for each condition. In such a way, biological variability between repetitions of the same experiment can be taken into account. The amount of variability between replicates can be quite high depending on the experimental conditions. Two-stage approaches for estimating differential expression are strongly focused on modelling this inter-replicate variability. This is not the case for our method at present and all replicates of a given condition are effectively pooled together prior to inference. Modelling the variability between replicates would significantly increase the complexity of our approach as it is technically challenging to retain conjugacy. However, according to our simulation studies, we have found that pooling replicates together and jointly estimating expression and differential expression balances the loss through ignoring variability between replicates in many cases. Nevertheless, an extension to also model inter-replicate variability would be very interesting and could be expected to improve performance when there is high inter-replicate dispersion. 

The proposed method was developed focusing to the comparison of two conditions and its extension to more general settings is another interesting area for future research. 
A remarkable property of the parameterization introduced in Equations \eqref{eq:theta} and \eqref{eq:w} is that its extension is straightforward when $J > 2$: it can be shown that in this case there is one parameter of constant dimension and $J-1$ parameters of varying dimension. Let $\bs u = \bs u^{(1)}$ be the vector of relative abundances for condition 1. For a given condition $j = 2,\ldots,J$ define a vector $\bs v_j$ containing the expression of transcripts not being equal to any of the previous conditions $1,\ldots,j-1$. Note that $\bs v_j$ is a random variable with varying length (between 0 and $K$). Furthermore, for $j\geqslant 2$ define the vectors $\bs u_k^{(j)}$, $k = 1,\ldots,j-1$, containing the expression of  transcripts shared with condition $k$ but not with $1,\ldots,k-1$. It follows that $\bs u_k^{(j)}$ can be written as a function of $\bs u^{(1)}$ and $\bs v_k$, $k = 1,\ldots,j-1$. Hence, the relative transcript expression vector for condition $j$  can be expressed as a suitable permutation of $(\bs u_1^{(j)},\ldots,\bs u_{j-1}^{(j)},\bs v_{j})$. However, the question of whether the proposed model stays conjugate for fixed dimension updates remains an open problem. If yes, the design of more sophisticated move-types between different models would be also crucial to the convergence of the algorithm since the search space is increased.

The source code of the proposed algorithm is compiled for LINUX distributions and it is available at {\tt https://github.com/mqbssppe/cjBitSeq}. The simulation pipeline is available at {\tt https://github.com/ManchesterBioinference/cjBitSeq\_benchmarking}. Cluster discovery and MCMC sampling is coded in R and C++, respectively. Parallel runs of the MCMC scheme are implemented using the GNU parallel \citep{parallel} shell tool. The computing times needed for our datasets are reported in supplementary Table 2.

\section*{Acknowledgments}
The research was supported by MRC award MR/M02010X/1, BBSRC award BB/J009415/1 and EU FP7 project RADIANT (grant 305626). The authors acknowledge the assistance given by IT Services and the use of the Computational Shared Facility at The University of Manchester. We also thank the editor and two anonymous reviewers for their helpful comments and suggestions which helped us to improve the manuscript.

\noindent
\textbf{Supplementary material:}
We provide the proofs of our Lemmas and Theorems, a detailed description of the reversible jump proposal and the Gibbs updates of state vector of the collapsed sampler. Also included are details of alignment probabilities and some useful properties of the Generalized Dirichlet distribution. We also perform various comparisons between the rjMCMC and collapsed samplers and examine their prior sensitivity. Finally we describe the generative schemes for the simulation study and some guidelines for the practical implementation of the algorithm.

\bibliographystyle{rss.bst}      
\bibliography{cjBitSeq}   

\begin{thebibliography}{34}
\expandafter\ifx\csname natexlab\endcsname\relax\def\natexlab#1{#1}\fi
\expandafter\ifx\csname url\endcsname\relax
  \def\url#1{\texttt{#1}}\fi
\expandafter\ifx\csname urlprefix\endcsname\relax\def\urlprefix{URL}\fi

\bibitem[{Anders and Huber(2010)}]{Anders:2010}
Anders, A. and Huber, W. (2010) Differential expression analysis for sequence
  count data.
\newblock \textit{Genome Biology}, \textbf{11}, R106.

\bibitem[{Benjamini and Hochberg(1995)}]{benjamini1995}
Benjamini, Y. and Hochberg, Y. (1995) Controlling the false discovery rate: a
  practical and powerful approach to multiple testing.
\newblock \textit{Journal of the Royal Statistical Society. Series B
  (Methodological)}, 289--300.

\bibitem[{Connor and Mosimann(1969)}]{connor}
Connor, R.~J. and Mosimann, J.~E. (1969) Concepts of independence for
  proportions with a generalization of the {D}irichlet distribution.
\newblock \textit{Journal of the American Statistical Association},
  \textbf{64}, 194--206.

\bibitem[{Gelfand and Smith(1990)}]{gelfand}
Gelfand, A. and Smith, A. (1990) Sampling-based approaches to calculating
  marginal densities.
\newblock \textit{Journal of American Statistical Association}, \textbf{85},
  398--409.

\bibitem[{Geman and Geman(1984)}]{geman}
Geman, S. and Geman, D. (1984) Stochastic relaxation, {G}ibbs distributions,
  and the {B}ayesian restoration of images.
\newblock \textit{IEEE Transactions on Pattern Analysis and Machine
  Intelligence}, \textbf{PAMI-6}, 721--741.

\bibitem[{Glaus et~al.(2012)Glaus, Honkela and Rattray}]{bitseq}
Glaus, P., Honkela, A. and Rattray, M. (2012) Identifying differentially
  expressed transcripts from {R}{N}{A}-{S}eq data with biological variation.
\newblock \textit{Bioinformatics}, \textbf{28}, 1721--1728.

\bibitem[{Green(1995)}]{Green:95}
Green, P.~J. (1995) Reversible jump {M}arkov chain {M}onte {C}arlo computation
  and {B}ayesian model determination.
\newblock \textit{Biometrika}, \textbf{82}, 711--732.

\bibitem[{Gu et~al.(2014)Gu, Wang, Halakivi-Clarke, Clarke and Xuan}]{badge}
Gu, J., Wang, X., Halakivi-Clarke, L., Clarke, R. and Xuan, J. (2014)
  {B}{A}{D}{G}{E}: A novel {B}ayesian model for accurate abundance
  quantification and differential analysis of {R}{N}{A}-{S}eq data.
\newblock \textit{BMC Bioinformatics 2014}, \textbf{15}.

\bibitem[{Hensman et~al.(2015)Hensman, Papastamoulis, Glaus, Honkela and
  Rattray}]{bitseqVB}
Hensman, J., Papastamoulis, P., Glaus, P., Honkela, A. and Rattray, M. (2015)
  Fast and accurate approximate inference of transcript expression from
  {R}{N}{A}-seq data.
\newblock \textit{Bioinformatics}, \textbf{31}, 3881--3889.

\bibitem[{Jeffreys(1946)}]{jeffreys}
Jeffreys, H. (1946) An invariant form for the prior probability in estimation
  problems.
\newblock \textit{Proceedings of the Royal Society of London. Series A,
  Mathematical and Physical Sciences}, \textbf{186}, 453--461.

\bibitem[{Jia et~al.(2015)Jia, Guan, Yang, Xiao, Tang, Moravec, Margulies,
  Cappola, Li and Li}]{Jia2015}
Jia, C., Guan, W., Yang, A., Xiao, R., Tang, W. H.~W., Moravec, C.~S.,
  Margulies, K.~B., Cappola, T.~P., Li, C. and Li, M. (2015) {M}eta{D}iff:
  differential isoform expression analysis using random-effects
  meta-regression.
\newblock \textit{BMC Bioinformatics}, \textbf{16}, 1--12.

\bibitem[{Langmead et~al.(2009)Langmead, Trapnell, Pop and Salzberg}]{bowtie}
Langmead, B., Trapnell, C., Pop, M. and Salzberg, S. (2009) Ultrafast and
  memory-efficient alignment of short {D}{N}{A} sequences to the human genome.
\newblock \textit{Genome Biology}, \textbf{10}.

\bibitem[{Leng et~al.(2013)Leng, Dawson, Thomson, Ruotti, Rissman, Smits, Haag,
  Gould, Stewart and Kendziorski}]{EBSeq}
Leng, N., Dawson, J.~A., Thomson, J.~A., Ruotti, V., Rissman, A.~I., Smits,
  B.~M., Haag, J.~D., Gould, M.~N., Stewart, R.~M. and Kendziorski, C. (2013)
  {E}{B}{S}eq: An empirical {B}ayes hierarchical model for inference in
  {R}{N}{A}-{S}eq experiments.
\newblock \textit{Bioinformatics}.

\bibitem[{Li and Dewey(2011)}]{rsem}
Li, B. and Dewey, C.~N. (2011) {R}{S}{E}{M}: accurate transcript quantification
  from {R}{N}{A}-seq data with or without a reference genome.
\newblock \textit{BMC Bioinformatics}, \textbf{12}, 323.

\bibitem[{Lindley(1971)}]{lindley}
Lindley, D.~V. (1971) \textit{Making Decisions}.
\newblock Willey, New York.

\bibitem[{Mortazavi et~al.(2008)Mortazavi, Williams, McCue, Schaeffer and
  Wold}]{mort}
Mortazavi, A., Williams, B., McCue, K., Schaeffer, L. and Wold, B. (2008)
  Mapping and quantifying mammalian transcriptomes by {R}{N}{A}-{S}eq.
\newblock \textit{Nat Methods}, \textbf{5}, 621--628.

\bibitem[{M{\"u}ller et~al.(2006)M{\"u}ller, Parmigiani and Rice}]{fdrISBA}
M{\"u}ller, P., Parmigiani, G. and Rice, K. (2006) {F}{D}{R} and {B}ayesian
  multiple comparisons rules.
\newblock \textit{Proc. Valencia / ISBA 8th World Meeting on Bayesian
  Statistics}.

\bibitem[{M{\"u}ller et~al.(2004)M{\"u}ller, Parmigiani, Robert and
  Rousseau}]{fdrJASA}
M{\"u}ller, P., Parmigiani, G., Robert, C. and Rousseau, J. (2004) Optimal
  sample size for multiple testing.
\newblock \textit{Journal of the American Statistical Association},
  \textbf{99}, 990--1001.

\bibitem[{Nariai et~al.(2013)Nariai, Hirose, Kojima and Nagasaki}]{newvb}
Nariai, N., Hirose, O., Kojima, K. and Nagasaki, M. (2013) {TIGAR}: transcript
  isoform abundance estimation method with gapped alignment of {RNA-Seq} data
  by variational {B}ayesian inference.
\newblock \textit{Bioinformatics}, \textbf{18}, 2292--2299.

\bibitem[{Nicolae et~al.(2011)Nicolae, Mangul, Mandoiu and Zelikovsky}]{isoEM}
Nicolae, M., Mangul, S., Mandoiu, I. and Zelikovsky, A. (2011) Estimation of
  alternative splicing isoform frequencies from {R}{N}{A}-{S}eq data.
\newblock \textit{Algorithms for Molecular Biology}, \textbf{6:9}.

\bibitem[{Papastamoulis et~al.(2014)Papastamoulis, Hensman, Glaus and
  Rattray}]{papVB}
Papastamoulis, P., Hensman, J., Glaus, P. and Rattray, M. (2014) Improved
  variational {B}ayes inference for transcript expression estimation.
\newblock \textit{Statistical applications in Genetics and Molecular Biology},
  \textbf{13}, 213--216.

\bibitem[{Papastamoulis and Iliopoulos(2009)}]{papRJ}
Papastamoulis, P. and Iliopoulos, G. (2009) Reversible jump {M}{C}{M}{C} in
  mixtures of normal distributions with the same component means.
\newblock \textit{Computational Statistics and Data Analysis}, \textbf{53},
  900--911.

\bibitem[{Richardson and Green(1997)}]{Richardson:97}
Richardson, S. and Green, P.~J. (1997) On {B}ayesian analysis of mixtures with
  an unknown number of components.
\newblock \textit{Journal of the Royal Statistical Society: Series B},
  \textbf{59}, 731--758.

\bibitem[{Robinson et~al.(2010)Robinson, McCarthy and Smyth}]{edger}
Robinson, M., McCarthy, D. and Smyth, G. (2010) edge{R}: a {B}ioconductor
  package for differential expression analysis of digital gene expression data.
\newblock \textit{Bioinformatics}, \textbf{26}, 139--140.

\bibitem[{Rossell et~al.(2014)Rossell, Attolini, C, Kroiss and
  Stocker}]{casper}
Rossell, D., Attolini, C, S.-O., Kroiss, M. and Stocker, A. (2014) Quantifying
  alternative splicing from paired-end {R}{N}{A}-{S}equencing data.
\newblock \textit{Annals of Applied Statistics}, \textbf{8}, 309--330.

\bibitem[{Sing et~al.(2005)Sing, Sander, Beerenwinkel and Lengauer}]{rocr}
Sing, T., Sander, O., Beerenwinkel, N. and Lengauer, T. (2005) {R}{O}{C}{R}:
  visualizing classifier performance in {R}.
\newblock \textit{Bioinformatics}, \textbf{21}, 7881.

\bibitem[{Storey(2003)}]{storey2003}
Storey, J.~D. (2003) The positive false discovery rate: A {B}ayesian
  interpretation and the q-value.
\newblock \textit{Annals of statistics}, 2013--2035.

\bibitem[{Sturgill et~al.(2013)Sturgill, Malone, Sun, Smith, Rabinow, Samson
  and Oliver}]{spanki}
Sturgill, J., Malone, J.~H., Sun, X., Smith, H.~E., Rabinow, L., Samson, M.~L.
  and Oliver, B. (2013) Design of {R}{N}{A} splicing analysis null models for
  post hoc filtering of drosophila head {R}{N}{A}-{S}eq data with the splicing
  analysis kit ({S}panki).
\newblock \textit{BMC Bioinformatics}, \textbf{14}, 320.

\bibitem[{Tange(2011)}]{parallel}
Tange, O. (2011) {G}{N}{U} parallel - the command-line power tool.
\newblock \textit{;login: The USENIX Magazine}, \textbf{36}, 42--47.
\newblock \urlprefix\url{http://www.gnu.org/s/parallel}.

\bibitem[{Trapnell et~al.(2013)Trapnell, Hendrickson, Sauvageau, Goff, Rinn and
  Pachter}]{cuffdif2}
Trapnell, C., Hendrickson, D.~G., Sauvageau, M., Goff, L., Rinn, J.~L. and
  Pachter, L. (2013) Differential analysis of gene regulation at transcript
  resolution with {R}{N}{A}-{S}eq.
\newblock \textit{Nature Biotechnology}, \textbf{31}, 46--53.

\bibitem[{Trapnell et~al.(2009)Trapnell, Pachter and Salzberg}]{tophat}
Trapnell, C., Pachter, L. and Salzberg, S. (2009) Top{H}at: discovering splice
  junctions with {R}{N}{A}-{S}eq.
\newblock \textit{Bioinformatics}, \textbf{25}, 1105--1111.

\bibitem[{Trapnell et~al.(2010)Trapnell, Williams, Pertea, Mortazavi, Kwan, van
  Baren, Salzberg, Wold and Pachter}]{cuffdif}
Trapnell, C., Williams, B.~A., Pertea, G., Mortazavi, A., Kwan, G., van Baren,
  M.~J., Salzberg, S.~L., Wold, B.~J. and Pachter, L. (2010) Transcript
  assembly and quantification by {R}{N}{A}-{S}eq reveals unannotated
  transcripts and isoform switching during cell differentiation.
\newblock \textit{Nature Biotechnology}, \textbf{28}, 511--515.

\bibitem[{Wong(1998)}]{Wong:98}
Wong, T. (1998) Generalized {D}irichlet distribution in {B}ayesian analysis.
\newblock \textit{Applied Mathematics and Computation}, \textbf{97}, 165--181.

\bibitem[{Wong(2010)}]{Wong:10}
--- (2010) Parameter estimation for generalized {D}irichlet distributions from
  the sample estimates of the first and the second moments of random variables.
\newblock \textit{Computational Statistics and Data Analysis}, \textbf{54},
  1756--1765.

\end{thebibliography}

\appendix
\section*{Appendices}
\addcontentsline{toc}{section}{Appendices}
\renewcommand{\thesubsection}{\Alph{subsection}}
\numberwithin{equation}{section}

\section{Alignment probability}

In this section the component specific density \eqref{standardBitseq} is defined. For single-end reads, let $\ell_i>0$ denotes the length of read $x_i$, $i = 1,\ldots,n$. Assume that $x_i$ aligns at some position $p$ of a given transcript $k$, $k = 1,\ldots,K$ and that the corresponding transcript length equals to $L_k>0$. Note that both $L_k$, $\ell_i$ are known quantities. The general form of observing this alignment equals to
\begin{equation}
f_k(x_i) = P(x_i = p|k) = \frac{b_k(p)}{\sum_{j=1}^{L_k - \ell_i + 1}b_k(j)},
\end{equation}
where $b_k(j)$ denotes the bias for a particular position $p$ on transcript $k$. In case of a Uniform read distribution, the previous equation reduces to:
\begin{equation}
f_k(x_i) = \frac{1}{L_k - \ell_i + 1}.
\end{equation}
More complex choices are also available. In particular, a separate variable length Markov is used to capture the position and sequence specific biases for the $5'$ and $3'$ ends of the fragment. For more details the reader is referred to Glaus et al.~(2012). 

In case of paired-end reads, the fragment length $\ell$ is also taken into account. The fragment length distribution $f(\ell|k)$ is assumed to be log-normal with parameters given by the user or estimated from read pairs with only a single transcript alignment. In this case the alignment probability of a read pair is given as
\begin{equation}
f_k(x_i) = f_k(x_i = p, \ell) = f(\ell|k)\frac{b_k(p)}{\sum_{j=1}^{L_k - \ell_i + 1}b_k(j)}.
\end{equation}

Finally, the alignment probabilities also take into account base-calling errors using the Phred score. For full details see Glaus et al.~(2012). In our presented examples we assumed the Uniform read distribution.

\begin{figure}[t]
\centering
\begin{tabular}{c}
\includegraphics[width = 0.99\textwidth]{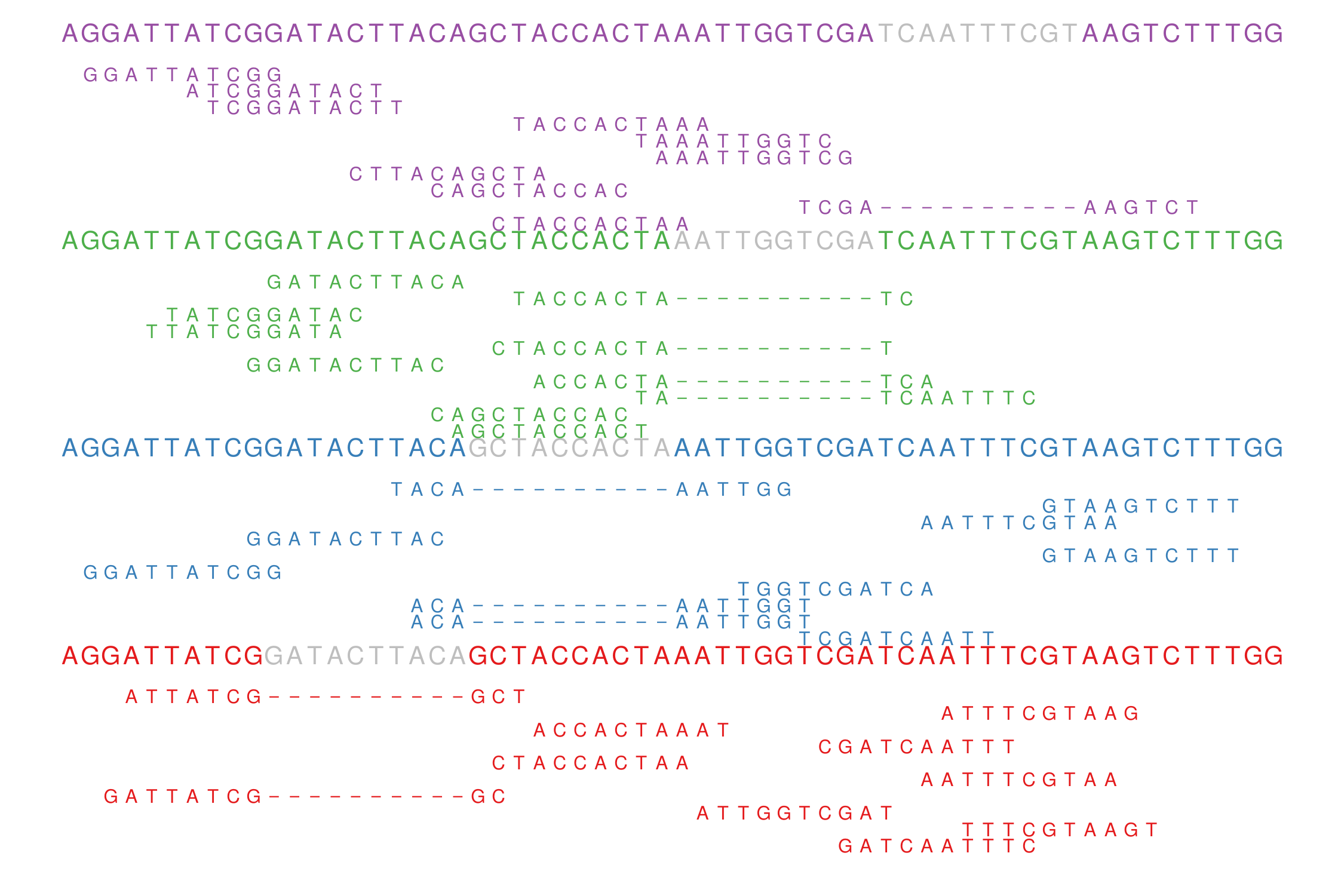}
\end{tabular}
      \caption{Illustration of the RNA-seq sampling scheme using single reads and a small set of four transcripts (red, blue, green and purple). Gray color corresponds to  skipped regions (exons). From each transcript we simulated 10 reads each one consisting of 10 base pairs, displayed under each transcript.}\label{fig:reads}
\end{figure}

The sampling scheme of the RNA-seq procedure for single-end reads is displayed in Figure \ref{fig:reads}. The four long sequences of letters correspond to transcripts which share specific parts of their sequence. The gray coloured regions are skipped, so each transcript is consisting only from the remaining region (coloured in red, blue, green and purple). The short reads are randomly generated sequences from each transcript. Note that most reads align to more than one transcript.

\section{The Generalized Dirichlet distribution}\label{sec:generalizeddirichlet}

This generalization of the Dirichlet distribution was originally introduced by Connor and Mossiman (1969). The most prominent difference with a typical Dirichlet is that the Generalized Dirichlet family has a richer covariance structure. For example, only negative correlation between any pairs of variables is allowed under the Dirichlet distribution, while the Generalized Dirichlet  can also allow positive correlation. Another difference is that any permutation of a vector of proportions which follows a Dirichlet distribution is also distributed as a Dirichlet distribution. However, this is not necessarily true for the Generalized Dirichlet distribution.

In this paper we follow the parameterization of the Generalized Dirichlet distribution introduced by Wong (1998). Let $\bs X = (X_1,\ldots,X_k;X_{k+1})$, with $\sum_{j=1}^{k}X_j \leqslant 1$,  $X_j\geqslant 0$ for $j = 1,\ldots,k$ and $X_{k+1} = 1 - X_1 - \ldots-X_k$. Assume that $\alpha_j>0$, $\beta_j >0$ be a set of parameters, $j = 1,\ldots,k$. Then, $\bs X \sim \mathcal GD(\alpha_1,\ldots,\alpha_k;\beta_1,\ldots,\beta_k)$ if the probability density function is written as
\begin{equation}\label{eq:gdpdf}
f_{\bs X}(\bs x) = \begin{cases}
\prod_{j=1}^{k}\frac{x_j^{\alpha_j-1}(1-x_1-\ldots-x_j)^{\gamma_j}}{B(\alpha_j,\beta_j)},& \sum_{j=1}^{k}x_j\leqslant 1, x_j \geqslant 0, j = 1,\ldots,k\\
0,& \mbox{otherwise}
\end{cases}
\end{equation}
where $\gamma_j = \beta_j - \alpha_{j+1}-\beta_{j+1}$ for $j = 1,\ldots,k - 1$, and $\gamma_k = \beta_k - 1$ and $B(\cdot,\cdot)$ denotes the Beta function. Note that when 
\begin{equation}\label{eq:gdequalsd}
\beta_j = \alpha_{j+1} + \beta_{j+1},\quad j = 1,\ldots,k-1,
\end{equation}
a Generalized Distribution reduces to a standard Dirichlet distribution. 

An important property of both Dirichlet and Generalized Dirichlet is that they can be constructed using a stick breaking process. The following result is from Connor and Mossiman (1969): Define $\zeta_1 = X_1$ and $\zeta_j = X_j/V_{j-1}$ for $j = 2,3,\ldots,k$, where $V_j = 1-X_1-\ldots-X_{j-1}$. If $\zeta_j\sim\mbox{Beta}(\alpha_j,\beta_j)$, independent for $j = 1,\ldots,k$. Hence we can construct $X$ as follows:
\begin{eqnarray*}
X_1 &=& \zeta_1\\
X_j &=& \zeta_j(1-X_1-\ldots-X_{j-1}) = \zeta_j\prod_{i=1}^{j-1}(1-\zeta_i), j = 2,3,\ldots,k\\
X_{k+1} & = & 1 -\prod_{i=1}^{k}(1-\zeta_i).
\end{eqnarray*}
In this case: $\bs X = (X_1,\ldots,X_k;X_{k+1})\sim\mathcal{GD}(\alpha_1,\ldots,\alpha_k;\beta_1,\ldots,\beta_k)$ (Connor and Mossiman, 1969). Notice that if $\beta_j = \sum_{k=j+1}^{k+1} \alpha_k$ and also define $\beta_{k+1} = \alpha_{k+1}$ for a given $\alpha_{k+1}>0$, then $\bs X\sim \mathcal{D}(\alpha_1,\ldots,\alpha_k,\alpha_{k+1})$. 

The previously described construction is closely related to the notion of neutrality which was also introduced by Connor and Mossiman (1969): ``a neutral vector of proportions do not influence the proportional division of the remaining interval among the remaining variables''. In particular: a vector of proportions is completely neutral if and only if $\zeta_i$'s are mutually independent (Theorem 2, Connor and Mossiman, 1969). The concept of complete neutrality as well as the representation through the $\zeta$ random variables characterize both the Dirichlet and Generalized Dirichlet distributions and it will be useful for the proof of Theorem 1.

\section{Proof of Theorem 1}

We start with the derivation of the marginal distribution of $\bs\theta$. According to \eqref{eq:theta}, for any given state vector $c$, $\bs\theta$ can be expressed as a suitable permutation of a random variable $\bs u~\sim \mathcal D(\alpha_1,\ldots,\alpha_K)$. Thus, we can write that:
\begin{equation}\label{eq:thetaMarginal}
f(\bs\theta) = \sum_{c\in\mathcal C}P(c)f(\tau^{-1}_c\bs u).
\end{equation}
Now recall that any permutation of $\bs u$ is also distributed according to a Dirichlet distribution and its parameters are just the corresponding permutation of the initial parameters. This means that $\tau^{-1}_c\bs u\sim\mathcal D_{K-1}(\tau_c^{-1}\bs\alpha)$, where $\bs\alpha = (\alpha_1,\ldots,\alpha_K)$. Hence, in the general case where $\bs\alpha$ is an arbitrary vector of strictly positive numbers, \eqref{eq:thetaMarginal} is a mixture of Dirichlet distributions. Now notice that if $\alpha_k = \alpha > 0$, for all $k = 1,\ldots,K$, then $\tau^{-1}_c\bs\alpha = \bs \alpha$ for $c\in\mathcal C$ and \eqref{eq:thetaMarginal} reduces to $\mathcal D_{K-1}(\bs\alpha)$.

The analogous result for $\bs w$ demands a little bit more effort. At first notice that for any given $c$, $\bs w$ can be expressed according to Equation \eqref{eq:w} as a suitable permutation of $$\bs\rho = (u_1,\ldots,u_{k^*},v_1 D_c,\ldots,v_{c_+} D_c),$$ where $D_c=\sum_{k = k^{*}+1}^{K}u_k$. Following the similar argument with $\bs\theta$, it will be sufficient to prove that $\bs\rho$ follows a Dirichlet distribution. From the discussion in Appendix \ref{sec:generalizeddirichlet}, it is equivalent to establish that $\bs\rho$ is completely neutral with $\zeta_k\sim\mbox{Beta}(\delta_j,\sum_{k = j+1}^{K}\delta_k)$ independent for $k = 1,\ldots,K-1$ for some $\delta_k>0$, $k=1,\ldots,K$. 

Let us define now the following variables:
\begin{eqnarray*}
\zeta_1 &=& u_1\\
\zeta_2 &=& \frac{u_2}{1-u_1}\\
&\vdots&\\
\zeta_{k^{*}} &=& \frac{u_{k^*}}{1-u_1-\ldots-u_{k^*-1}}\\
\zeta_{k^{*}+1} &=& v_1\\
\zeta_{k^{*}+2} &=& \frac{v_2}{1 - v1}\\
&\vdots&\\
\zeta_{K-1} &=& \frac{v_{K-1}}{1 - v_1 - \ldots - v_{K-2}}.
\end{eqnarray*}
Since $\bs u$ and $\bs v$ are independent and distributed according to \eqref{eq:uprior} and \eqref{eq:vprior} it follows that $\zeta_k\sim\mbox{Beta}(\alpha_k,\sum_{j = k+1}^{K}\alpha_j)$ for $k = 1,\ldots,k^{*}$ and $\zeta_{k^*+\ell}\sim\mbox{Beta}(\gamma_{\ell},\sum_{j = \ell+1}^{c_+}\gamma_{j})$ for $\ell = 1,\ldots,c_+$. Furthermore, $\zeta_k$ are mutually independent for $k = 1,\ldots,K-1$. 

Now, observe that $\rho_1 = \zeta_1$, $\rho_k = \frac{\zeta_k}{ 1-\rho_1-\ldots-\rho_{k-1}}$, $k = 2,\ldots,K - 1$ and $\rho_K = 1 - \sum_{j=1}^{K-1}\rho_{j}$. But $\zeta$'s are mutually independent and Beta distributed, consequently $\rho$ follows a Generalized Dirichlet distribution:
\begin{equation}\label{eq:rhoDist}
\bs\rho\sim\mathcal{GD}\left(\alpha_1,\ldots,\alpha_{k^{*}},\gamma_1,\ldots,\gamma_{c_+};\sum_{j=2}^{K}\alpha_j,\ldots,\sum_{j=k^{*}+1}^{K}\alpha_j,
\sum_{j=2}^{c_+}\gamma_{j},\ldots,\gamma_{c_+}\right).
\end{equation}
Since $\bs w = \tau^{-1}_c\rho$ for any given $c$, in general, the marginal prior distribution of $\bs w$ is a mixture of permutations of Generalized Dirichlet distributions (as previously discussed, the Generalized Dirichlet distribution is not permutation invariant). In the special case that $\alpha_k = \gamma_k = \alpha >0$ for all $k = 1,\ldots,K$, the property \eqref{eq:gdequalsd} implies that the distribution \eqref{eq:rhoDist} reduces to $\mathcal D(\bs\alpha)$. The result follows using the same argument as the one used for $\bs\theta$.

\section{Proof of Lemma 2}\label{sec:conditionals}

From \eqref{eq:model} we have that: 
\begin{align}\nonumber
\bs u,\bs v|\cdots&\propto&\prod_{i=1}^{r}\bs\theta(\tau,\bs u)_{\xi_i}
\prod_{j=1}^{s}\bs w(\tau,\bs u,\bs v)_{z_j}\prod_{k=1}^{K}u_k^{\alpha_k-1}
\prod_{\ell=1}^{c+}v_\ell^{\gamma_\ell-1}\\ \nonumber
&\propto&\prod_{k=1}^{K}\bs\theta(\tau,\bs u)_k^{s_k(\bs\xi)}
\prod_{k=1}^{K}\bs w(\tau,\bs u,\bs v)_k^{s_k(\bs z)}\prod_{k=1}^{K}u_k^{\alpha_k-1}
\prod_{\ell=1}^{c+}v_\ell^{\gamma_\ell-1}\\ \nonumber
&\propto &\prod_{k=1}^{K}\tau^{-1}\bs u_k^{s_k(\bs\xi)}
\prod_{k\in C_0(c)}\bs w(\tau,\bs u)_k^{s_k(\bs z)}\prod_{k\in C_1(c)}\bs w(\tau,\bs u,\bs v)_k^{s_k(\bs z)}\\ \nonumber
&&\times\prod_{k=1}^{K}u_k^{\alpha_k-1}
\prod_{\ell=1}^{c+}v_\ell^{\gamma_\ell-1}\\ \nonumber
&\propto &\prod_{k=1}^{K}u_k^{s_{\tau_k}(\bs\xi)}
\prod_{k=1}^{k^*}u_k^{s_{\tau_k}(\bs z)}\prod_{k= k^*+1}^{K}\left(v_{k-k^*}\sum_{j=k^*+1}^{K}u_j\right)^{s_{\tau_k}(\bs z)}\\ \nonumber
&&\times\prod_{k=1}^{K}u_k^{\alpha_k-1}\prod_{\ell=1}^{c_+}
v_{\ell}^{\gamma_{\ell}-1}\\ \label{eq:lastline}
&\propto &\prod_{k=1}^{k^*}u_k^{\alpha_k+s_{\tau_k}(\bs\xi)+s_{\tau_k}(\bs z)-1}
\prod_{k=k^*+1}^{K}u_k^{\alpha_k+s_{\tau_k}(\bs \xi)-1}
\left(\sum_{j=k^*+1}^{K}u_j\right)^{\sum\limits_{j=k^*+1}^{K}s_{\tau_j}(\bs z)}\\&&\times
\prod_{\ell= 1}^{c_+}v_{\ell}^{\gamma_{\ell}+s_{\tau_{\ell+k^*}}(\bs z)-1}\nonumber
\end{align}
The last expression yields to conditional independence of $\bs u$ and $\bs v$. Moreover, it is straightforward to see that the full conditional distribution of $\bs v$ is the one defined in expression \eqref{eq:v_gibbs}. The easiest way to see that the conditional distribution of $\bs u$ is the one defined in \eqref{eq:u_gibbs} is to evaluate the density function \eqref{eq:gdpdf} with the parameters given in \eqref{eq:u_gibbs}, make all simplifications and then end up to the first row of last equation. 

Finally, it is important to stress here the convenience of defining $\bs u$ in a way that the set of Equally Expressed transcripts ($C_0$) is followed by the set of Differentially Expressed transcripts ($C_1$), as well as the permutation of the indices as in Definition \ref{def:dead_alive}. Note that the term corresponding to $\sum_{j = k^* + 1}^{K}u_j$ in expression \ref{eq:lastline} refers to the sum of weights of the Differentially Expressed transcripts. If $C_1$ would be a random subset of indices and not the one corresponding to the last $c_+ = K-k^*$ ones, then it would not be possible to directly express the first line of \ref{eq:lastline} as a member of the Generalized Dirichlet family, but rather as a permutation of a Generalized Dirichlet distributed random variable.

\section{Proof of Theorem 2}

Let $\mathcal A_c = \mathcal P_{K-1}\times\mathcal P_{c_+-1}$ and also note that when $c_+ = 0$ then $\mathcal A_c$ reduces to $\mathcal P_{K-1}$. From Equation \eqref{eq:model} and Lemma 2 we have that:
\begin{equation}\label{eq:start}
f(\bs\xi,\bs z|\bs x,\bs y,c) \propto \int\limits_{\mathcal A_c}H(\bs u,\bs v,\bs \xi,\bs z,c)\mathrm{d}\bs u \mathrm{d}\bs v\prod\limits_{i=1}^{r}f_{\xi_i}(x_i)\prod\limits_{j=1}^{s}f_{z_j}(y_j),
\end{equation}
where $H(\bs u,\bs v,\bs \xi,\bs z,c)$ denotes the expression \eqref{eq:lastline}. Now recall that according to Lemma 2, the full conditional distribution of $\bs u,\bs v|\ldots$ becomes a product of independent Generalized Dirichlet and Dirichlet distributions. This means that 
\begin{eqnarray}\nonumber
\int\limits_{\mathcal A_c}H(\bs u,\bs v,\bs \xi,\bs z,c)\mathrm{d}\bs u \mathrm{d}\bs v &=& \prod_{k=1}^{K-1}B(\lambda_k,\beta_k)\frac{\prod\limits_{\ell = 1}^{c_+}\Gamma(\gamma_\ell + s_{\tau_{\ell+k^*}}(\bs z))}{\Gamma(\sum_{\ell = 1}^{c_+}\gamma_\ell+s_{\tau_{\ell + k^{*}}}(\bs z))}\\ \label{eq:lakis}
&=& \prod_{k=1}^{K-1}B(\lambda_k,\beta_k)\frac{\prod\limits_{k\in C_1}\Gamma(\gamma_{\tau^{-1}_k - k^*} + s_{k}(\bs z))}{\Gamma(\sum_{k\in C_1}\gamma_{\tau^{-1}_k - k^*}+s_{k}(\bs z))}.
\end{eqnarray}
Define $\beta_0 = \sum_{j=1}^{K}\alpha_j+r+s$. Observe that $\beta_k+\lambda_k = \beta_{k-1}$ for all $k \neq k^{*} + 1$. Now simplify the product of Beta functions as follows:
\begin{eqnarray*}
\prod_{k=1}^{K-1}B(\lambda_k,\beta_k)&=&\prod_{k=1}^{K-1}\frac{\Gamma(\lambda_k)\Gamma(\beta_k)}{\Gamma(\lambda_k + \beta_k)}\\
&=&\frac{\left(\prod_{k=1}^{k^*}\Gamma(\lambda_k)\right)\Gamma(\beta_{k^*})}{\Gamma(\beta_0)}\frac{\left(\prod_{k=k^*+1}^{K-1}\Gamma(\lambda_k)\right)\Gamma(\beta_{K-1})}{\Gamma(\beta_{k^*+1}+\lambda_{k^*+1})}\\
&=&\frac{\Gamma(\beta_{k^*})\Gamma(\beta_{K-1})}{\Gamma(\beta_0)\Gamma(\beta_{k^*+1}+\lambda_{k^*+1})}\prod_{k=1}^{K-1}\Gamma(\lambda_k).
\end{eqnarray*}
Substituting $\lambda_k$ and $\beta_k$, $k=1,\ldots,K-1$, the last expression yields:
\begin{eqnarray}\nonumber
\prod_{k=1}^{K-1}B(\lambda_k,\beta_k)
&=&\frac{\Gamma\left(\sum\limits_{k\in C_1}\alpha_{\tau^{-1}_k} + s_k(\bs\xi) + s_k(\bs z)\right)}{\Gamma(\beta_0)\Gamma\left(\sum\limits_{k\in C_1}\alpha_{\tau^{-1}_k} + s_k(\bs\xi)\right)}\\
&&\times\prod_{k \in C_0}\Gamma(\alpha_{\tau^{-1}_k} + s_k(\bs\xi) + s_k(\bs z))\prod_{k \in C_1}\Gamma(\alpha_{\tau^{-1}_k} + s_k(\bs \xi)).\label{eq:lakis2}
\end{eqnarray}
Note here that $\Gamma(\beta_0)$ does not depend on $\bs \xi$ or $\bs z$, hence substituting Equations \eqref{eq:lakis}, \eqref{eq:lakis2} into \eqref{eq:start} we obtain \eqref{eq:marginal}, as stated.

Next we proceed to deriving the distributions of $\xi_i|\bs \xi_{[-i]},\bs z,\bs y,c$ and $z_j|\bs z_{[-j]},\bs z,\bs y,c$, for $i = 1,\ldots,r$; $j = 1,\ldots,s$. Let us focus first at the probability a specific read $i = 1,\ldots,r$ of the first condition being assigned to a specific transcript $k = 1,\ldots,K$, given the allocations of all remaining reads ($\bs\xi_{[-i]}, \bs z$) and the state vector ($c$). After discarding all irrelevant terms from Equation \eqref{eq:marginal} we obtain that: 
\begin{eqnarray*}
f(\xi_i|\bs \xi[-i],\bs z,c,\bs x)&\propto&
\frac{\Gamma\left(\sum\limits_{t\in C_1}\alpha_{\tau^{-1}_t} + s_t(\bs\xi) + s_t(\bs z)\right)}{\Gamma\left(\sum\limits_{t\in C_1}\alpha_{\tau^{-1}_t} + s_t(\bs\xi)\right)}\prod_{t \in C_1}\Gamma(\alpha_{\tau^{-1}_t} + s_t(\bs \xi))\\
\nonumber
&\times&\prod_{t \in C_0}\Gamma(\alpha_{\tau^{-1}_t} + s_t(\bs\xi) + s_t(\bs z))f_{\xi_i}(x_i).
\end{eqnarray*}
Now, notice that: $s_t(\bs\xi) = s_t^{(i)}(\bs\xi)$ for $t\neq k$ while $s_k(\bs\xi) = s_k^{(i)}(\bs\xi) + 1$ and recall that $\Gamma(x + 1) = x\Gamma(x)$. Hence, the last equation simplifies to:
\begin{equation*}
P(\xi_i = k|\bs \xi[-i],\bs z,c,\bs x)\propto\
\begin{cases}
(\alpha_{\tau^{-1}_k} + s_k^{(i)}(\bs\xi) + s_k(\bs z))f_{k}(x_i), & k\in C_0\\
\frac{\sum\limits_{t\in C_1}\alpha_{\tau^{-1}_t} + s_{t}^{(i)}(\bs\xi) + s_{t}(\bs z)}{\sum\limits_{t\in C_1}\alpha_{\tau^{-1}_t} + s_{t}^{(i)}(\bs\xi)}(\alpha_{\tau^{-1}_k} + s_{k}^{(i)}(\bs\xi))f_k(x_i),& k\in C_1
\end{cases}
\end{equation*}
which is Equation \eqref{eq:xi_collapsed}, as stated. Equation \eqref{eq:z_collapsed} is derived after following the similar arguments for $z_j|\bs z_{[-j]},\bs\xi,\bs y$.

\section{Update of state vector in the rjMCMC sampler}

In this section we introduce the reversible jump proposal for updating the state vector $c$ and $\bs v$. 

\textbf{Birth move:} Assume that the current state of the chain is $$g:=(c,\tau,\bs u,\bs v,\bs\theta,\bs w,\bs\xi,\bs z).$$ We propose to obtain a new state for the chain
$$g=(c,\tau,\bs u,\bs v,\bs\theta,\bs w,\bs\xi,\bs z)\rightarrow g'=(c',\tau',\bs u',\bs v',\bs\theta',\bs w',\bs\xi',\bs z'),$$
by a birth type move. This will increase the number of differentially expressed transcripts: either by one (if $c_+ \geqslant 2$) or by two (if $c_+ = 0$). At first, we choose a move of this specific type with probability proportional to the number of elements in $C_0(c)$, that is, $K-c_+$. Then, if $c_+\geqslant 2$ we select at random an index $k_0\in C_0(c)$ which we propose to add to $C_1(c)$. If $c_+= 0$ we select at random two indexes $\{k_1,k_2\}\in C_0(c)$ which we propose to move to (the previously empty) $C_1(c)$. The probability of selecting such a move type is, 
\begin{equation}\label{eq:birth}
P_{\mbox{birth}}(c\rightarrow c') = \begin{cases}
\frac{K-c_+}{K}\frac{1}{K-c_+}=\frac{1}{K}, & \mbox{if } 2\leqslant c_+ \leqslant K-1
\\
\frac{K-0}{K}\frac{1}{{K \choose 2}}=\frac{2}{K(K-1)}, &  \mbox{if } c_+ = 0.
\end{cases}
\end{equation}
Moreover, define the corresponding death probability
\begin{equation}\label{eq:death}
P_{\mbox{death}}(c\rightarrow c') = \begin{cases}
\frac{c_+}{K}\frac{1}{c_+}=\frac{1}{K}, & \mbox{if } 3\leqslant c_+ \leqslant K
\\
\frac{2}{K}, &  \mbox{if } c_+ = 2.
\end{cases}
\end{equation}

If $c_+ = 0$, assume without loss of generality that $k_1 <k_2$. Then, $C_1(c')=\{k_1,k_2\}$ and $C_0(c')=\{1,\ldots,K\}-C_1(c')$. Now assume that  $c_+\geqslant 2$. It is obvious that in this case $c'_k =
 c_k$ for all $k\neq k_0$ and $c'_{k_0} = 1$. Moreover, the dead and alive subsets of the new state is obtained by deleting $k_0$ from $C_0(c)$ and adding it to $C_1(c)$. Let $$j:=\sum_{k\in C_1(c)}I(k<k_0) + 1=\sum_{k= 1}^{k_0}c_k + 1.$$
Then, the alive subset of the new state is 
\begin{equation}
C_1(c')=\begin{cases}
\{C_1(c)\}_k, & k<j\\
k_0, & k=j\\
\{C_1(c)\}_{k-1}, & j<k\leqslant c_+ +1
\end{cases}
\end{equation}
and the dead subset will simply be $C_0(c') = \{1,\ldots,K\}-C_1(c')$. Finally, the new permutation is defined as $\tau' = (C_0(c'),C_1(c'))$.

Now, we have to propose the values of $\bs u', \bs v'$. Recall that the dimension of $\bs u$ is always constant, but the dimension of $\bs v'$ will be increased by one. We consider them separately in order to keep it as simple as possible. For $\bs u$ we propose to jump to a new state $\bs u'$ which arises deterministically as the corresponding permutation of its previous values. In order to do this we just have to match the position of each element of $\tau'$ in $\tau$. This means that  
\begin{equation}\label{eq:u_new}
 \bs u' = (\tau^{-1}\tau')\bs u=\tau'[(\tau^{-1}\bs u)].
\end{equation}

In order to construct a valid Metropolis-Hastings acceptance probability for the dimension changing move from $\bs v$ to $\bs v'$, we should take into account the dimension matching assumption of Green (1995). In our set up, this assumption states that the jump from $\bs v\rightarrow \bs v'$ should be done by producing one random variable that will bridge the dimensions, that is: 
\begin{equation*}
\bs v' = h(\bs v,\delta),
\end{equation*}
where $\delta$ denotes a (univariate) random variable and $h(\cdot,\cdot)$ an invertible transformation. We design this transformation following similar ideas from the standard birth and death moves of Richardson and Green (1997), Papastamoulis and Iliopoulos (2009). Let $\delta\sim f_{\mbox{prop}}$, where $f_{\mbox{prop}}$ denotes the density function of a distribution with support $(0,1)$. Then, the new parameter is obtained as
\begin{equation}\label{eq:transformation}
\bs v' = h(\bs v,\delta) :=
\begin{cases}
\left(v_1(1-\delta),\ldots, v_{j-1}(1-\delta),\delta,v_{j+1}(1-\delta),\ldots,v_{c_+}(1-\delta)\right), & c_+\geqslant 2\\
(\delta,1-\delta), & c_+ = 0
\end{cases}
\end{equation}
Finally, we have to compute the absolute value of the Jacobian of the transformation in \eqref{eq:transformation}. Now, recall that $\bs v$ consists of $c_+ -1$ independent elements, so the dimension of the Jacobian is $c_+\times c_+$ (and not $(c_++1)\times (c_++1)$). Then, a routine calculation leads to:
\begin{equation}\label{eq:jacobian}
|J(\delta,c)| =\begin{cases} (1-\delta)^{c_+ -1}, & c_+\geqslant 2\\
1, & c_+ = 0
\end{cases}
\end{equation}

The new values of transcript expression $\bs \theta',\bs w'$ are as follows. By Equations \eqref{eq:theta} and \eqref{eq:u_new}
\begin{equation}\label{eq:theta_new}
\bs\theta'=\tau'^{-1}\bs u'=\tau'^{-1}\tau'[(\tau^{-1}\bs u)]\Rightarrow
\bs\theta'=\bs\theta,
\end{equation}
and applying Equation \eqref{eq:w}:
\begin{equation}\label{eq:w_new}
\bs w' = \tau'^{-1}\left(\{u'_{\tau'^{-1}_{k}}:k\in C_0(c')\},\bs v'\sum_{k\in C_1(c')} u'_{\tau'^{-1}_{k}}\right).
\end{equation}

Finally, we propose to reallocate all observations according to the new values $\bs\theta'$ and $\bs w'$. This is simply done by using the full conditional distributions of $\bs \xi',\bs z'$. Let $P(\bs \xi',\bs z'|\bs \theta',\bs w')$ denote the probability of the allocations, according to the general form given in Equations \eqref{eq:xi_cond} and \eqref{eq:z_cond}. Note here that such a reallocation it is not necessary, however it is suggested because improves the acceptance rate of the proposed move.

\begin{slemma}
The acceptance probability of the birth move is $\min\{1,A(g,\delta,g')\}$, where
\begin{align}\label{eq:accept}
A(g,\delta,g') &=& \frac{f(\bs x,\bs y,\bs z',\bs \xi'|\bs \theta',\bs w')P(\bs \xi,\bs z|\bs \theta,\bs w)}{f(\bs x,\bs y,\bs z,\bs \xi|\bs \theta,\bs w)P(\bs \xi',\bs z'|\bs \theta',\bs w')}\\
&\times&\frac{P_{death}(c'\rightarrow c)P(c')f(\bs u',\bs v'|\bs\alpha,\bs\gamma)|J(\delta,c)|}{P_{birth}(c\rightarrow c')f_{\mbox{prop}}(\delta)P(c)f(\bs u,\bs v|\bs\alpha,\bs\gamma)}.\nonumber
\end{align}
\end{slemma}
\begin{proof}
See the acceptance probability in Green (1995).
\end{proof}
Note that for the Jeffeys prior \eqref{eq:beta_prior}, \eqref{eq:jeffreys_prior} it holds that:
\begin{equation*}
\frac{P(c')}{P(c)} = \begin{cases}
\frac{\pi}{1-\pi}, & c_+ \geqslant 2\\
\frac{\pi^{2}}{(1-\pi)^{2}}, & c_+ = 0.
\end{cases}
\end{equation*}

\textbf{Death move:} A death proposal is the reverse move of a birth. Suppose that we propose a transition 
$$g=(c,\tau,\bs u,\bs v,\bs\theta,\bs w,\bs\xi,\bs z)\rightarrow g'=(c',\tau',\bs u',\bs v',\bs\theta',\bs w',\bs\xi',\bs z'),$$
via a death move. At first we choose at random an element of the alive subset and propose to move and paste it to the dead subset (in the case that the alive subset consists of only two transcripts, then we essentially setting the alive subset to the empty set). This reduces the number of differentially expressed transcripts either by one (if $c_+\geqslant 2$) or by two (if $c_+ =  2$).

If $c_+ =  2$ let $C_1(c) = \{k_1,k_2\}$ and $\bs v = (v_1, v_2)$, with $v_2 = 1-v_1$. Then the random variable that we have to produce during the reverse move is deterministically set to $\delta = v_{1}$, and $\bs v' = \emptyset$. In any other case, assume that the chosen alive transcript index is $k_1$, then define $$j:=\sum_{k\in C_0(c)}I(k < k_1) + 1=\sum_{k= 1}^{k_1}c_k.$$ 
Then, the reverse transformation of \eqref{eq:transformation} implies that 
\begin{equation}\label{eq:inv_transformation}
(\bs v',\delta) = h^{-1}(\bs v) :=
\begin{cases}
\left\{\left(\frac{v_1}{1-v_{j}},\ldots, \frac{v_{j-1}}{1-v_{j}},\frac{v_{j+1}}{1-v_{j}},\ldots,\frac{v_{c_+}}{1-v_{j}}\right),v_{j}\right\}, & c_+\geqslant 2\\
v_j, & c_+ = 0
\end{cases}
\end{equation}
Everything works in a reverse way compared to the birth move, so the acceptance probability of a death move is then simply given by $\min \{1,A^{-1}(g',v_j,g)\}$.

\section{Update of state vector in the collapsed sampler}

According to Equation \eqref{eq:model}, the conditional distribution of $c$ is written as:
\begin{eqnarray*}
f(c|\bs\xi,\bs z,\pi,\bs x,\bs y)\propto f(\bs\xi,\bs z|c,\bs x,\bs y)f(c|\pi)h_c,
\end{eqnarray*} 
where $f(c|\pi)$ denotes the prior distribution of $c$ defined in Equation \eqref{eq:jeffreys_prior}, $f(\bs\xi,\bs z|c,\bs x,\bs y)$ is defined in Equation \eqref{eq:marginal} of Theorem 2 and $h_c = \frac{\Gamma(\sum_{\ell=1}^{c_+}\gamma_\ell)}{\prod_{\ell=1}^{c_+}\Gamma(\gamma_\ell)}$ corresponds to the constant term of the prior distribution for $\bs v$. However, in order to fully update the state vector we would have to compute this quantity for all $c\in\mathcal C$, and this would be time consuming. 

An alternative is to update two randomly selected indices, given the configuration of remaining ones. Hence, if $j_1$ and $j_2$ denote two distinct transcript indices, then we perform a Gibbs update to $c_{j_1,j_2}|c_{-[j_1,j_2]},\bs\xi,\bs z,\pi,\bs x,\bs y$. Let $d = \sum_{k\neq j_1,j_2}c_k$. Since $c_+=\sum_{k}c_k\neq 1$, we have to differentiate the subsequent procedure between the following cases: $d = 0$, $d = 1$ and $d > 1$. If $d = 0$ then $c_{j_1,j_2}\in\{(1,1),(0,0)\}$. In case that $d=1$ then $c_{j_1,j_2}\in\{(1,1),(1,0),(0,1)\}$. Finally, if $d>1$ then $c_{j_1,j_2}\in\{(1,1),(0,0),(1,0),(0,1)\}$. Hence, the following full conditional distribution is derived:
\begin{align}\label{eq:rs1}
P(c_{j_1}=1,c_{j_2}=1|c_{-[j_1,j_2]},\bs\xi,\bs z,\pi,\bs x,\bs y)&\propto& f(\bs\xi,\bs z|c,\bs x,\bs y)\pi^2h_c \\
\label{eq:rs2}
P(c_{j_1}=0,c_{j_2}=0|c_{-[j_1,j_2]},\bs\xi,\bs z,\pi,\bs x,\bs y)&\propto& \begin{cases}
f(\bs\xi,\bs z|c,\bs x,\bs y)(1-\pi)^2 h_c, & d\neq 1\\
0, & d = 1
\end{cases}\\
\label{eq:rs3}
P(c_{j_1}=1,c_{j_2}=0|c_{-[j_1,j_2]},\bs\xi,\bs z,\pi,\bs x,\bs y)&\propto& \begin{cases}
f(\bs\xi,\bs z|c,\bs x,\bs y)\pi(1-\pi)h_c, & d\neq 0\\
0, & d = 0
\end{cases}\\
\label{eq:rs4}
P(c_{j_1}=0,c_{j_2}=1|c_{-[j_1,j_2]},\bs\xi,\bs z,\pi,\bs x,\bs y)&\propto& \begin{cases}
f(\bs\xi,\bs z|c,\bs x,\bs y)(1-\pi)\pi h_c, & d\neq 0\\
0, & d = 0.
\end{cases}
\end{align}

\begin{slemma}\label{lem:stateVectorUpdate}
The update of a randomly selected block of $c$ in the collapsed sampler: 
\begin{enumerate}
\item Select randomly two distinct indices $\{j_1,j_2\}$ from the set $\{1,\ldots,K\}$
\item Update $c_{j_1,j_2}|c_{-[j_1,j_2]},\bs\xi,\bs z,\pi,\bs x,\bs y$ as detailed in Equations \eqref{eq:rs1}--\eqref{eq:rs4}
\end{enumerate}
corresponds to a Metropolis-Hastings step in which the proposed state is always accepted.
\end{slemma}
\begin{proof}
Assume that the current state of the chain is $g = (c,\bs\xi,\bs z, \pi)$ and we propose to move to state $g' = (c', \bs\xi,\bs z, \pi)$, where $c'_k = c_k$ if $k \neq j_1,j_2$ and $c_{j_1,j_2}$ is drawn from the full conditional distribution. The proposal density in this case can be expressed as
\[
P(g \rightarrow g')=\frac{1}{\binom{K}{2}}f(c'_{j_1,j_2}|c_{-[j_1,j_2]}, \bs\xi, \bs z, \pi,\bs x, \bs y)\propto \frac{1}{\binom{K}{2}}f(c', \bs\xi, \bs z, \pi|\bs x, \bs y) = \frac{1}{\binom{K}{2}}f(g'|\bs x, \bs y).
\]
The probability of proposing the reverse move (from $g'$ to $g$) equals to
\[
P(g' \rightarrow g)=\frac{1}{\binom{K}{2}}f(c_{j_1,j_2}|c_{-[j_1,j_2]}, \bs\xi, \bs z, \pi,\bs x, \bs y)\propto \frac{1}{\binom{K}{2}}f(c, \bs\xi, \bs z, \pi|\bs x, \bs y)= \frac{1}{\binom{K}{2}}f(g|\bs x, \bs y).
\]
Thus, the Metropolis-Hastings ratio for the transition $g\rightarrow g'$ is expressed as 
\[
\frac{f(g'|\bs x,\bs y)P(g' \rightarrow g)}{f(g|\bs x,\bs y)P(g \rightarrow g')} = 
\frac{f(g'|\bs x,\bs y)\frac{1}{\binom{K}{2}}f(g|\bs x, \bs y)}{f(g|\bs x,\bs y)\frac{1}{\binom{K}{2}}f(g'|\bs x, \bs y)}=1.
\]
\end{proof}

\section{Clustering of reads and transcripts}

Let $Q=(q)_{ij}$ be a $K\times K$ symmetric matrix. For $i = 1,2,\ldots,K$ and $j = 1,\ldots,K$ let $N_{ij}$ denotes the number of reads that map to both transcripts $i$ and $j$. Define:
\[
 q_{ij} :=
  \begin{cases}
   1 & \text{if } N_{ij}>0 \\
   0       & \text{otherwise.}
  \end{cases}
\]
Clearly, $Q$ would be a diagonal matrix if all reads were uniquely mapped, but for real datasets $Q$ is a sparse and almost diagonal matrix. A typical graphic representation of $Q$ is illustrated in Figure
 \ref{fig:clusters} using a set of simulated reads from the Drosophila Melanogaster transcriptome. Each pixel corresponds to a pair of transcripts that contain at least one read aligned to both transcripts
 of the pair. If all reads were uniquely aligned, this figure would consist only of the diagonal line and in this case each expressed transcript would form its own cluster. Note that the white gaps on the 
diagonal line indicate non-expressed transcripts. Many reads, however, map to more than one transcript resulting in clusters of transcripts, as the red, cyan, blue and green ones in Figure \ref{fig:cluster
s}. The number of transcripts per cluster can have a wide range of values as displayed in Figure \ref{fig:clustersfreq}, but the majority of clusters consist of a very small number of transcripts compared 
to their total number. Next we formally define the notion of a cluster of transcripts.

\begin{figure}[t]
  \centering
    \includegraphics[width = 0.55\textwidth]{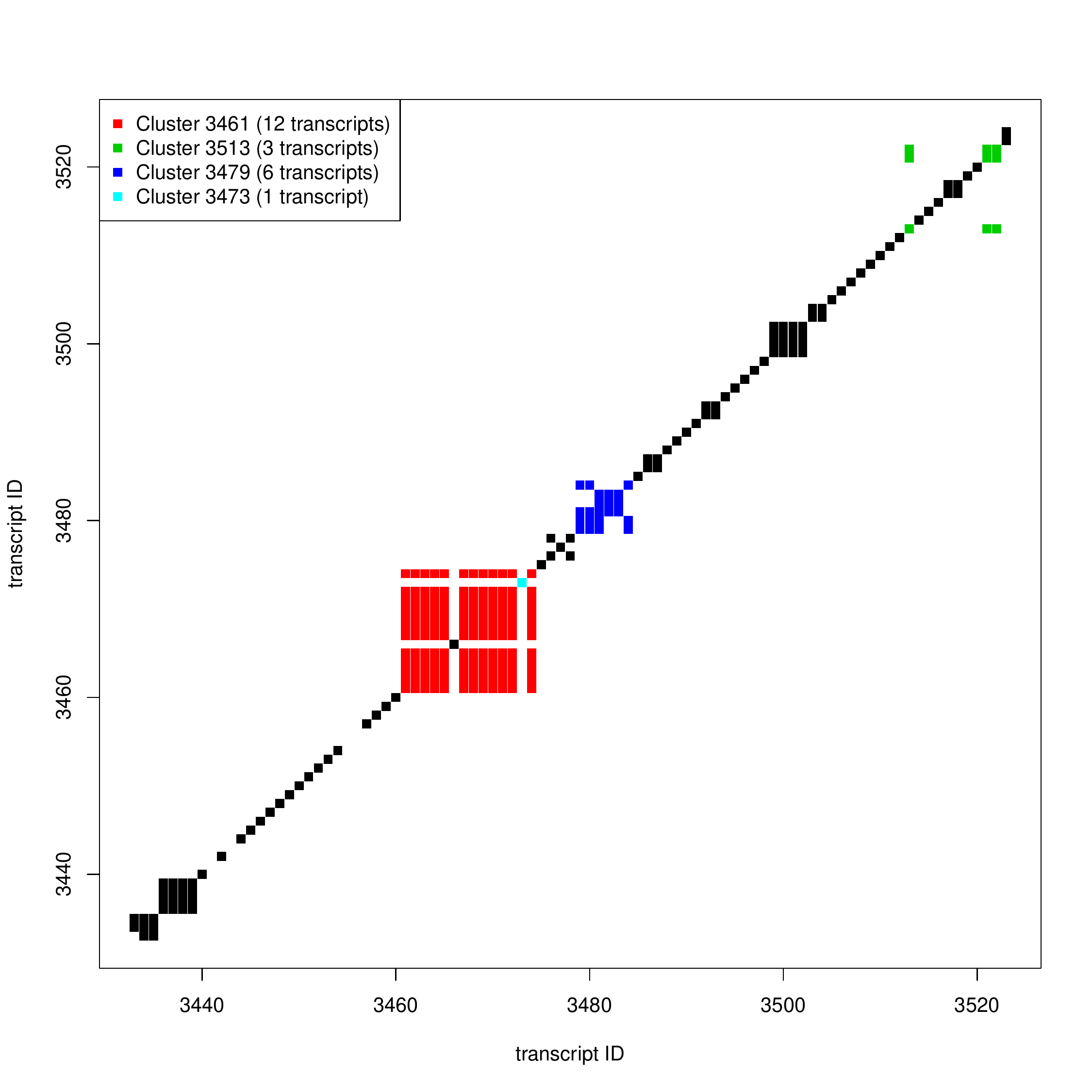}
      \caption{Clusters of transcripts for a simulated set of 75 bp paired  reads from the Drosophila transcriptome, containing $K=28763$ transcripts. For illustration purposes, only a subset consisting of
 281 transcripts is shown and four clusters are emphasized using different colours. White colour corresponds to $N_{ij} = 0$ aligned reads to both transcripts $i,j$.}\label{fig:clusters}
\end{figure}

\begin{mydef}[Associated transcripts]\label{def:adjacent}
Transcript $k_1$ is associated to transcript $k_2$ ($k_1 \leftrightarrow k_2$) if $q_{k_1 k_2} = 1$ or if exists a subset of indices $\{i_1,i_2,\ldots,i_m\}\subseteq\mathcal K:=\{1,\ldots,K\}$, $m \geqslant 1$, such that $q_{k_1i_1} + q_{i_1i_2} +\ldots+ q_{i_{m-1}i_m}+q_{i_mk_2} = m+1$. 
\end{mydef}

\begin{mydef}[Cluster of transcripts]\label{def:cluster}
The set of all associated transcripts of a given transcript $k\in\mathcal K:\sum_{i=1}^{K}N_{ik}>0$: 
$\mathcal C_k:=\{j\in\mathcal K: j\leftrightarrow k\}$,
denotes the cluster of $k$.
\end{mydef}
Note that according to definition \ref{def:cluster}: $k_1\leftrightarrow k_2 \Leftrightarrow \mathcal C_{k_1}= \mathcal C_{k_2}$. We uniquely label each cluster by referring to its minimum index, as follows:
\begin{mydef}[Cluster labels]\label{def:labels}
The label of cluster $\mathcal C_k$ is defined as $\mathcal L_\ell$, with $\ell = \min\{j \in \mathcal C_k\}$. Conventionally, we set $\mathcal L_{0}:=\{k\in\mathcal K:\mathcal C_k=\emptyset\}$.
\end{mydef}

Let $n_c$ be the total number of clusters and assume that $K_j$ is the number of transcripts associated with cluster $\mathcal L_j$. It holds that $\cup_{j=1,\ldots,n_c}\mathcal L_{j}=\mathcal K$ and $\mathcal L_{i}\cap \mathcal L_{j}=\emptyset$ for $i\neq j$, that is, $\{\mathcal L_0,\mathcal L_1,\ldots,\mathcal L_{n_c}\}$ is a partition of $\mathcal K$. Finally, let $r(\mathcal L_k)$ and $s(\mathcal L_k)$ be the number of reads assigned to cluster $\mathcal L_k$ from the first and second condition, respectively. 

\begin{figure}[t]
  \centering
  \includegraphics[width = 0.99\textwidth]{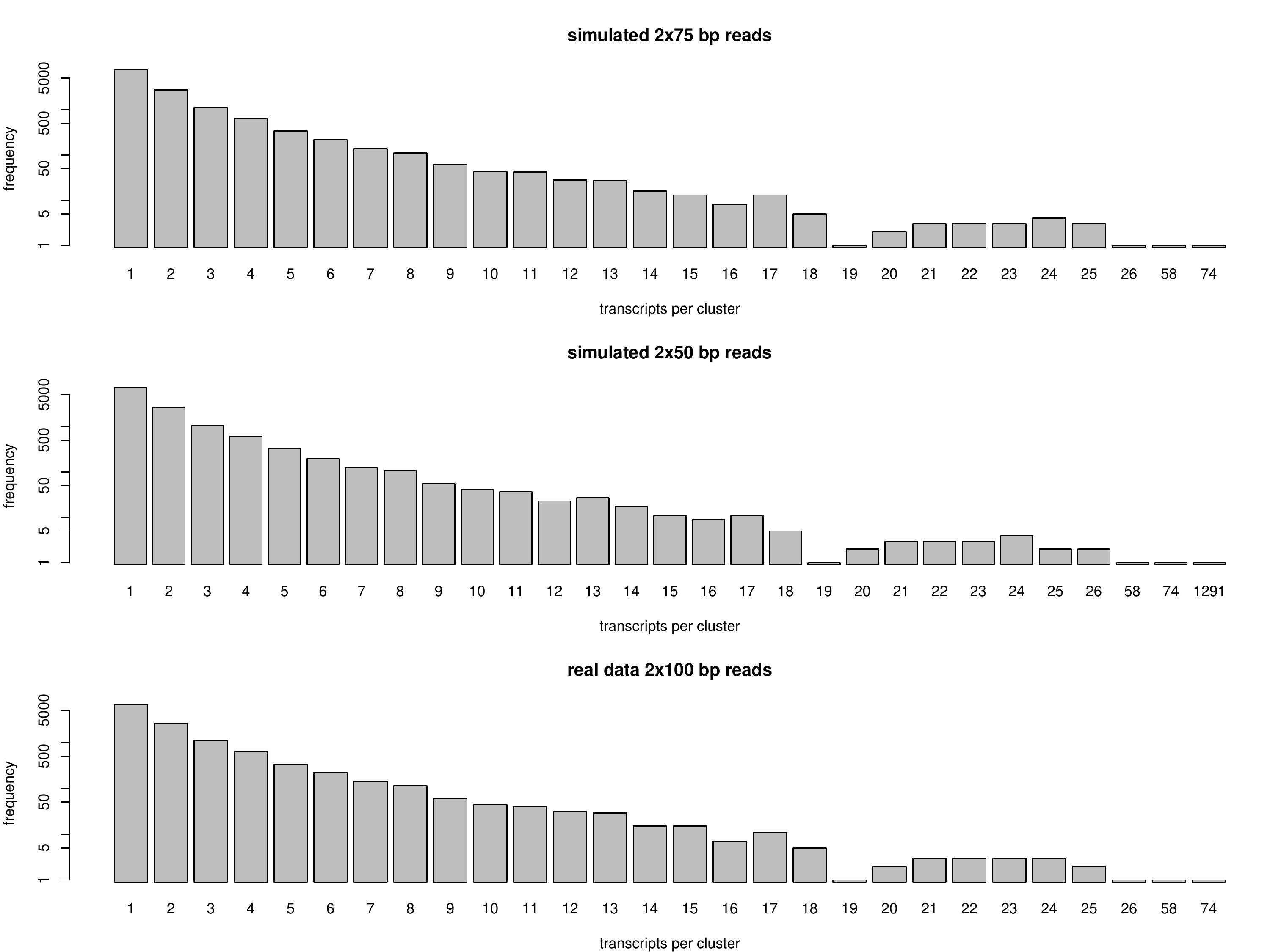}
      \caption{Frequencies (in log-scale) of the number of transcripts per cluster using paired-end reads from the Drosophila Melanogaster transcriptome. Top and middle: ($4489712$) simulated reads, bottom: ($22571142$) reads from real data.}\label{fig:clustersfreq}
\end{figure}

Next, assume that the proposed method is applied separately to each  cluster. This would not lead to the same answer as the one with the full set of reads due to the fact that now each transcript weight corresponds to the relative expression inside each cluster. In order to ensure that the analysis will result to the same answer we should artificially augment each cluster with an extra pseudo-transcript that will contain information of the relative weight of each cluster. There are $r(\mathcal L_j)$ and $s(\mathcal L_j)$ reads from the first and second condition, respectively, exclusively aligned to cluster $\mathcal L_j$, $j =1,\ldots,n_c$. Equivalently, there are $r-r(\mathcal L_j)$ and $s-s(\mathcal L_j)$ reads from the first and second condition, exclusively aligning to the remaining clusters. Assume now that each cluster is augmented with an additional pseudo-transcript containing all remaining reads from both conditions. We conventionally set the label of the pseudo-transcript to $K_j+1$. Given a set of reads from two biological conditions aligned to the reference transcriptome, the pipeline of the algorithm is the following.
\begin{itemize}
\item Partition the reference transcriptome and aligned reads into clusters. 
\item For each cluster $j = 1,\ldots,n_c$, containing $K_j\geqslant 1$ transcripts:
\begin{itemize}
\item augment the cluster by the remaining pseudo-transcript containing $r-r(\mathcal L_j)$  and $s-s(\mathcal L_j)$ reads from the first and second condition, respectively.
\item Run the rjMCMC or the collapsed sampler.
\end{itemize}
\end{itemize}

The following Lemma ensures that it is valid to apply this sampling scheme per cluster in order to estimate the marginal posterior distribution of expression and differential expression for the set of transcripts assigned to each cluster. Apparently, this is not equivalent to simultaneously sampling from the joint posterior distribution of the whole transcriptome, which is computationally prohibitive, however the estimation of the marginal behaviour of each cluster is feasible and computationally efficient due to the dimension reduction.

\begin{slemma}\label{lem:cluster} Let $\tilde{\bs\theta}_j:=(\{\theta_j;j\in\mathcal L_j\},\sum_{k\neq\mathcal L_j}\theta_k)$,  $\tilde{\bs w}_j:=(\{w_j;j\in\mathcal L_j\},\sum_{k\neq\mathcal L_j}w_k)$ denote the augmented transcript expressions for the first and second condition respectively and  $\tilde{c}_j:=(\{c_j;j\in\mathcal L_j\},c_{K_j+1})$, at cluster $j = 1,\ldots,n_c$. A priori assume:
\begin{eqnarray}
\tilde{\bs u}_j&\sim& \mathcal D_{K_j}\left(\{\alpha_j;j\in{\mathcal L_{j}}\},\sum_{k\notin\mathcal L_j}\alpha_k\right)\label{eq:newU}\\
\tilde{\bs v}_j|\tilde{\bs{c}}_j&\sim& \mathcal D_{\tilde{c}_{+}}\left(\gamma_1,\ldots,\gamma_{K_j+1}\right)\label{eq:newV}.
\end{eqnarray}
Then for each cluster $j =1,\ldots,n_c$, the parallel rjMCMC or collapsed algorithm converge to $f(\tilde{\bs\theta}_{j},\tilde{\bs w}_{j},\tilde{\bs c}_{j}|\bs x,\bs y)$ and $f(\tilde{\bs c}_{j}|\bs x,\bs y)$, respectively.
\end{slemma}
\begin{proof}
The distribution \eqref{eq:newU} is derived by \eqref{eq:uprior} by applying the aggregation property of Dirichlet distribution, while distribution \eqref{eq:newV} is the same as \eqref{eq:vprior} given $\bs c=\tilde{\bs c}$. Recall that according to Definition \ref{def:cluster} there are $\sum_{i=1}^{r}I(z_i=K_j+1) = r - r(\mathcal L_j)$ and $\sum_{i=1}^{s}I(\xi_i=K_j+1)=s - s(\mathcal L_j)$ reads allocated to the component labelled as $K_j+1$ for cluster $j$. This means that the update scheme:
\begin{enumerate}
\item Update allocation variables $\tilde{\bs \xi}_j$ and $\tilde{\bs z}_j$ and set $s_{K_j+1}(\tilde{\bs \xi}_j):=r-r(\mathcal L_j)$, $s_{K_j+1}(\tilde{\bs z}_j):=s-s(\mathcal L_j)$.
\item Update free parameters $\tilde{\bs u}_j$ and $\tilde{\bs v}_j$
\item Update expression parameters $\tilde{\bs\theta}_j$ and $\tilde{\bs w}_j$
\item Update state vector $\tilde{\bs c}_j$ 
\end{enumerate}
updates the collapsed parameter vector: 
$$\left(\{\theta_j;j\in\mathcal L_j\},\sum_{k\neq\mathcal L_j}\theta_k\right),\left(\{w_j;j\in\mathcal L_j\},\sum_{k\neq\mathcal L_j}w_k\right),\left(\{c_j;j\in\mathcal L_j\},c_{K_j+1}\right)$$ using the full conditional distributions for steps 1, 2, 3 and the reversible jump acceptance ratio in step 4 (in the case of rjMCMC sampler) or the random scan Gibbs step (in case of collapsed Gibbs). Hence it converges to $f(\tilde{\bs\theta}_{j},\tilde{\bs w}_{j},\tilde{\bs c}_{j}|\bs x,\bs y)$. 
\end{proof}

Note that the previous result assumes a fixed prior probability of DE. In practice, the prior probability of DE is a random variable, following the Jeffreys' prior distribution. Hence, the clustered sampling is equivalent to joint sampling only in case of fixed prior probability of DE. But we have found that this has not any impact in practice since according to our simulations the Jeffreys' prior outperforms the fixed prior probability of DE.

If the reads are sufficiently large, the clusters of transcripts are essentially genes (or groups of genes). It should be clear that the number of clusters as well as the cluster with the largest number of transcripts depends on the read length: if the read length is small, there will be many reads that map to multiple genes and in such a case all of these reads will form a very large cluster, as the one displayed in Figure \ref{fig:clustersfreq} (middle) containing 1291 transcripts. The convergence of the MCMC algorithm for such clusters is questionable. However, even in such cases the majority of transcripts and reads are still forming a large number of small clusters. It is worth mentioning here that the large cluster is created by a very small number of reads: in total there are 417709 reads belonging to this cluster. However, the number of reads that actually map to more than ones genes is equal to 1474. Hence, we could break the bonds of this large number of transcripts by simply discarding or filtering out this small portion of reads. 

\section{Initialization, burn-in and number of MCMC iterations per cluster}
\label{sec:conv}

After partitioning the reads and transcripts into clusters, the rjMCMC or collapsed sampler is applied as previously discussed. For each run (MCMC per cluster), $\mbox{mcmc}_n$ independent chains are obtained using randomly selected initial values for parameters $\bs u$ and $\bs v$, drawn from \eqref{eq:uprior} and \eqref{eq:vprior}. The first half of the chains is initialized from $c_+= 0$ (all transcripts are equally expressed), while the reverse (all transcripts are differentially expressed) holds for the initial state of the second half. The pseudo-transcript of each cluster (i.e.~the mixture component labelled as $K_j+1$) is always initialized as differentially expressed. Given $c,\bs u, \bs v$, the initial relative transcript expressions are computed according to \eqref{eq:dirac} and \eqref{eq:tau}. Each chain runs for a fixed number of $\mbox{mcmc}_m$ iterations, following a pre-specified number $\mbox{mcmc}_b$ of burn-in draws. The posterior means are estimated by averaging the ergodic means across all chains, using a thinning of $\mbox{mcmc}_t$ steps. The proposal distribution in the reversible jump step is an equally weighted finite mixture of Beta distributions: $f_{\mbox{prop}} = \frac{1}{J}\sum_{j = 1}^{J}\mathcal B(1,\beta_j)$. 
All results reported are obtained using: $\mbox{mcmc}_n = 6$, $\mbox{mcmc}_m = 5000$, $\mbox{mcmc}_b = 1000$, $\mbox{mcmc}_t = 5$, $J=5$ and $\{\beta_j;j=1,\ldots,5\}=\{1,10,100,250,500\}$.

\section{Comparison of samplers}\label{sec:comparison}

\begin{figure}[p]
\centering
\begin{tabular}{c}
\includegraphics[width = 0.0825\textwidth,angle =270]{strip.pdf}\\
\includegraphics[width = 0.99\textwidth]{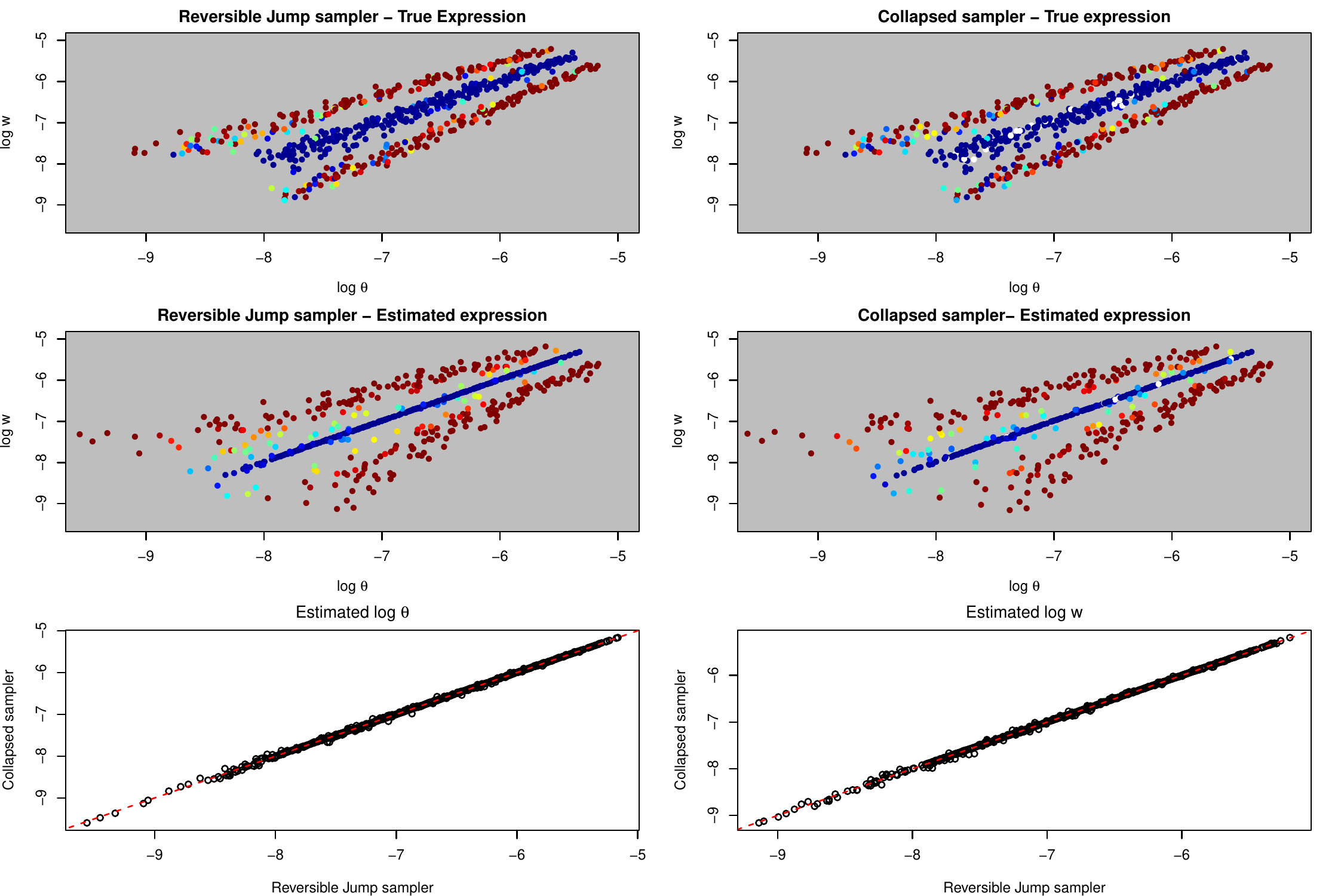}
\end{tabular}
      \caption{True log-relative expression values for the toy example. The colour corresponds to the posterior probability of differential expression according to each sampler under the Jeffreys' prior (blue, green and red colors denote values close to 0, 0.5 and 1 respectively).}\label{fig:toy-true}
\end{figure}

\begin{figure}[p]
\centering
\begin{tabular}{c}
\includegraphics[width = 0.96\textwidth]{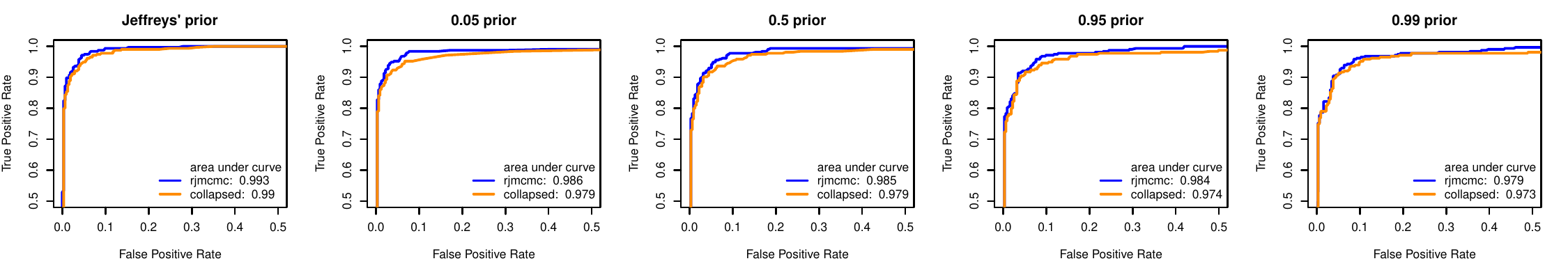}\\
\includegraphics[width = 0.96\textwidth]{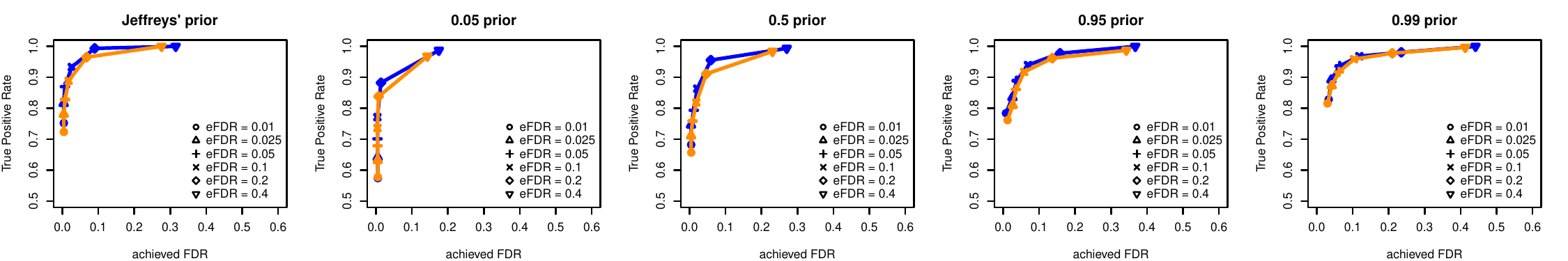}  
\end{tabular}
      \caption{ROC curves (up) and power-to-achieved FDR (down) for the toy example using different prior distribution on the probability of differential expression.}\label{fig:toy-roc}
\end{figure}

\begin{figure}[p]
\centering
\begin{tabular}{c}
\includegraphics[width = 0.99\textwidth]{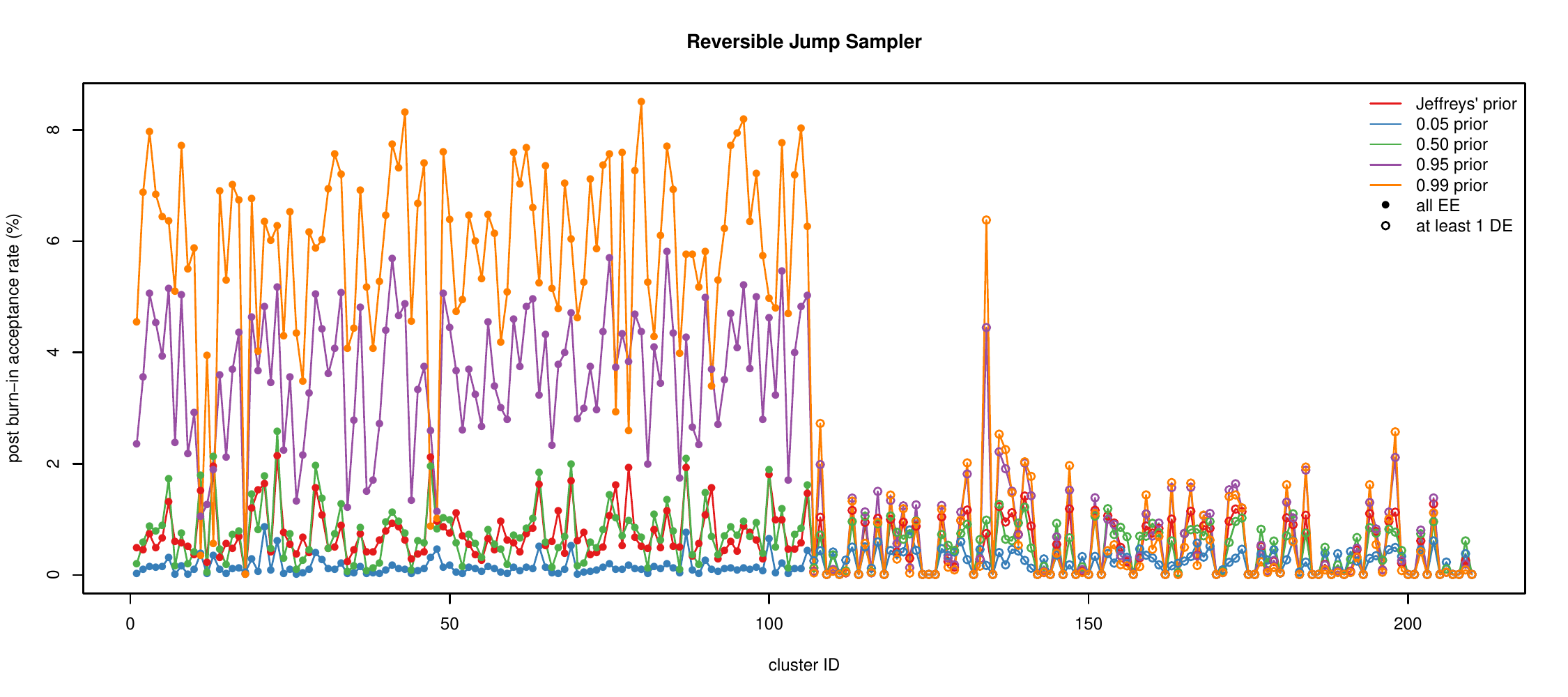}  
\end{tabular}
      \caption{Reversible Jump proposal acceptance rates per cluster (after discarding the MCMC draws which correspond to the burn-in period) for the update of $c,\bs v$ using different prior distributions. Note that the points are reordered so that clusters exclusively consisting of (truely) EE transcripts are shown first (solid points) followed by the clusters which contain at least one (truely) DE transcript (circles).}\label{fig:toy-rate}
\end{figure}

\begin{figure}[p]
\centering
\begin{tabular}{c}
\includegraphics[width = 0.99\textwidth]{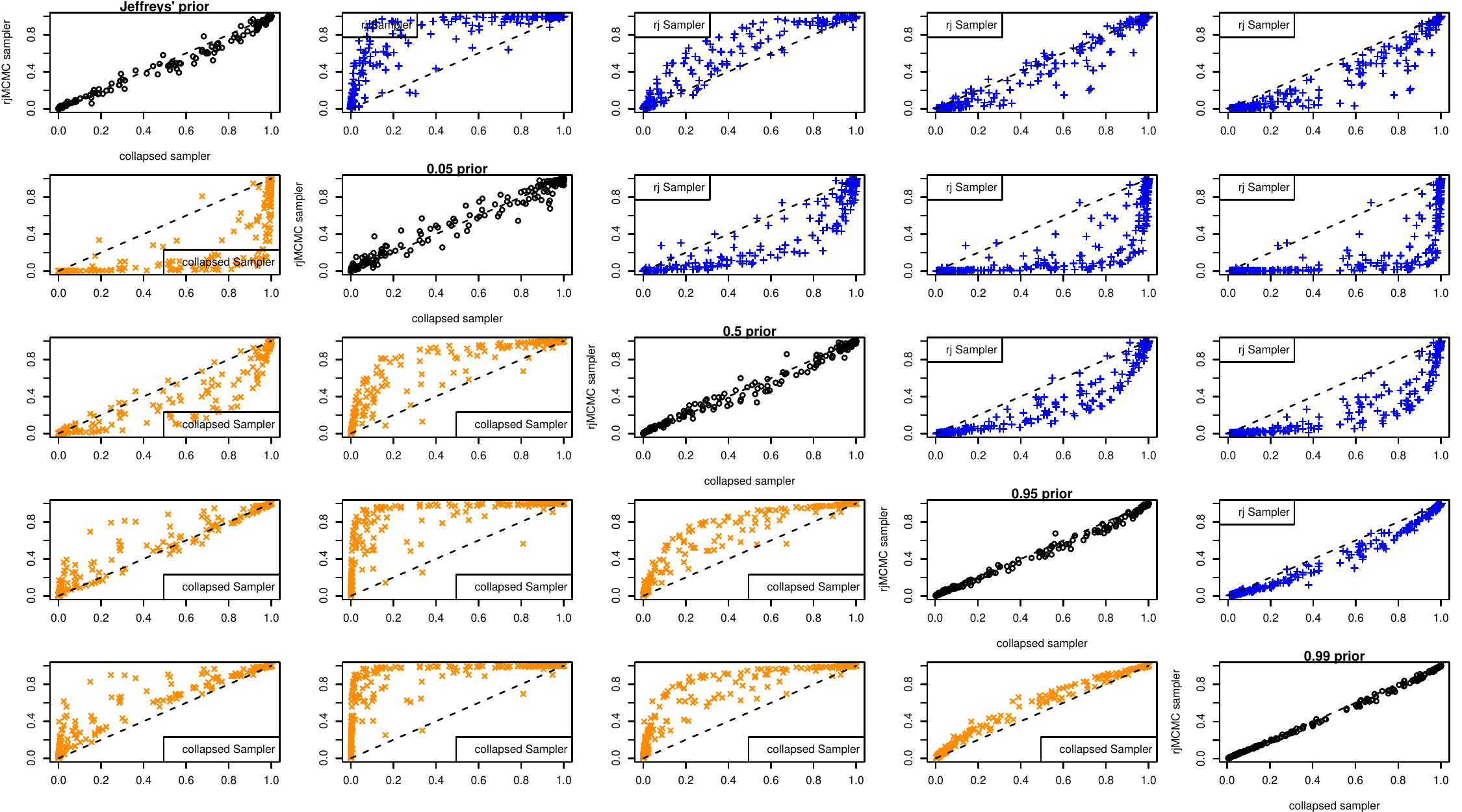}
\end{tabular}
      \caption{Prior sensitivity of the rjMCMC (blue) and collapsed (orange) sampler. The main diagonal contains scatterplots of the posterior probability of DE between the rjMCMC and collapsed samplers for each prior distribution. The scatterplots of the same posterior probabilities for all possible prior combinations per sampler is shown at the upper (rjMCMC) and lower (collapsed) diagonal.}\label{fig:toy-priors}
\end{figure}

In this section we compare the Reversible Jump and the Collapsed version of our method as well we test the sensitivity of these samplers with respect to the prior probability of Differential Expression. In particular, we compare the Jeffreys' prior with a fixed prior probability of DE at $0.05$, $0.50$ and $0.95$. We also examine the acceptance rates of the reversible jump proposal for updating the state vector $c$. Finally, a comparison between the clusterwise and raw sampler is made. For this purpose we used a toy example with relatively small number of reads and transcripts.

We simulated approximately $300000$ reads per sample, which arise from a set of $K = 630$ transcripts. The true values of the mixture weights used for the simulation are shown in Figure \ref{fig:toy-true}. Almost half of transcripts are Differentially Expressed and they correspond to the points that diverge from the identity line. The colour of each point corresponds to the posterior probability of differential expression for each sampler using the Jeffreys' prior distribution. The corresponding ROC curves for each sampler are shown in Figure \ref{fig:toy-roc}, using also different prior distributions on the probability of differential expression. We conclude that the rjMCMC sampler tends to achieve higher true positive rate and a larger area under the curve. The achieved false discovery rates are shown at the second of \ref{fig:toy-true}. Compared to their expected values (eFDR) we conclude that both samplers achieve to control the False Discovery Rate at the desired levels, even when the prior favours DE transcripts (0.95 prior).

\begin{figure}[t]
\centering
\begin{tabular}{c}
\includegraphics[width = 0.99\textwidth]{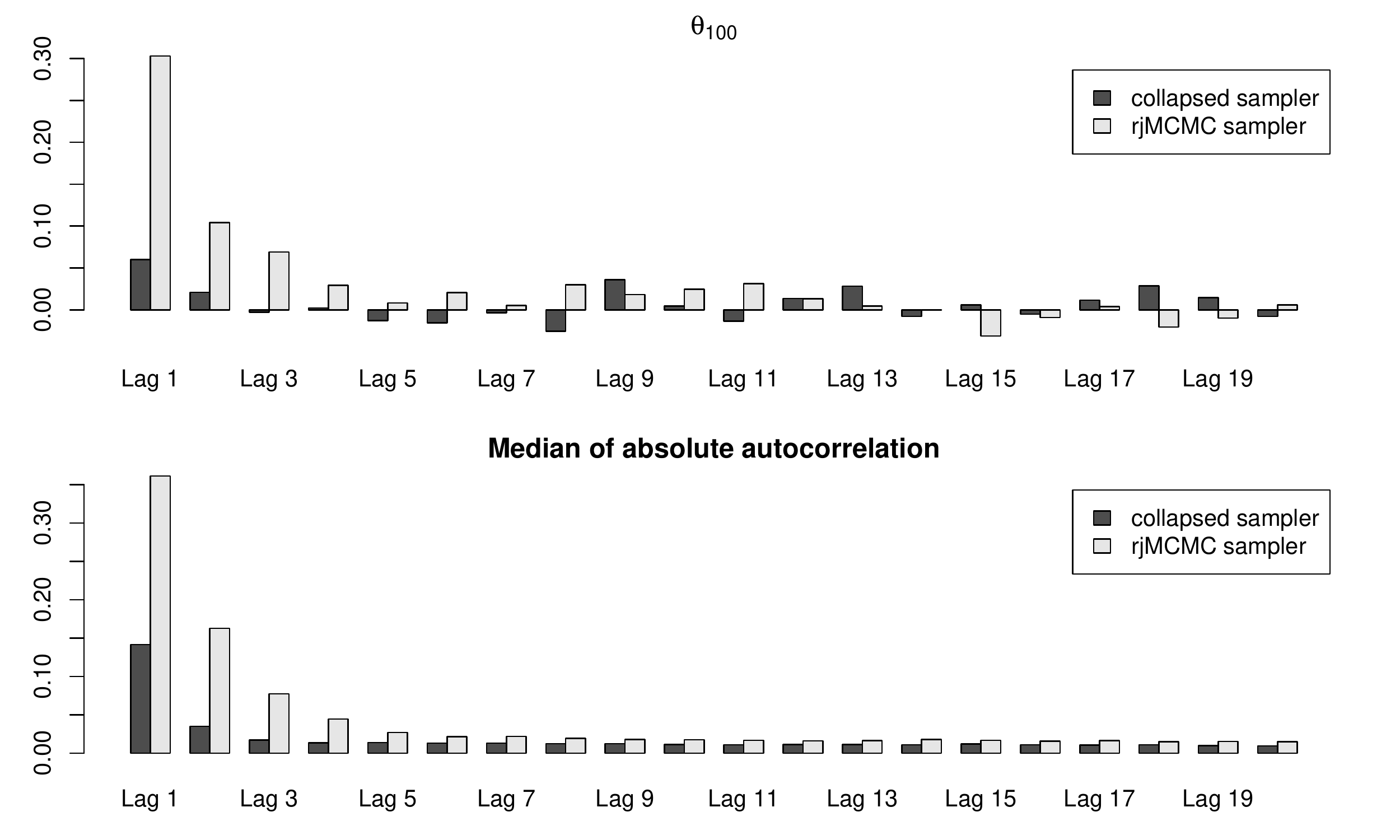}
\end{tabular}
      \caption{Top: estimated autocorrelation function of the collapsed and rjMCMC algorithm for the sampled values of $\log\theta_{100}$. Bottom: Median of absolute autocorrelations for $\log\theta_k$; $k = 1,\ldots,630$.}\label{fig:acf}
\end{figure}

We also examine the acceptance rate of the reversible jump proposal, shown in Figure \ref{fig:toy-rate}. Overall, there is a small acceptance rate of proposed moves and there is a notable increase  when the prior favours DE transcripts (0.95 or 0.99 prior). This mainly affects clusters consisting exclusively of EE transcripts. This conservative behaviour of rjMCMC sampler may indicates that the mixing of the algorithm is poor for the case of EE transcripts. 

Next, we compare the autocorrelation function between the two samplers when using the Jeffrey's prior distribution for the probability of DE. A typical behaviour is shown in Figure \ref{fig:acf} (top), displaying the autocorrelation function of $\log\theta_k$ for a single transcript ($k = 100$). In order to summarize the behaviour of autocorrelations across all $K=630$ transcripts we have computed the median of absolute autocorrelations for all $\theta_k$; $k = 1,\ldots,K$, as shown at the bottom of Figure \ref{fig:acf}. We conclude that the mixing of the collapsed sampler is notably better.

\begin{figure}[p]
\centering
\begin{tabular}{c}
\includegraphics[width = 0.99\textwidth]{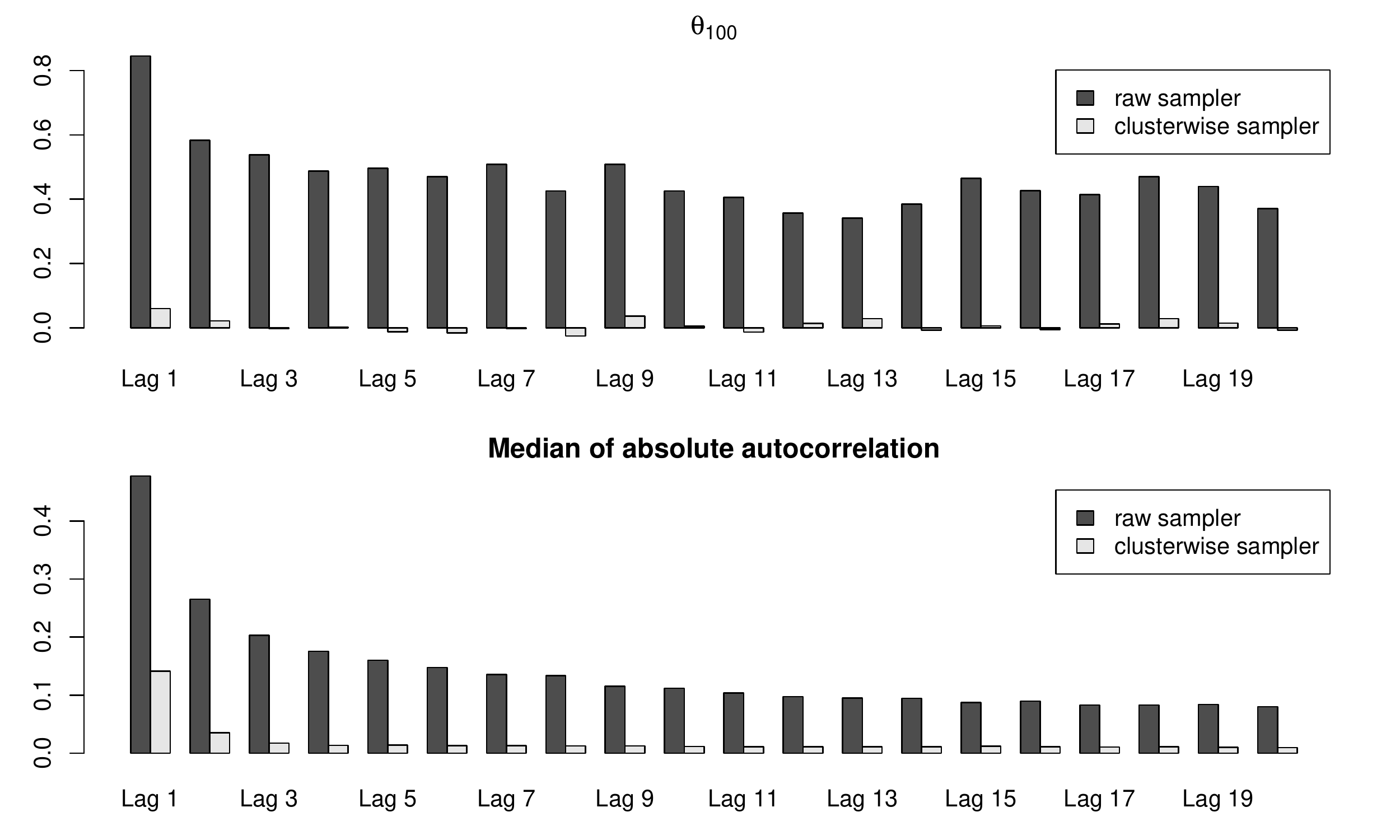}\\
\includegraphics[width = 0.99\textwidth]{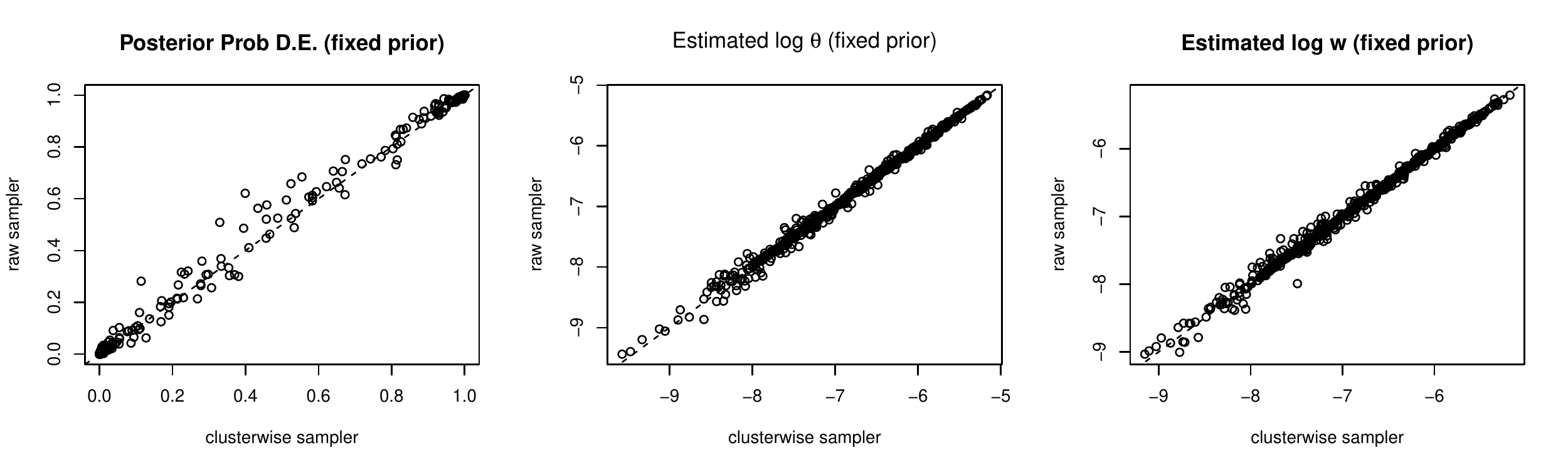}\\
\includegraphics[width = 0.99\textwidth]{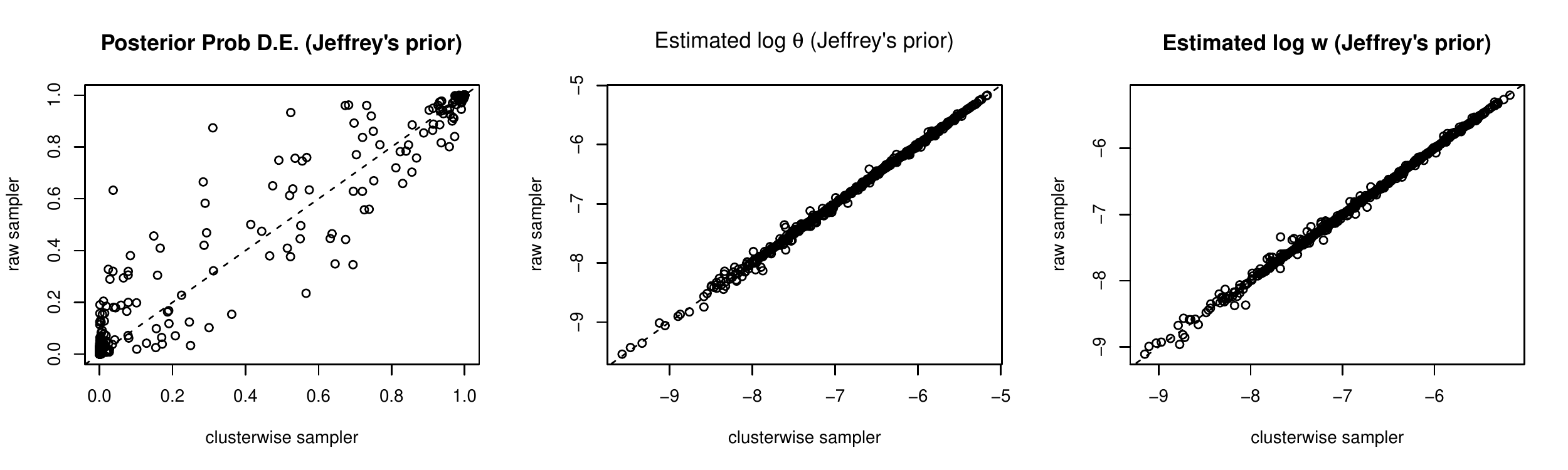}\\
\end{tabular}
      \caption{Comparison of the clusterwise and raw MCMC algorithm. First row: estimated autocorrelation function for the sampled values of $\log\theta_{100}$. Second row: Median of absolute autocorrelations for $\log\theta_k$; $k = 1,\ldots,630$. Third row: Comparison of estimated posterior means.}\label{fig:clusterVSraw}
\end{figure}

Finally we perform a comparison between the raw sampler (that is, taking into account the whole set of reads and transcripts) and the clusterwise one. For this reason we have run the raw collapsed MCMC sampler with a fixed prior of DE (equal to 0.5) as well as the Jeffrey's prior. As shown in Figure \ref{fig:clusterVSraw} (first two rows), the raw MCMC sampler exhibits very large autocorrelations compared to the clusterwise sampler (the autocorrelation function is nearly identical for both prior choices). The resulting estimates of DE and posterior means of transcript expression are shown in the third and fourth row of \ref{fig:clusterVSraw}. Note that the estimates of posterior probability of DE exhibit larger variability under the Jeffrey's prior. In both cases, the transcript expression estimates exhibit strong agreement. The number of iterations of the raw sampler was set to 2000000, following a burn-in period of 200000 iterations. Such a large number of iterations in general will not be sufficient in cases that the number of transcripts grows to typical values of RNA-seq datasets, hence running the raw MCMC sampler becomes prohibitive in general cases.

\section{Simulation study details}

In the sequel, $\mathcal P$ and $\mathcal{NB(\mu,\phi)}$ denote the Poisson and Negative Binomial distributions respectively, where for the latter the parameterization with mean equal to $\mu$ and variance equal to $\mu + \mu^2/\phi$ is used, $\mu\geqslant 0$, $\phi > 0$. Finally, let $\mbox{RPK}^{(A)}_{jk}$ and $\mbox{RPK}^{(B)}_{jk}$ denote the rpk values for transcript $k$ at replicate $j$ of condition A and B, respectively.

\begin{figure}[t]
\centering
\begin{tabular}{c}
\includegraphics[width = 0.99\textwidth]{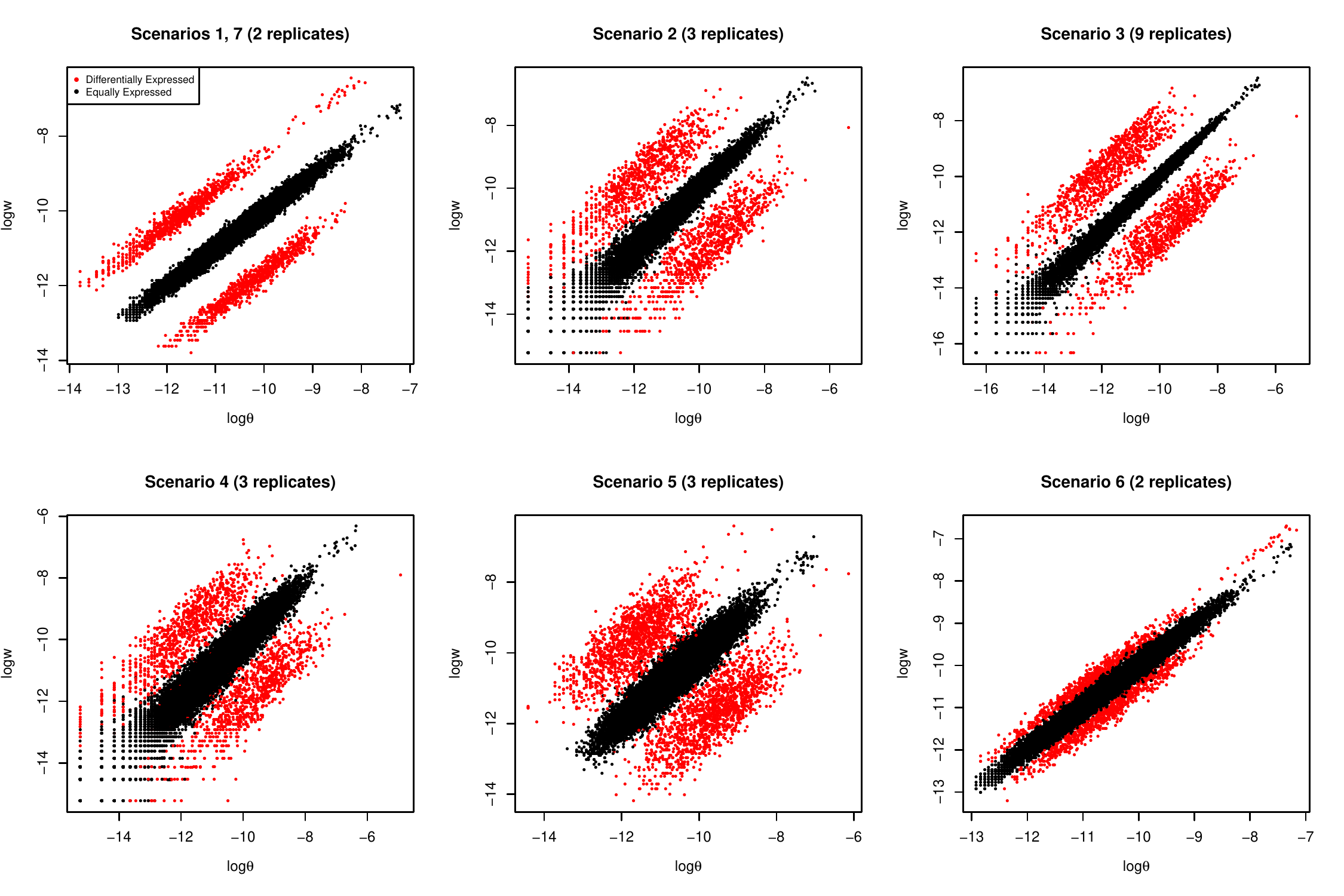}
\end{tabular}
      \caption{Logarithm of true relative expression levels for seven simulation scenarios, averaged across the corresponding number of replicates.}\label{fig:realValsPerScenario}
\end{figure}

\paragraph{Scenario 1 (2 Poisson replicates per condition)} Reads are simulated according to the following generative process. 
\begin{eqnarray*}
\mu_k &=& 65,\quad k=1,\ldots,K, \quad
n_{d} = 2g,\quad g = 1000\\
\{k_1,\ldots,k_{2g}\} &:& \mbox{random sample of indices (without replacement)}\subseteq\{1,\ldots,K\}\\ 
\delta^{(1)}_k & = & 0.65, \quad k=k_1,\ldots,k_g  \\
\delta^{(2)}_k &=& 3.25, \quad k = k_{g + 1},\ldots,k_{2g}\\
(\mu^{(A)}_{k},\mu^{(B)}_{k}) &=& 
\begin{cases}
(1,1)\mu_k,& \quad k\neq k_1,\ldots,k_{n_{d}}\\ \left(
\frac{1}{\delta^{(1)}_k},\frac{1}{\delta^{(2)}_k}\right)\mu_k& \quad k=k_1,\ldots,k_g\\
\left(
\frac{1}{\delta^{(2)}_k},\frac{1}{\delta^{(1)}_k}\right)\mu_k,& \quad k=k_{g+1},\ldots,k_{2g}
\end{cases} \\
\mbox{RPK}^{(A)}_{jk} &\sim& \mathcal P(\mu^{(A)}_{k}),\quad 
\mbox{RPK}^{(B)}_{jk} \sim \mathcal{P}(\mu^{(B)}_{k}),\quad k=1,\ldots,K, j = 1,2.
\end{eqnarray*}
The rpk values determined by this scenario used as input in Spanki and $\approx 2400000$ reads per replicate are simulated ($\approx 9600000$ reads in total). For non-differentially expressed transcripts, rpk values are simulated from a Poisson distribution with mean equal to $65$ for both replicates of each condition. Next, $n_{d} = 2000$ differentially expressed transcripts simulated with mean fold changes equal to $\mu^{(A)}_{k}/\mu^{(B)}_{k} = 1/5$, $k = 1,\ldots,g$ and $\mu^{(A)}_{k}/\mu^{(B)}_{k} = 5$, $k = g+1,\ldots,2g$. More specifically, rpk values generated either from the $\mathcal P(20)$ or  $\mathcal P(100)$ distribution. The averaged relative log-expression based on the true values are shown in Figure \ref{fig:realValsPerScenario} and the points close to the identity line correspond to $26763$ no-DE transcripts. The rest $2000$ points that are far away from the identity line correspond to the DE transcripts. Apparently, this scenario corresponds to a clear cut case of separation between DE and non-DE transcripts at the two conditions.

\paragraph{Scenario 2 (3 Negative Binomial replicates per condition)} Reads are simulated according to the following generative process.
\begin{eqnarray*}
\mu_k &\sim& \mathcal U(0,70),\quad k=1,\ldots,K, \quad n_{d} = 2g,\quad g = 1000\\
\{k_1,\ldots,k_{2g}\} &:& \mbox{random sample of indices (without replacement)}\subseteq\{1,\ldots,K\}\\ 
\delta_k &\sim& \mathcal U(\sqrt{3},\sqrt{5}), \quad k=k_1,\ldots,k_g  \\
(\mu^{(A)}_{k},\mu^{(B)}_{k}) &=& 
\begin{cases}
(1,1)\mu_k,& \quad k\neq k_1,\ldots,k_{n_{d}}\\
(\delta_k,1/\delta_k)\mu_k,& \quad k=k_1,\ldots,k_g\\
(1/\delta_k,\delta_k)\mu_k,& \quad k=k_{g+1},\ldots,k_{2g}
\end{cases}\\
\mbox{RPK}^{(A)}_{jk} &\sim& \mathcal{NB}(\mu^{(A)}_{k},50),\quad 
\mbox{RPK}^{(B)}_{jk}\sim \mathcal{NB}(\mu^{(B)}_{k},50),\quad k=1,\ldots,K, j = 1,2,3.
\end{eqnarray*}
The rpk values determined by this scenario used as input in Spanki and $\approx 1335000$ reads per replicate are simulated ($\approx 8010000$ reads in total). For non-differentially expressed transcripts, rpk values are simulated from the $\mathcal{NB}(65,50)$ for all three replicates of each condition. Next, $n_{d} = 2000$ differentially expressed transcripts simulated with mean fold changes varying in the  $\mu^{(A)}_{k}/\mu^{(B)}_{k} \in(3,5)$, $k = 1,\ldots,g$ and $\mu^{(A)}_{k}/\mu^{(B)}_{k} \in (1/5,1/3)$, $k = g+1,\ldots,2g$. The averaged relative log-expression based on the true values are shown in Figure \ref{fig:realValsPerScenario} and the points close to the identity line correspond to $26763$ no-DE transcripts. The rest $2000$ points correspond to the DE transcripts. Compared to Scenario 1, this case exhibits less separation between DE and non-DE transcripts at the two conditions due to (a) smaller fold changes, (b) increased replicate variability due to the Negative Binomial distribution and (c) larger range of transcript expression values.

\paragraph{Scenario 3 (9 Negative Binomial replicates per condition)} 
The generative process is the same as Scenario 2 but with three times larger number of replicates per condition. In total $\approx 24030000$ reads simulated. The averaged relative log-expression based on the true values are shown in Figure \ref{fig:realValsPerScenario}. Compared to Scenario 2, there should be more signal in the data in order to detect changes in expression due to the increased number of replicates.

\paragraph{Scenario 4 (3 Negative Binomial replicates per condition, enhanced inter-replicate variance)} 
The generative process and the number of simulated reads is the same as Scenario 2 but with larger levels of variability among replicates. In particular we set $\phi = 10$, corresponding to five times larger variability compared to Scenario 2. The averaged relative log-expression based on the true values are shown in Figure \ref{fig:realValsPerScenario}. Compared to Scenario 2, there should be more uncertainty in the data in order to detect changes in expression due to the increased number of replicates. 

\paragraph{Scenario 5 (3 Negative Binomial replicates per condition, enhanced inter-replicate variance, smaller range for the mean)} 
The generative process and the number of simulated reads is the same as Scenario 4 but with more concentrated levels for the mean of true rpk values among replicates. In particular, we set 
\begin{eqnarray*}
\mu_k & = & 60,\quad k=1,\ldots,K, \quad n_{d} = 2g,\quad g = 1000\\
\{k_1,\ldots,k_{2g}\} &:& \mbox{random sample of indices (without replacement)}\subseteq\{1,\ldots,K\}\\ 
\delta_k &\sim& \mathcal U(\sqrt{3},\sqrt{5}), \quad k=k_1,\ldots,k_g  \\
(\mu^{(A)}_{k},\mu^{(B)}_{k}) &=& 
\begin{cases}
(1,1)\mu_k,& \quad k\neq k_1,\ldots,k_{n_{d}}\\
(\delta_k,1/\delta_k)\mu_k,& \quad k=k_1,\ldots,k_g\\
(1/\delta_k,\delta_k)\mu_k,& \quad k=k_{g+1},\ldots,k_{2g}
\end{cases}\\
\mbox{RPK}^{(A)}_{jk} &\sim& \mathcal{NB}(\mu^{(A)}_{k},10),\quad 
\mbox{RPK}^{(B)}_{jk}\sim \mathcal{NB}(\mu^{(B)}_{k},10),\quad k=1,\ldots,K, j = 1,2,3.
\end{eqnarray*}
Note that the difference with Scenario 2 is that now $\mu_k$ has a constant value for all $k = 1,\ldots,K$ and that the selected value of $\phi$ results to five times higher dispersion. The averaged relative log-expression based on the true values are shown in Figure \ref{fig:realValsPerScenario}. It is obvious that now the range of relative expression values is smaller compared to Scenarios 2,3 and 4.

\paragraph{Scenario 6 (2 Poisson replicates per condition, small fold change)} 
This is a revision of Scenario 1 under a smaller fold change between DE and EE transcripts. In this case we set $\delta_k^(1)=65/80$ and $\delta_k^(2)=65/50$, resulting to a fold change of 1.6 for DE transcripts (instead of 5 as used at Scenario 1). As shown in Figure \ref{fig:realValsPerScenario}, the classification of DE and EE transcripts is not obvious.

\paragraph{Scenario 7 (2 Poisson replicates per condition, unequal total number of reads)} 
This is a revised version of Scenario 1 under different sample sizes between the two conditions. Now the first condition contains approximately $46\%$ larger amount of data than the second one. In particular, we simulated $2.81$ and $1.93$ million reads per replicate of the first and second condition, respectively. However, the relative expression levels are the same as in Scenario 1, as shown at the first plot of Figure \ref{fig:realValsPerScenario}.

\begin{figure}[t]
\centering
\begin{tabular}{c}
\includegraphics[width = 0.99\textwidth]{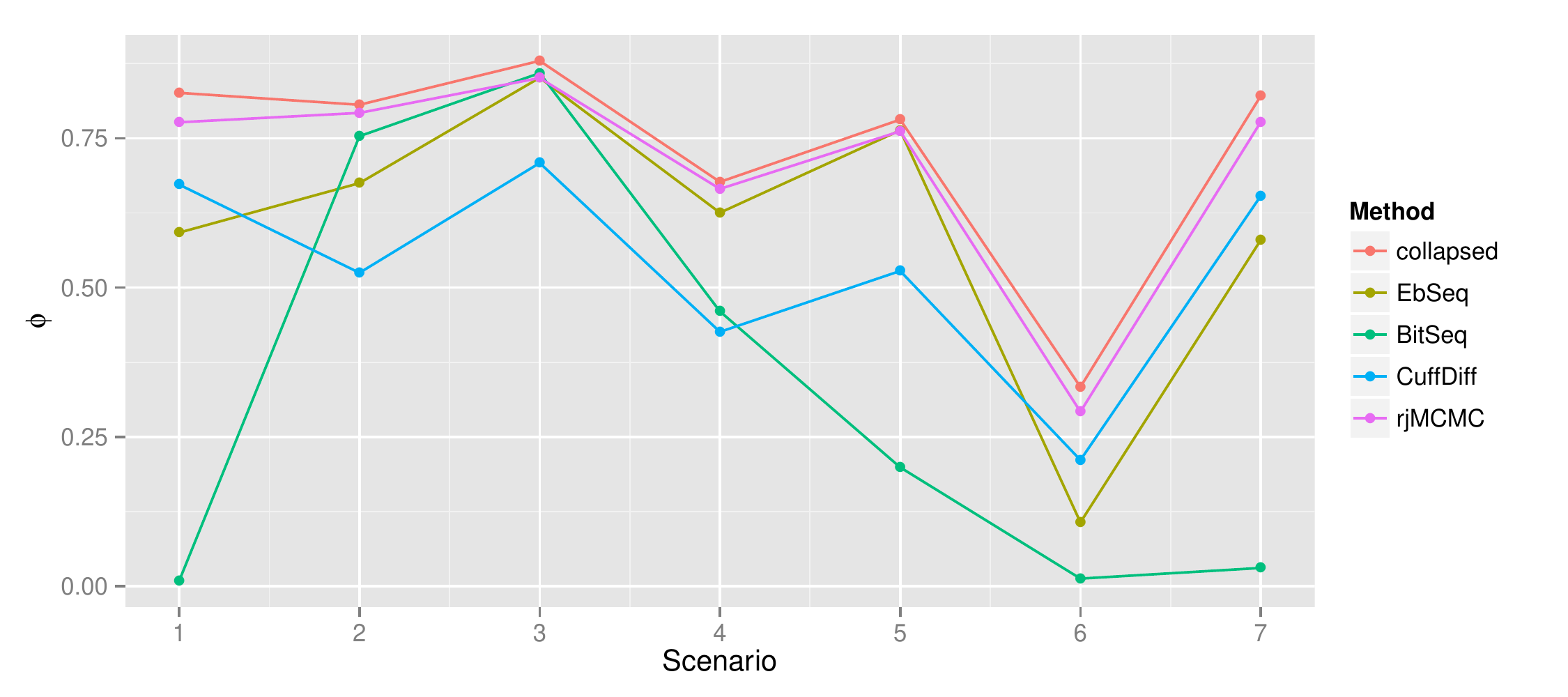}
\end{tabular}
      \caption{$\phi$-coefficient between ground truth of DE and EE transcripts and the inferred classifications per method at the $0.05$ level, for each simulation scenario.}\label{fig:phi}
\end{figure}

Figure \ref{fig:phi} displays the correlation between the true configuration of DE and EE transcripts and the estimated classification per method at the $0.05$ level. Note that our collapsed sampler is ranked as the best method on every scenario. Moreover, our rjMCMC sampler is marginally the second best method. An interesting remark is that methods that control the false discovery rate exhibit a similar pattern across different scenarios, something that it is not the case for the standard BitSeq implementation. However note the improvement of standard BitSeq performance  when the number of replicates is larger than two.

\begin{figure}[p]
\centering
\begin{tabular}{c}
\includegraphics[width = 0.99\textwidth]{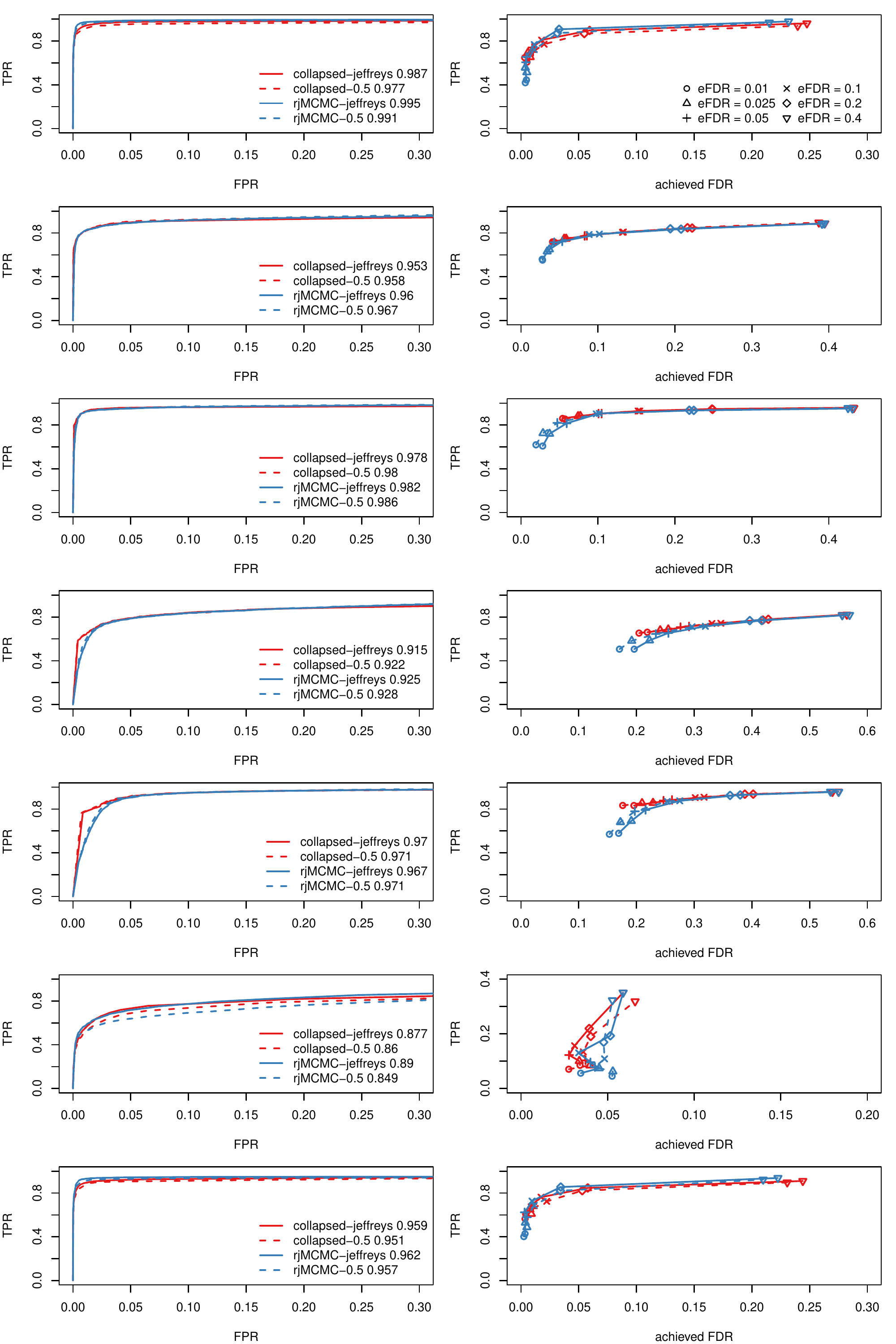}
\end{tabular}
      \caption{ROC curves (left) and power - to achieved plots (right) per simulation scenario, using different prior distribution on the probability of differential expression.}\label{fig:scenariosPrior}
\end{figure}

Figure \ref{fig:scenariosPrior} displays the ROC curves (left) and the true positive rate versus the achieved false discovery rate for the rjMCMC and collapsed samplers. The continuous lines correspond to the Jeffreys prior while the dashed lines correspond to a fixed probability of DE (equal to $0.5$). The results are essentially the same for most scenarios. A notable difference is observed at Scenario 6 where we conclude the superior performance of our method under the Jeffreys prior.

\section{Implementation of the algorithm}

At first, the short reads (.fastq files) for each condition (A and B) are mapped to the reference transcriptome using Bowtie. The alignments (.sam files) are pre-processed using the {\tt parseAlignment} command of BitSeq in order to compute the alignment probabilities for each read (.prob files). These files are used as the input of the proposed algorithm in order to (a) compute the clusters of reads and transcripts and (b) run the MCMC algorithm for each cluster. The output is a file containing the estimates of relative transcript expression for each condition and the posterior probability of differential expression.

\begin{figure}[t]
  \centering
    \includegraphics[width = 0.99\textwidth]{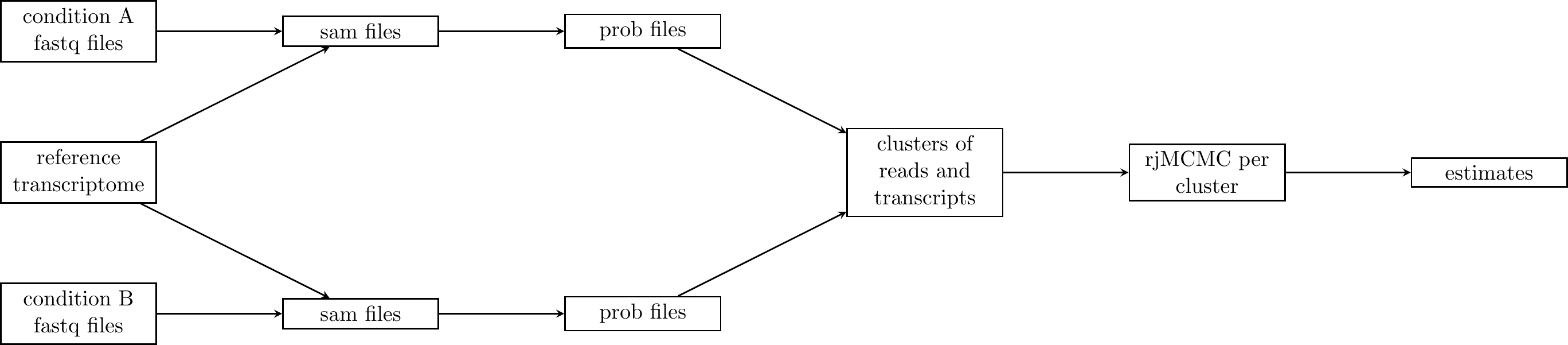}
      \caption{General work-flow of the algorithm.}\label{fig:pipe}
\end{figure}

Assume that there are two replicates per sample consisting of paired-end reads: {\tt A1\_1.fastq}, {\tt A1\_2.fastq}, {\tt A2\_1.fastq} and  {\tt A2\_2.fastq} for sample A and {\tt B1\_1.fastq}, {\tt B1\_2.fastq}, {\tt B2\_1.fastq} and {\tt B2\_2.fastq} for sample B. Denote by {\tt reference.fa} the fasta file with the transcriptome annotation. Let {\tt outputRJ} and {\tt outputCollapsed} denote the output directory of the rjMCMC and collapsed samplers, respectively. The following code describes a typical implementation of the whole pipeline, assuming that all input files are in the working directory (replace by the full paths otherwise).

\begin{verbatim}
# build bowtie2 indices and align reads
bowtie2-build -f reference.fa reference
bowtie2 -q -k 100 --no-mixed --no-discordant -x reference 
                       -1 A1_1.fastq -2 A1_2.fastq -S A1.sam 
bowtie2 -q -k 100 --no-mixed --no-discordant -x reference       
                       -1 A2_1.fastq -2 A2_2.fastq -S A2.sam 
bowtie2 -q -k 100 --no-mixed --no-discordant -x reference
                       -1 B1_1.fastq -2 B1_2.fastq -S B1.sam 
bowtie2 -q -k 100 --no-mixed --no-discordant -x reference
                       -1 B2_1.fastq -2 B2_2.fastq -S B2.sam 

# compute alignment probabilities with BitSeq
parseAlignment A1.sam -o A1.prob --trSeqFile reference.fa
                                                   --uniform
parseAlignment A2.sam -o A2.prob --trSeqFile reference.fa
                                                   --uniform
parseAlignment B1.sam -o B1.prob --trSeqFile reference.fa
                                                   --uniform
parseAlignment B2.sam -o B2.prob --trSeqFile reference.fa
                                                   --uniform
# compute clusters and apply the rjMCMC sampler
rjBitSeq outputRJ A1.prob A2.prob C B1.prob B2.prob
# compute clusters and apply the collapsed sampler
cjBitSeq outputCollapsed A1.prob A2.prob C B1.prob B2.prob
\end{verbatim}
The output of the rjMCMC and collapsed samplers is written to {\tt outputRJ/estimates.txt} and {\tt outputCollapsed/estimates.txt}, respectively. The overall  work-flow is summarized in Figure \ref{fig:pipe}.

\section{Additional tables and figures}

Table 1 illustrates the correlation between the resulting classifications for the two real datasets in Section \ref{sec:real2}. Table \ref{tab:runtime} reports the running time needed for our experiments using 8 threads.  The run-times reported for our method contains both cluster discovery and MCMC sampling. It should be mentioned that a significant portion of the reported run-times is allocated to the clustering part which is not optimized for speed ($20\%-35\%$ and $40\%-45\%$ for the rjMCMC and collapsed samplers, respectively). More details regarding the computing time and memory usage demanded by our method are shown in Figure \ref{fig:comptime}.

\begin{table}
\caption{\label{tab:phi}$\phi$-coefficient between the resulting classifications at the $0.05$ level for  HiSeq (lower diagonal) and MiSeq (upper) data.}
\centering
\fbox{
\begin{tabular}{c|cccc}
\em Method&\em cuffdiff &\em BitSeq &\em EBSeq &\em cjBitSeq\\
\hline
cuffdiff &1 & $0.43$ & $0.32$& $0.32$\\
BitSeq & $0.64$& 1& $0.58$& $0.59$\\
EBSeq & $0.52$& $0.61$&1 & $0.70$ \\
cjBitSeq& $0.56$& $0.63$& $0.75$& 1\\
\end{tabular}}
\end{table}

\begin{table}
\caption{\label{tab:runtime}Approximate total number of reads (in millions) and run-time in hours for each example.}
\centering
\fbox{
\begin{tabular}{c|c|ccccc}
\em dataset &\em reads&\em cufflinks &\em BitSeq &\em rsem/EBSeq &\em rjMCMC &\em collapsed\\
\hline
scenario 1 & $9.4$ & $0.9$ & $4.4$ & $2.2$& $4.8$ & $2.7$  \\
scenario 2 & $8.0$&$0.8$ & $3.3$ & $1.8$& $4.5$ & $3.4$ \\
scenario 3 & $24.0$ &$2.1$ & $9.1$ & $6.2$& $9.8$ & $8.4$\\
scenario 4 & $8.0$ &$0.9$ & $3.4$ & $2.5$ &  $4.4$ & $3.2$\\
scenario 5 & $14.0$&$1.1$ & $9.7$ & $3.7$&  $6.6$ & $5.1$\\
scenario 6 & $9.4$&$0.8$ & $4.5$ & $1.9$&  $4.3$ & $2.6$\\
scenario 7 & $9.5$&$0.7$ & $5.8$ & $1.7$&  $4.1$ & $2.5$\\
MiSeq &   $21.3$  &$1.0$& $4.8$& $2.4$ & $6.8$ & $3.9$\\
HiSeq & $97.0$&$2.4$ & $22.3$ & $11.1$ & $26.1$ & $19.8$\\
\end{tabular}}
\end{table}

\begin{figure}
 \centering
    \includegraphics[width = 0.8\textwidth]{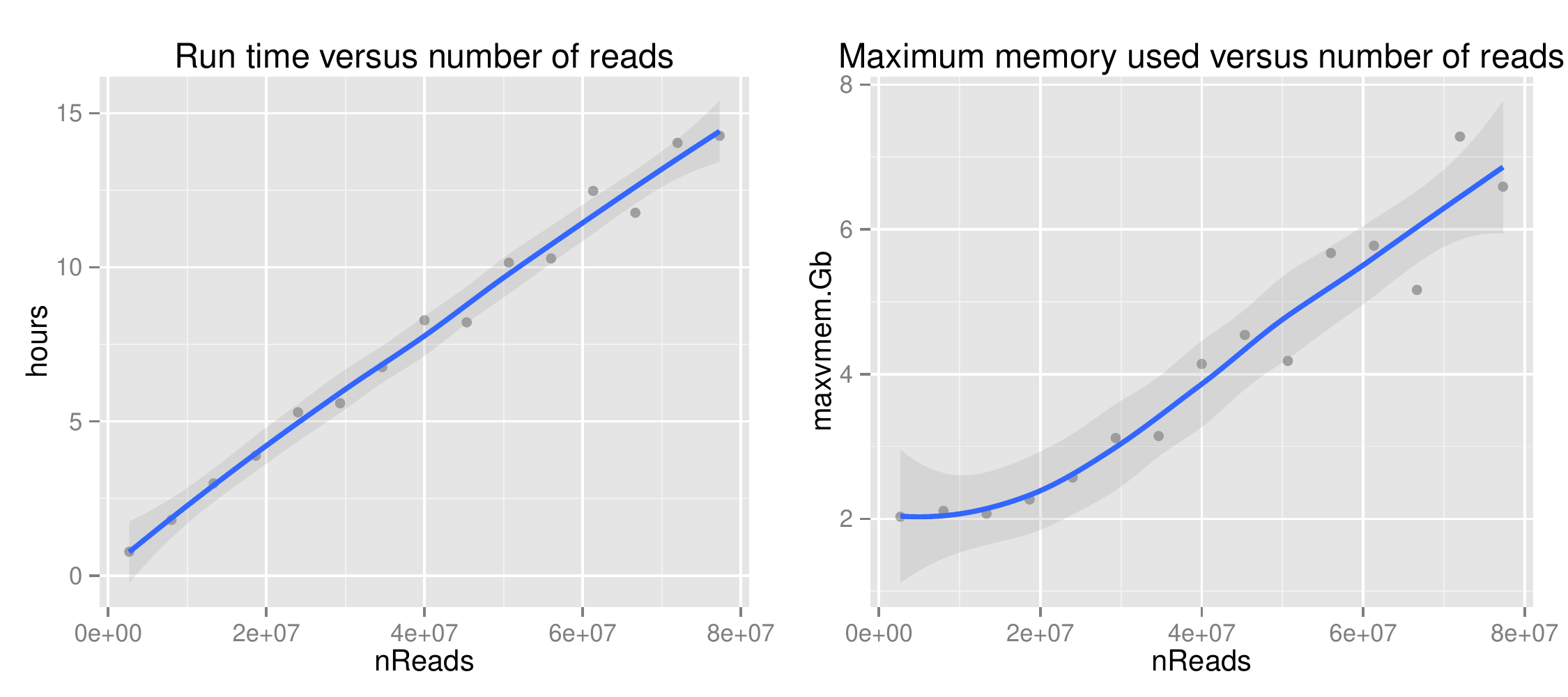}
      \caption{Run time of the algorithm (left) and maximum virtual memory used (right) versus total number of (mapped) reads corresponding to the collapsed algorithm using 12 cores.}\label{fig:comptime}
\end{figure}

\end{document}